\newtheorem{theorem}{Theorem}
\newtheorem{lemma}{Lemma}
\newtheorem{corollary}{Corollary}
\newtheorem{remark}{Remark}
\newcommand{\R}{\mathbb{R}}
\newcommand{\C}{\mathbb{C}}
\newcommand{\Z}{\mathbb{Z}}
\newcommand{\N}{\mathbb{N}}
\begin{document}

\title{\bf Nonlinear stationary states in PT-symmetric lattices}

\author{Panayotis G. Kevrekidis$^{1}$, Dmitry E. Pelinovsky$^{2,3}$, and Dmitry Y.Tyugin$^{3}$ \\
$^{1}$ {\small \it Department of of Mathematics and Statistics, University of
Massachusetts, Amherst, MA 01003-9305, USA} \\
$^{2}$ {\small \it Department of Mathematics and Statistics, McMaster
University, Hamilton, Ontario, Canada, L8S 4K1 } \\
$^{3}$ {\small \it Department of Applied Mathematics, Nizhny Novgorod State
Technical University, Nizhny Novgorod, Russia }}

\date{\today}
\maketitle

\begin{abstract}
In the present work we examine both the linear and nonlinear properties
of two related PT-symmetric systems of the discrete nonlinear Schr{\"o}dinger (dNLS)
type.

First, we examine the parameter range for which the finite PT-dNLS chains
have real eigenvalues and PT-symmetric linear eigenstates.
We develop a systematic way of analyzing the nonlinear stationary states
with the implicit function theorem at an analogue of
the anti-continuum limit for the dNLS equation.

Secondly, we consider the case when a finite PT-dNLS chain is
embedded as a defect in the infinite dNLS lattice. We show that the stability
intervals of the infinite PT-dNLS lattice are wider than in the case of
a finite PT-dNLS chain. We also prove existence of localized
stationary states (discrete solitons) in the analogue of the anti-continuum limit for the dNLS equation.

Numerical computations illustrate the existence of nonlinear stationary states,
as well as the stability and saddle-center bifurcations of discrete solitons.
\end{abstract}


\section{Introduction}

The subject of PT-symmetry and its physical implications has
gained a tremendous momentum over the past few years. This
field was initiated by the original proposal of C. Bender~\cite{R1}
who suggested that the linear Schr\"{o}dinger operator with a
complex-valued potential, which is symmetric with respect to combined
parity (P) and time-reversal (T) transformations, is guaranteed
to have real spectrum at a certain parametric regime. Thus, this was
proposed as a viable alternative for the standard Hermitian quantum
mechanics. Yet, it was the pioneering work in the group of D.
Christodoulides both at the theoretical~\cite{ziad,Ramezani} and
experimental~\cite{dncnat} levels that
showcased nonlinear optics as a fertile ground for the physical
implementation of the PT-symmetric potentials. These efforts have
motivated a wealth of recent works, especially on the physical
side, addressing various aspects of continuous and discrete PT-symmetric
systems. These include among others the study of
the fragility of PT-symmetry in linear problems~\cite{Bendix,Pelin2},
nonlinear stationary states of few site configurations (also referred
to as oligomers, or plaquettes in two-dimensional lattices)~\cite{Li,Guenter,suchkov,Sukhorukov,ZK},
as well as solitary waves and breathers in infinite systems both
continuous~\cite{Abdullaev,baras2,baras1,malom1,malom2,Nixon} and discrete \cite{Dmitriev,Pelin1,Sukh}.

While the number of studies of such PT-symmetric systems both in
optics \cite{Kirillov} and in atomic physics~\cite{graefe,tsampikos} is rapidly growing,
the volume of related mathematical works is rather limited
and mostly constrained to linear problems~\cite{Azizov,trunk,mostafaz,Weigert}.
It is the purpose of this paper to provide a number of rigorous results on
nonlinear stationary states in PT-symmetric discrete systems. Our emphasis
will be two-fold.

First, we will consider finite PT-symmetric chains of the discrete nonlinear
Schr{\"o}dinger (dNLS) type~\cite{dnls}. We examine their phase transitions (from a PT-symmetric oscillatory
phase to the exponentially growing phase) when the gain and loss parameter is increased.
The nonlinear stationary states bifurcate from the linear PT-symmetric states
by means of a standard local bifurcation. On the other hand, we will also consider large-amplitude
stationary states in an analogue of the well-known anti-continuum limit for the dNLS equation~\cite{pkf,PelSak}, through a suitable
rescaling of the PT-dNLS equation. This rescaling enables us to use the implicit function theorem to continue
stationary states from the limit, where they are effectively uncoupled and the gain and loss
parameter is negligibly small.

Second, we consider the case where the finite PT-symmetric
chains are embedded in the infinite nonlinear lattice of the dNLS type. Again,
we will examine phase transitions of such systems and will  prove
that the infinite PT-dNLS lattice has a wider stability interval
compared to the isolated PT-dNLS chains.
We also develop a proof of the existence of localized stationary states
(discrete solitons) in the PT-dNLS equation.
Numerical computations illustrate the theoretical results on
existence of nonlinear stationary states, as well as the stability and saddle-node
bifurcations of discrete solitons.

Note that our technique allows us to prove existence of discrete solitons in the infinite PT-dNLS
equation, but such discrete solitons are unstable
because the phase transition in this infinite lattice occurs already at the zero value of
the gain and loss parameter \cite{Pelin2}. Earlier, existence of such discrete
solitons was observed in numerical continuations from the diatomic PT-dNLS lattice \cite{Pelin1}.

The article is structured as follows. Section 2 covers fundamentals of the PT-dNLS
equaiton. Section 3 is devoted to finite dNLS chains with four sections
on eigenvalues of the linear PT-dNLS equation, local bifurcations of stationary states, bifurcations of
large-amplitude stationary states, and numerical results. Section 4 is
concerned with the PT-symmetric defects in infinite dNLS lattices and contains three sections
on eigenvalues of the linear PT-dNLS equation, bifurcations of discrete solitons from the anti-continuum
limit, and numerical results. Section 5 concludes the article with a summary and a
discussion of future directions

{\bf Acknowledgments:} The authors thank James Dowdall for help at an early stage of
the project during his NSERC USRA work and Dimitri Frantzeskakis for 
discussions on the subject of PT-symmetry. The work of P.K. is partially supported by the US
National Science Foundation under grants NSF-DMS-0806762, and NSF-CMMI-1000337, from the Alexander
von Humboldt Foundation and from the US AFOSR under grant FA9550-12-1-0332.
The work of D.P. is supported in part by NSERC and by the ministry of education
and science of Russian Federation (Project 14.B37.21.0868).

\section{Formalism of the PT-dNLS equation}

We consider the discrete nonlinear Schr\"{o}dinger (dNLS) equation with non-conservative
terms that introduce gains and losses of nonlinear oscillators. When gains and losses are
combined in a compensated network, the model referred to as the PT-dNLS equation
takes the form
\begin{equation}
\label{dnls}
i \frac{d u_n}{d t} = u_{n+1} - 2 u_n + u_{n-1} + i \gamma (-1)^n u_n + |u_n|^2 u_n,
\end{equation}
where parameter $\gamma$ stands for the gain and loss coefficient. The finite 
PT-dNLS chain is defined for $n \in \{1,2,...,2N\}$ for a positive integer $N$
subject to Dirichlet boundary conditions $u_0 = u_{2N+1} = 0$,
whereas the infinite PT-dNLS lattice is defined for all integers $n$ on $\mathbb{Z}$
subject to the decay of $u_n$ to zero as $|n| \to \infty$.
The amplitudes $u_n$ for all admissible values of $n$ are
complex-valued functions of time $t$.

For notational consistency, we denote the sequence $\{ u_n \}_{n \in \mathbb{Z}}$
of complex-valued amplitudes $u_n$ by the vector notation ${\bf u}$.
These vectors are considered in Hilbert
space $l^2(\Z)$ equipped with the inner product $\langle {\bf u}, {\bf v} \rangle := \sum_{n \in \mathbb{Z}}
\bar{u}_n v_n$ and the induced norm $\| {\bf u} \| := \left( \sum_{n \in \mathbb{Z}} |u_n|^2 \right)^{1/2}$.

Let us formulate the evolution problem (\ref{dnls}) in the complex Hamiltonian form
\begin{equation}
\label{dyn-sys}
i \frac{d {\bf u}}{d t} = \nabla_{\bar{\bf u}} H({\bf u},\bar{\bf u}),
\end{equation}
where ${\bf u}$ is a collection of amplitudes $u_n$ for all admissible values of $n$ (denoted by $S$)
in some function space (denoted by $X$), the bar denotes complex conjugation,
and the complex-valued Hamiltonian functional $H : X \to \C$ takes the form
\begin{equation}
\label{complex-Hamiltonian}
H({\bf u},\bar{\bf u}) = - \sum_{n \in S} |u_{n+1}-u_n|^2 +
i \gamma \sum_{n \in S} (-1)^n |u_n|^2 + \frac{1}{2} \sum_{n \in S} |u_n|^4.
\end{equation}

The dynamical system (\ref{dyn-sys})
is said to be $PT$-symmetric if there is a linear real-valued
$t$-independent operator $P : X \to X$ such that
\begin{equation}
\label{nonlinear-P}
P^2 = I \quad \mbox{and} \quad \bar{\bf N}({\bf u},\bar{\bf u}) = P {\bf N}(P {\bf u},P \bar{\bf u}),
\end{equation}
where ${\bf N}({\bf u},\bar{\bf u}) := \nabla_{\bar{\bf u}} H({\bf u},\bar{\bf u})$
and $I : X \to X$ is an identity operator.

If ${\bf u}(t)$ is a solution of the PT-symmetric dynamical system (\ref{dyn-sys}) for $t$
in a symmetric interval $J := (-t_0,t_0) \subset \R$ with some positive $t_0$,
then ${\bf v}(t) := P \bar{\bf u}(-t)$ is another solution of the same system for $t \in J$.
This statement can be checked by direct substitution. This symmetry suggests
the following definition of the operator $T : C(J,X) \to C(J,X)$:
\begin{equation}
T {\bf u}(t) := \bar{\bf u}(-t), \quad t \in J.
\end{equation}
Note that the operator $T$ is sesquilinear in ${\bf u}$ and nonlocal in $t$.
The letters $P$ and $T$ stand for {\em parity} and {\em time reversal} transformations,
which correspond to fundamental symmetries in physics.

When the vector field ${\bf N}$ is linear and given by
${\bf N}({\bf u},\bar{\bf u}) := \mathcal{H} {\bf u}$ associated
with a linear complex-valued bounded operator $\mathcal{H} : X \to X$,
then the $PT$-symmetry is expressed in the standard form
\begin{equation}
\label{linear-P}
\bar{\mathcal{H}} = P \mathcal{H} P.
\end{equation}

Our first result is to show that the dNLS equation with compensated gain and loss terms
(\ref{dnls}) is a PT-symmetric dynamical system both for finite and infinite chains.

\begin{lemma}
Define $S := \{ 1,2,...,2N \}$ for a positive integer $N$ and $X := \C^{2N}$.
Then, the dynamical system (\ref{dyn-sys}) with complex-valued Hamiltonian (\ref{complex-Hamiltonian})
is PT-symmetric with respect to the operator $P : \C^{2N} \to \C^{2N}$ given by
$$
[P {\bf u}]_n = u_{2N+1-n}, \quad n \in S.
$$
\label{lemma-PT-symmetry}
\end{lemma}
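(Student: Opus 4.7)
The strategy is to verify the two requirements in (\ref{nonlinear-P}) directly from the explicit formula for $P$ and the expression for $\mathbf{N}$ obtained from $H$. The involution property $P^{2}=I$ is immediate: applying $P$ twice gives $[P^{2}\mathbf{u}]_n=[P\mathbf{u}]_{2N+1-n}=u_{2N+1-(2N+1-n)}=u_n$. The Dirichlet convention $u_0=u_{2N+1}=0$ is consistent with $P$, since $P$ interchanges those two phantom boundary sites.

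Next I would compute the vector field. Wirtinger differentiation of (\ref{complex-Hamiltonian}) produces
\begin{equation*}
[\mathbf{N}(\mathbf{u},\bar{\mathbf{u}})]_n = u_{n+1}-2u_n+u_{n-1}+i\gamma(-1)^n u_n+|u_n|^{2}u_n,
\end{equation*}
which matches the right-hand side of (\ref{dnls}). Interpreting $\bar{\mathbf{N}}(\mathbf{u},\bar{\mathbf{u}})$ as the conjugate of this expression with $\mathbf{u}$ and $\bar{\mathbf{u}}$ treated as independent formal variables, i.e.\ conjugating only the explicit constant coefficients, yields the same expression but with $i\gamma$ replaced by $-i\gamma$.

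Finally, I would evaluate $P\mathbf{N}(P\mathbf{u},P\bar{\mathbf{u}})$ by substitution. Writing $\mathbf{v}=P\mathbf{u}$ so that $v_n=u_{2N+1-n}$, the $n$-th component of $\mathbf{N}(\mathbf{v},\bar{\mathbf{v}})$ is obtained from the formula above by replacing every $u_k$ with $u_{2N+1-k}$. The outer application of $P$ amounts to the relabeling $n\mapsto 2N+1-n$, which restores the discrete Laplacian to $u_{n-1}-2u_n+u_{n+1}$ and the cubic term to $|u_n|^{2}u_n$ unchanged, while converting the gain/loss term to $i\gamma(-1)^{2N+1-n}u_n$. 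Because $2N+1$ is odd, $(-1)^{2N+1-n}=-(-1)^n$, producing precisely the sign flip needed to match $\bar{\mathbf{N}}(\mathbf{u},\bar{\mathbf{u}})$.

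The argument is essentially a bookkeeping verification with no serious obstacle. The one point at which anything nontrivial happens is the parity identity $(-1)^{2N+1-n}=-(-1)^n$: this is the arithmetic reason that the alternating gain/loss pattern is compatible with the reflection $P$ about the midpoint between sites $N$ and $N+1$, and it is what turns the conjugation-induced sign change on the left-hand side of (\ref{nonlinear-P}) into the reflection-induced sign change on the right-hand side.
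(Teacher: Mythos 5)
Your proposal is correct and follows essentially the same route as the paper: a direct componentwise verification that $P^{2}=I$ and that $[P\mathbf{N}(P\mathbf{u},P\bar{\mathbf{u}})]_n$ reproduces $[\bar{\mathbf{N}}(\mathbf{u},\bar{\mathbf{u}})]_n$, with the key arithmetic point being $(-1)^{2N+1-n}=-(-1)^n$ exactly as in the paper's computation. Your added remarks on the Wirtinger derivation of $\mathbf{N}$ from $H$ and on the meaning of $\bar{\mathbf{N}}$ (conjugating only the explicit coefficients) are consistent with the paper's usage and do not change the argument.
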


\begin{proof}
We verify the statement with the explicit computation. Given the definition of $P$,
we obtain
$$
[P^2 {\bf u}]_n = [P {\bf u}]_{2N+1-n} = u_n
$$
and
$$
[{\bf N}(P {\bf u},P \bar{\bf u})]_n = u_{2N + 2 - n} - 2 u_{2N+1-n} + u_{2N-n}
+ i \gamma (-1)^n u_{2N+1-n} + |u_{2N+1-n}|^2 u_{2N+1-n}.
$$
Applying $P$ again, we obtain
\begin{eqnarray*}
[P {\bf N}(P {\bf u},P \bar{\bf u})]_n & = & u_{n+1} - 2 u_{n} + u_{n-1}
+ i \gamma (-1)^{2N+1-n} u_{n} + |u_{n}|^2 u_{n} \\
& = & u_{n+1} - 2 u_{n} + u_{n-1}
- i \gamma (-1)^{n} u_{n} + |u_{n}|^2 u_{n} \\
& = & [\bar{\bf N}({\bf u},\bar{\bf u}) ]_n,
\end{eqnarray*}
which recovers the second identity (\ref{nonlinear-P}).
\end{proof}

\begin{remark}
The symmetry of Lemma \ref{lemma-PT-symmetry} can be proven
by simple reflection arguments. If the chain of oscillators has
the damped site at the left end and the gained site at the right end, then
since $P$ reflects all oscillators about the middle point, the reflected chain has now
the gained site at the left end and the damped site at the right end, that is,
the reflected chain is equivalent to the complex conjugate chain.
\end{remark}

\begin{remark}
Although $P$ in Lemma \ref{lemma-PT-symmetry} represents the fundamental physical symmetry,
other choices of operator $P$ are possible for the linear terms of the dynamical 
system (\ref{dyn-sys})--(\ref{complex-Hamiltonian}).
For instance, if $N = 2$, there exists another operator $P_a$
such that $P_a^2 = {\rm id}$ and $\bar{\mathcal{H}} = P_a \mathcal{H} P_a$, where
$$
\mathcal{H} = \left[ \begin{array}{cccc} -2 - i \gamma & 1 & 0 & 0 \\
1 & -2 + i \gamma & 1 & 0 \\
0 & 1 & -2 - i \gamma & 1  \\
0 & 0 & 1 & -2 + i \gamma \end{array} \right], \quad
P_a = \left[ \begin{array}{cccc} 0 & -2a & 0 & a \\
-2a & 0 & -a & 0 \\
0 & -a & 0 & -2a  \\
a & 0 & -2a & 0 \end{array} \right]
$$
with either $a = \frac{1}{\sqrt{5}}$ or $a = -\frac{1}{\sqrt{5}}$
(this statement can be easily checked by means of symbolic software).
Nevertheless, the operator $P_a$ does not represent the PT-symmetry 
of the full nonlinear system (\ref{dyn-sys})--(\ref{complex-Hamiltonian})
because the nonlinear term ${\bf N}_{\rm non}({\bf u},\bar{\bf u}) :=
{\bf N}({\bf u},\bar{\bf u}) - \mathcal{H} {\bf u}$
does not satisfy the second identity (\ref{nonlinear-P}).
For instance, we have
\begin{eqnarray*}
[P {\bf N}_{\rm non}(P {\bf u},P \bar{\bf u})]_1 & = & a^4 \left[ 2 |2u_1+u_3|^2(2u_1+u_3) +
|u_1-2u_3|^2 (u_1 - 2u_3) \right] \\
& = & \frac{1}{25} \left[ 17 |u_1|^2 u_1 + 12 |u_1|^2 u_3 + 8 u_3^2 \bar{u}_1 + 6 u_1^2 \bar{u}_3 + 16 |u_3|^2 u_1 - 6 |u_3|^2 u_3 \right]\\
& \neq & |u_1|^2 u_1 = [\bar{\bf N}_{\rm non}({\bf u},\bar{\bf u})]_1,
\end{eqnarray*}
hence the second identity (\ref{nonlinear-P}) is not satisfied.
\end{remark}

\begin{corollary}
Let $S = \Z$ and $X = l^2(\Z,\C)$. For any fixed $n_0 \in \Z$, the dynamical system (\ref{dyn-sys})
with complex-valued Hamiltonian (\ref{complex-Hamiltonian})
is PT-symmetric with respect to the operator $P : l^2(\Z) \to l^2(\Z)$ given by
$$
[P {\bf u}]_n = u_{n_0-n}, \quad n \in \Z.
$$
\label{corollary-PT-symmetry}
\end{corollary}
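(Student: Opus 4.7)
The plan is to imitate the direct-verification argument of Lemma \ref{lemma-PT-symmetry} in the bi-infinite setting; the absence of boundary conditions actually simplifies the calculation. First I would check that the proposed $P$ is a well-defined bounded involution on $X = l^2(\Z)$. Since $P$ acts as a permutation of the index set, it is an isometry, and the explicit evaluation $[P^2 {\bf u}]_n = [P{\bf u}]_{n_0-n} = u_{n_0-(n_0-n)} = u_n$ gives $P^2 = I$ at once.

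The main step is to verify the second identity in (\ref{nonlinear-P}). I would compute $[{\bf N}(P{\bf u},P\bar{\bf u})]_n$ by substituting $v_k := u_{n_0-k}$ into the explicit dNLS vector field $[{\bf N}({\bf v},\bar{\bf v})]_n = v_{n+1} - 2 v_n + v_{n-1} + i\gamma(-1)^n v_n + |v_n|^2 v_n$, and then apply $P$ a second time, which amounts to the substitution $n \mapsto n_0 - n$ in the resulting expression. The hopping terms then reassemble into $u_{n-1} - 2 u_n + u_{n+1}$ and the cubic term into $|u_n|^2 u_n$, and both already match $[\bar{\bf N}({\bf u},\bar{\bf u})]_n$; the gain--loss term becomes $i\gamma(-1)^{n_0-n} u_n$, which must agree with $-i\gamma(-1)^n u_n$.

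The only genuine point of care is therefore the parity of $n_0$: the equality $(-1)^{n_0-n} = -(-1)^n$ holds precisely when $(-1)^{n_0} = -1$, i.e., when $n_0$ is odd, so that the reflection about $n_0/2$ exchanges the gain and loss sublattices---mirroring the role of the half-integer reflection center $N+\tfrac{1}{2}$ in Lemma \ref{lemma-PT-symmetry}. For even $n_0$ the same $P$ yields the PT-symmetry of the system with $\gamma$ replaced by $-\gamma$, which is an equivalent problem up to relabeling. I do not anticipate any genuine obstacle beyond this book-keeping: every manipulation is pointwise in $n$, so no summability or convergence issues arise beyond noting that $P$ preserves $l^2(\Z)$, and the argument is otherwise a verbatim transcription of Lemma \ref{lemma-PT-symmetry}.
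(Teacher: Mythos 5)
Your proof is correct and follows the same route as the paper, which simply invokes the direct verification of Lemma \ref{lemma-PT-symmetry} with $S=\Z$. The one substantive point is your parity observation, and you are right to raise it: the identity $(-1)^{n_0-n}=-(-1)^n$ requires $n_0$ odd, so the corollary as stated (``for any fixed $n_0\in\Z$'') is literally correct only for odd $n_0$; for even $n_0$ the same computation yields $P\,{\bf N}(P{\bf u},P\bar{\bf u})={\bf N}({\bf u},\bar{\bf u})\neq\bar{\bf N}({\bf u},\bar{\bf u})$ unless $\gamma=0$. The paper's proof glosses over this, whereas Lemma \ref{lemma-PT-symmetry} itself uses the odd reflection index $2N+1-n$, consistent with your restriction; your version is the more careful one.
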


\begin{proof}
The proof follows from the proof of Lemma \ref{lemma-PT-symmetry} when
$S = \{ 1,2,...,2N \}$ is replaced by $S = \Z$ and the value of $n_0 \in \Z$ is arbitrary.
\end{proof}

\section{Finite PT-dNLS lattices}

We shall now consider the PT-dNLS equation (\ref{dnls}) for the finite chain $S_N := \{1,2,...,2N\}$,
 where $N$ is a positive integer, subject to the Dirichlet boundary conditions $u_0 = u_{2N+1} = 0$.
We study $2N$ eigenvalues of the linear PT-dNLS equation to find the phase transition threshold
$\gamma_N$, which separates the neutral stability of the zero solution for 
$\gamma \in (-\gamma_N,\gamma_N)$ and the linear instability of the zero solution 
for $|\gamma| > \gamma_N$. We show that $\gamma_N$ is a monotonically
decreasing sequence of $N$ such that $\gamma_1 = 1$ and $\gamma_N \to 0$ as $N \to \infty$.

We consider local bifurcations of nonlinear stationary states of the PT-dNLS equation (\ref{dnls})
from the linear limit and prove that every
simple eigenvalue of the linearized PT-dNLS equation generates a unique (up to a gauge transformation)
family of the PT-symmetric stationary states in the parameter space. For $\gamma$ inside
the stability interval $(-\gamma_N,\gamma_N)$,
this yields the existence of $2N$ branches of stationary states. These $2N$ branches are
extended towards a large-amplitude limit with some intermediate bifurcations.

We characterize the number and properties of the branches of the stationary states
in the large-amplitude limit and show that there exist $2^{N}$ distinct branches for
any $\gamma \in (-\gamma_1,\gamma_1) = (-1,1)$, for which $|u_n|^2$
is large for all $n \in S_N$. We also discuss
existence of other branches of the stationary
states, which are centered at the middle sites of $S_N$ and
for which $|u_1|^2$ is small in the large-amplitude limit.

These analytical results are illustrated with numerical approximations of the nonlinear stationary states
of the PT-dNLS equation (\ref{dnls}) for the finite chain with $N = 1,2,3$.

\subsection{Eigenvalues of the linear PT-dNLS equation}

We consider the linear stationary PT-dNLS equation on a finite chain $S_N := \{1,2,...,2N\}$:
\begin{equation}
\label{dnls-linear}
E w_n = w_{n+1} + w_{n-1} + i \gamma (-1)^n w_n, \quad n \in S_N,
\end{equation}
subject to the Dirichlet boundary conditions $w_0 = w_{2N+1} = 0$.
Compared to the PT-dNLS equation (\ref{dnls}), the diagonal term of the discrete Laplacian operator
has been included in the definition of the parameter $E$ (see Remark \ref{remark-eigenvalue} below).
We shall find all $2N$ eigenvalues of the linear stationary dNLS equation (\ref{dnls-linear}) in explicit form, a result
from which the phase transition threshold $\gamma_N$ is computed also explicitly.

\begin{theorem}
Eigenvalues of the linear eigenvalue problem (\ref{dnls-linear}) are found explicitly
from the set of quadratic equations:
\begin{equation}
\gamma^2 + E^2 = 4 \cos^2\left(\frac{\pi j}{1 + 2N}\right), \quad 1 \leq j \leq N.
\end{equation}
In particular, all eigenvalues are simple and real for $\gamma \in (-\gamma_N,\gamma_N)$,
where
\begin{equation}
\gamma_N := 2 \cos\left(\frac{\pi N}{1 + 2N} \right).
\end{equation}
\label{theorem-linear-PT}
\end{theorem}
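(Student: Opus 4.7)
The plan is to diagonalize the linear operator in (\ref{dnls-linear}) by an explicit Fourier-type ansatz that respects the $2$-periodic sign of the gain/loss term $i\gamma(-1)^n$. A single sine ansatz $w_n = \sin(n\theta)$ will not work because the diagonal entry depends on the parity of $n$. Instead, I would allow the amplitude to be $2$-periodic and try
\begin{equation*}
w_{2k-1} = \alpha \sin\!\bigl((2k-1)\theta\bigr), \qquad w_{2k} = \beta \sin(2k\theta), \qquad k = 1, \ldots, N,
\end{equation*}
with unknowns $\alpha,\beta \in \C$ and $\theta \in \R$. The left Dirichlet condition $w_0 = 0$ is automatic.

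The next step is to substitute the ansatz into (\ref{dnls-linear}) and use $\sin((n{+}1)\theta) + \sin((n{-}1)\theta) = 2\cos\theta\, \sin(n\theta)$. The odd-$n$ and even-$n$ equations reduce to the $2\times 2$ system
\begin{equation*}
(E + i\gamma)\alpha = 2\beta\cos\theta, \qquad (E - i\gamma)\beta = 2\alpha\cos\theta.
\end{equation*}
Nontriviality of $(\alpha,\beta)$ is equivalent to the compatibility relation $E^2 + \gamma^2 = 4\cos^2\theta$, which is the desired quadratic. The right Dirichlet condition $w_{2N+1} = \alpha\sin((2N{+}1)\theta) = 0$ together with $\alpha \neq 0$ (which follows from nontriviality since $\cos\theta \neq 0$ at the relevant $\theta$) then quantizes $\theta$ to $\theta_j := \pi j/(2N+1)$ with $j \in \{1,\ldots,2N\}$.

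To finish the count, I would observe that $\theta_{2N+1-j} = \pi - \theta_j$ gives $\cos^2\theta_{2N+1-j} = \cos^2\theta_j$, and a short check via $\sin(n(\pi-\theta_j)) = \pm\sin(n\theta_j)$ shows that the pairs $j$ and $2N+1-j$ produce the same eigenvectors after absorbing signs into $\beta$. Thus only $j = 1,\ldots,N$ contribute distinct branches, and for each such $j$ the two roots $E = \pm\sqrt{4\cos^2\theta_j - \gamma^2}$ of the quadratic supply two eigenvalues, yielding $2N$ in total. This matches the dimension of the $2N\times 2N$ problem, so the spectrum is exhausted.

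For the reality and simplicity statement, I would note that $\theta_j \in (0,\pi/2)$ for $j \leq N$ (because $\theta_N < \pi/2$), so $\cos\theta_j$ is strictly decreasing in $j$ on this range and the minimum of $4\cos^2\theta_j$ is attained at $j = N$, giving $\gamma_N = 2\cos(\pi N/(2N+1))$. For $|\gamma| < \gamma_N$, every $E^2 = 4\cos^2\theta_j - \gamma^2$ is positive, so all eigenvalues are real and nonzero, and the $N$ distinct values of $\cos^2\theta_j$ ensure the $\pm$ pairs are mutually disjoint, hence simple. The only real obstacle is confirming that no eigenvector is missed by the ansatz; this is resolved by the counting argument, since $2N$ distinct eigenvalues in a $2N$-dimensional problem must constitute the full spectrum (and at critical $\gamma$ values one passes to the limit by continuity of eigenvalues).
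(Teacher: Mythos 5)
Your proof is correct, and it reaches the quantization $\theta_j=\pi j/(2N+1)$, the quadratics $E^2+\gamma^2=4\cos^2\theta_j$, and the threshold $\gamma_N$ by a route that differs from the paper's in its middle step. The paper first sets $x_k=w_{2k-1}$, $y_k=w_{2k}$, eliminates the even-site variables to obtain a single second-order difference equation $(\gamma^2+E^2)x_k=x_{k-1}+2x_k+x_{k+1}$ on $N$ sites with the mixed boundary conditions $x_0=-x_1$, $x_{N+1}=0$, and then quantizes via $\sin\frac{\theta(1+2N)}{2}\cos\frac{\theta}{2}=0$ (its $\theta$ is twice yours). You instead keep all $2N$ sites and use a two-component Bloch-type ansatz $w_{2k-1}=\alpha\sin((2k-1)\theta)$, $w_{2k}=\beta\sin(2k\theta)$, so that the compatibility of the $2\times 2$ amplitude system $(E+i\gamma)\alpha=2\beta\cos\theta$, $(E-i\gamma)\beta=2\alpha\cos\theta$ yields the dispersion relation directly and the right Dirichlet condition gives the clean quantization $\sin((2N+1)\theta)=0$; you then must (and do) identify the redundancy $j\leftrightarrow 2N+1-j$. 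Your approach buys an explicit eigenvector with the relative amplitude $\alpha/\beta=(E-i\gamma)/(2\cos\theta)$, which is exactly the $\sqrt{E-i\gamma}$ versus $\sqrt{E+i\gamma}$ structure the paper only exhibits later (in the proof of its local bifurcation theorem), and it avoids the elimination step, which tacitly requires dividing by $E-i\gamma$. The paper's reduction buys a shorter computation on half as many unknowns and a quantization condition in which the spurious branch is visibly the factor $\cos(\theta/2)$. Your completeness and simplicity arguments (counting $2N$ distinct eigenvalues of a $2N\times 2N$ matrix, monotonicity of $\cos^2\theta_j$ on $(0,\pi/2)$, continuity in $\gamma$ at the exceptional points) are sound and, if anything, slightly more explicit than the paper's, which leaves exhaustiveness implicit.
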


\begin{proof}
By writing
$$
x_k = w_{2k-1}, \quad y_k = w_{2k}, \quad 1 \leq k \leq N,
$$
we can rewrite the linear eigenvalue problem (\ref{dnls-linear}) in the equivalent form:
\begin{equation}
\label{dnls-linear-equiv}
\left\{ \begin{array}{l}
E x_k = y_{k-1} + y_{k} - i \gamma x_k, \\
E y_k = x_{k} + x_{k+1} + i \gamma y_k, \end{array} \right. \quad 1 \leq k \leq N,
\end{equation}
where the boundary conditions are now $y_0 = 0$ and $x_{N+1} = 0$. Expressing
$y_k$ from the second equation of the system (\ref{dnls-linear-equiv})
and substituting it to the first equation of the system, we obtain a second-order
difference equation
$$
(\gamma^2 + E^2) x_k = x_{k-1} + 2 x_k + x_{k+1}, \quad 1 \leq k \leq N,
$$
where the boundary conditions are now $x_0 = -x_1$ and
$x_{N+1} = 0$. Using the discrete Fourier transform, we represent the eigenvector
satisfying the boundary condition $x_{N+1} = 0$ in the form
$$
x_k = \sin \theta(N+1-k), \quad 1 \leq k \leq N.
$$
Parameter $\theta$ in the fundamental
interval $[0,\pi]$ defines uniquely the spectral parameter $z := \gamma^2 + E^2$
from the dispersion relation
\begin{equation}
\label{gamma-E-equation}
z := \gamma^2 + E^2 = 2 + 2 \cos \theta = 4 \cos^2 \frac{\theta}{2}.
\end{equation}
From the remaining boundary condition $x_0 + x_1 = 0$, we obtain
$$
\sin \frac{\theta (1 + 2N)}{2} \cos\frac{\theta}{2} = 0,
$$
where $\cos\frac{\theta}{2} \neq 0$ (since $\{ x_k \}_{k=1}^N$ must not be
identically zero). From the roots of $\sin \frac{\theta (1 + 2N)}{2}$, we
obtain the admissible values of $\theta$ as follows:
$$
\theta = \frac{2 \pi j}{1 + 2N}, \quad 1 \leq j \leq N,
$$
which yields the result by (\ref{gamma-E-equation}).
\end{proof}

\begin{remark}
For each eigenvalue $E$ of the linear stationary dNLS equation (\ref{dnls-linear})
with the eigenvector ${\bf w}$, there exists another eigenvalue $\bar{E}$ with
the eigenvector $P \bar{\bf w}$. This is an elementary consequence of the PT-symmetry,
which produces a new solution ${\bf v}(t) = P \bar{\bf u}(-t) = (P \bar{\bf w}) e^{-i(E-2)t}$ of the time-dependent
dNLS equation (\ref{dnls}) from the solution ${\bf u}(t) = {\bf w} e^{-i (E-2) t}$ of the same
equation. In particular,
if $E$ is a simple real eigenvalue (as in Theorem \ref{theorem-linear-PT}), then
the eigenvector ${\bf w}$ can be chosen to satisfy the PT-symmetry
\begin{equation}
{\bf w} = P \bar{\bf w} \quad \Rightarrow \quad w_{n} = \bar{w}_{2N+1-n}, \quad n \in S_N.
\end{equation}
\label{remark-eigenvalue}
\end{remark}

We list some numerical values of the phase transition thresholds:
\begin{eqnarray*}
\gamma_1 & = & 2 \cos \frac{\pi}{3} = 1, \\
\gamma_2 & = & 2 \cos \frac{2 \pi}{5} \approx 0.618, \\
\gamma_3 & = & 2 \cos \frac{3 \pi}{7} \approx 0.445.
\end{eqnarray*}
Note that $\lim_{N \to \infty} \gamma_N = 0$.

\subsection{Stationary states: local bifurcations}

We shall now consider nonlinear stationary states on a finite chain $S_N$, which satisfy the
nonlinear stationary PT-dNLS equation:
\begin{equation}
\label{dnls-nonlinear}
E w_n = w_{n+1} + w_{n-1} + i \gamma (-1)^n w_n  + |w_n|^2 w_n, \quad n \in S_N,
\end{equation}
subject to the Dirichlet boundary conditions $w_0 = w_{2N+1} = 0$.
We shall work in the space $X = \mathbb{C}^{2N}$.

Assuming that the linear stationary PT-dNLS equation (\ref{dnls-linear})
admits a simple real eigenvalue $E_0$
with the eigenvector ${\bf w}_0 \in X$, we shall prove the existence of a branch
of the PT-symmetric stationary states ${\bf w} \in X$ satisfying the nonlinear
stationary PT-dNLS equation (\ref{dnls-nonlinear}) for
$E$ in a one-sided neighborhood of $E_0$. The solution branch is unique
up to a gauge transformation: ${\bf w} \to e^{i \alpha} {\bf w}$, where
$\alpha \in \R$. This result corresponds to the
standard {\em local bifurcation} of the nonlinear state ${\bf w}$ from the linear eigenstate
${\bf w}_0$, which is complicated here due to the presence of the PT-symmetry.

The local bifurcation results were considered with
formal perturbation expansions by Zezyulin \& Konotop \cite{ZK}.
Here we give a rigorous version of the same result.

\begin{theorem}
Assume that $E_0$ is a simple real eigenvalue of the linear stationary PT-dNLS equation (\ref{dnls-linear})
with the PT-symmetric eigenvector ${\bf w}_0 = P \bar{\bf w}_0$ in $X = \mathbb{C}^{2N}$. Then, there exists a unique
(up to a gauge transformation) PT-symmetric solution ${\bf w} = P \bar{\bf w}$ of the nonlinear
stationary PT-dNLS equation (\ref{dnls-nonlinear}) for real $E > E_0$. Moreover,
the solution branch is parametrized by a small parameter $a$ such that
the map $\R \ni a \to (E,{\bf w}) \in \R \times X$ is $C^{\infty}$ and 
for sufficiently small $a$, there is a positive constant $C$ such that
\begin{equation}
\label{bound-local}
\| {\bf w} \|^2 + |E-E_0| \leq C a^2.
\end{equation}
\label{theorem-local-bifurcation}
\end{theorem}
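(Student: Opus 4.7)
The plan is to implement a Lyapunov--Schmidt reduction around $(E_0,{\bf 0})$ for the stationary equation $F(E,{\bf w}) := (\mathcal{H}-E){\bf w} + N({\bf w}) = {\bf 0}$, where $\mathcal{H}$ is the linear operator in (\ref{dnls-linear}) and $N({\bf w})_n := |w_n|^2 w_n$. Because the hopping part of $\mathcal{H}$ is real symmetric and only the $i\gamma(-1)^n$ term flips sign under conjugation, one has $\mathcal{H}^* = \bar{\mathcal{H}} = P\mathcal{H}P$, so the adjoint eigenvector at the real eigenvalue $E_0$ is $P{\bf w}_0$. Simplicity of $E_0$ as a root of the characteristic polynomial (no Jordan block) is equivalent to $\langle P{\bf w}_0,{\bf w}_0\rangle \neq 0$, which lets me split $X = \mathrm{span}_{\C}({\bf w}_0) \oplus X_\perp$ with $X_\perp := \{{\bf v}\in X : \langle P{\bf w}_0,{\bf v}\rangle = 0\}$ an invariant subspace of $\mathcal{H}-E_0$ on which the restriction is invertible with bounded inverse $K$.

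Next I would substitute the ansatz ${\bf w} = a{\bf w}_0 + {\bf v}$, $E = E_0 + \mu$, with ${\bf v} \in X_\perp$ and $a \in \R$ (fixing the residual gauge by rotating the phase of $a$ to zero). Projecting $F = {\bf 0}$ onto $X_\perp$ via the spectral projector $Q_\perp$ yields the range equation
$$
{\bf v} = K\bigl(\mu\,{\bf v} - Q_\perp N(a{\bf w}_0 + {\bf v})\bigr),
$$
whose right-hand side is $C^\infty$ and at least cubic in $(a,{\bf v})$. The implicit function theorem then produces a unique solution ${\bf v} = {\bf v}(a,\mu) \in X_\perp$, smooth in $(a,\mu)$ near $(0,0)$, with ${\bf v}(a,\mu) = O(a^3)$. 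Projecting onto $P{\bf w}_0$ instead gives the scalar bifurcation equation, which after dividing by $a$ and using the expansion of $N$ reduces to $\mu = C_0\,a^2 + O(a^4)$ with $C_0 = \langle P{\bf w}_0, N({\bf w}_0)\rangle/\langle P{\bf w}_0,{\bf w}_0\rangle$. A second application of the implicit function theorem to this scalar equation delivers $\mu = \mu(a) \in C^\infty$ with $\mu(0) = \mu'(0) = 0$, and the bound (\ref{bound-local}) follows at once.

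To obtain a genuinely PT-symmetric branch I would track the antilinear involution $\Pi : {\bf u} \mapsto P\bar{\bf u}$ through the reduction. From $\bar{\mathcal{H}} = P\mathcal{H}P$ one checks $\Pi\mathcal{H} = \mathcal{H}\Pi$ and $\Pi N({\bf u}) = N(\Pi{\bf u})$ (the cubic nonlinearity commutes with $\Pi$ componentwise), and a short calculation gives $\langle P{\bf w}_0,\Pi{\bf v}\rangle = \overline{\langle P{\bf w}_0,{\bf v}\rangle}$, so $\Pi$ preserves $X_\perp$ and hence commutes with $K$ as well. Therefore the fixed-point map in the range equation commutes with $\Pi$; since ${\bf w}_0 = \Pi{\bf w}_0$ and the initial guess ${\bf v}={\bf 0}$ is $\Pi$-fixed, uniqueness in the implicit function theorem forces ${\bf v}(a,\mu) = \Pi{\bf v}(a,\mu)$, and combined with $a \in \R$ this yields ${\bf w} = \Pi{\bf w}$ as required.

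The main obstacle I anticipate is establishing the sign $C_0 > 0$, which is what places the branch on the asserted side $E > E_0$. Using the explicit PT-symmetry $w_{0,n} = \bar{w}_{0,2N+1-n}$, the relevant inner products collapse to $\langle P{\bf w}_0,{\bf w}_0\rangle = \sum_n w_{0,n}^2$ and $\langle P{\bf w}_0, N({\bf w}_0)\rangle = \sum_n |w_{0,n}|^2 w_{0,n}^2$, both of which are automatically real because each term pairs with its PT-image under the index reflection. One then needs to argue that these two real numbers have the same sign for every simple real eigenvalue $E_0$ in the PT-unbroken range; the cleanest route is to track them at $\gamma=0$, where both are manifestly positive, and invoke continuity in $\gamma$ together with simplicity of $E_0$ throughout $(-\gamma_N,\gamma_N)$ to exclude a sign crossing. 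This non-self-adjoint sign analysis is the only real departure from the standard local bifurcation argument for the classical ($\gamma=0$) dNLS equation.
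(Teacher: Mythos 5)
Your reduction is, step for step, the paper's own proof: the same Lyapunov--Schmidt splitting with adjoint null vector $P{\bf w}_0$, the same identification of simplicity of $E_0$ with $\langle P{\bf w}_0,{\bf w}_0\rangle\neq 0$ (no generalized eigenvector), the same two applications of the implicit function theorem for the range and bifurcation equations, the same gauge fixing by taking $a$ real, and the same propagation of the antilinear PT involution through the uniqueness clause of the implicit function theorem. All of that is sound and matches the paper.

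The one point where you deviate is also the one genuine gap: the sign of $C_0=\langle P{\bf w}_0,{\bf N}_{\rm non}({\bf w}_0)\rangle/\langle P{\bf w}_0,{\bf w}_0\rangle$, which is precisely what places the branch on the asserted side $E>E_0$. Your proposed route --- positivity at $\gamma=0$, continuity in $\gamma$, and simplicity of $E_0$ throughout $(-\gamma_N,\gamma_N)$ --- only controls the denominator: simplicity is equivalent to $\langle P{\bf w}_0,{\bf w}_0\rangle\neq 0$, but it says nothing about the numerator $\sum_n |w_{0,n}|^2 w_{0,n}^2$, a real number that could a priori pass through zero and change sign at some intermediate $\gamma$ while the eigenvalue remains simple; continuity alone cannot exclude this (and at a zero of the numerator the leading-order term degenerates, so one could not even conclude one-sidedness from the $O(a^2)$ term). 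The paper closes this by explicit computation rather than by a continuity/deformation argument: it inserts the eigenvectors from Theorem \ref{theorem-linear-PT}, namely $w_{2k-1}=\sqrt{E-i\gamma}\,\sin\frac{2\pi j(N+1-k)}{1+2N}$ and $w_{2k}=\sqrt{E+i\gamma}\,\sin\frac{2\pi j(N+1/2-k)}{1+2N}$, for which $|w_n|^2 w_n^2=\sqrt{E^2+\gamma^2}\,(E\mp i\gamma)\sin^4(\cdot)$; the imaginary parts cancel between odd and even sublattices and one obtains $\Delta_2=\sqrt{E^2+\gamma^2}\,\bigl(\sum_k\sin^4\bigr)/\bigl(\sum_k\sin^2\bigr)>0$ for every simple eigenvalue and every admissible $\gamma$. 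You should either reproduce this explicit evaluation or otherwise prove that the numerator never vanishes on the unbroken interval; as written, the final step of your argument does not go through.
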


\begin{proof}
We write the nonlinear stationary PT-dNLS equation (\ref{dnls-nonlinear}) in the abstract form
\begin{equation}
\label{nonlinear-1}
(E - \mathcal{H}) {\bf w} = {\bf N}_{\rm non}({\bf w}),
\end{equation}
where $\mathcal{H} : X \to X$ is the linear (matrix) operator associated with the
right-hand side of the linearized stationary PT-dNLS
equation (\ref{dnls-linear}) and ${\bf N}_{\rm non}({\bf w}) : X \to X$ is the cubic nonlinear part. We note that
according to our assumptions, we have
$$
{\rm Ker}(E_0 - \mathcal{H}) = {\rm span}({\bf w}_0), \quad
{\rm Ker}(E_0 - \mathcal{H})^+ = {\rm span}(P {\bf w}_0),
$$
where
$$
(E_0 - \mathcal{H})^+ = E_0 - \bar{\mathcal{H}} = P (E_0 - \mathcal{H}) P.
$$
Using the standard Lyapunov--Schmidt method, we write
\begin{equation}
\label{decomposition-local}
E = E_0 + \Delta, \quad {\bf w} = a {\bf w}_0 + {\bf u}, \quad \langle P {\bf w}_0, {\bf u} \rangle = 0,
\end{equation}
where $(\Delta, a, {\bf u}) \in \C \times \C \times X$ are determined from the nonlinear equations (\ref{nonlinear-1})
projected to ${\rm Ker}(E_0 - \mathcal{H})^+$ and ${\rm Ran}(E_0 - \mathcal{H})^+$.
Recall that by the Fredholm theory, ${\rm Ker}(E_0 - \mathcal{H})^+$ is
orthogonal to ${\rm Ran}(E_0 - \mathcal{H})$ so that ${\bf u} \in {\rm Ran}(E_0 - \mathcal{H})$.

The projection to ${\rm Ker}(E_0 - \mathcal{H})^+$ is written in the scalar form:
\begin{equation}
\label{nonlinear-2}
\Delta a \langle P {\bf w}_0, {\bf w}_0 \rangle = \langle P {\bf w}_0, {\bf N}_{\rm non}(a {\bf w}_0 + {\bf u}) \rangle.
\end{equation}
By the implicit function theorem, the projection to ${\rm Ran}(E_0 - \mathcal{H})^+$ (not written here) guarantees
the existence and uniqueness of a smooth ($C^{\infty}$) map from $(\Delta,a) \in \C^2$ to
${\bf u} \in {\rm Ran}(E_0 - \mathcal{H}) \subset X$. Moreover, for small values of $\Delta$ and $a$,
there is a positive constant $C$ such that
\begin{equation}
\label{bound-on-u}
\| {\bf u} \| \leq C (1 + |\Delta|) |a|^3.
\end{equation}

For $a = 0$, we have a unique zero solution ${\bf u} = {\bf 0}$ and the equation
(\ref{nonlinear-2}) is satisfied identically. In what follows, we assume $a \neq 0$.

We claim that $\langle P {\bf w}_0, {\bf w}_0 \rangle \neq 0$ under the assumption that
$E_0$ is a simple eigenvalue of $\mathcal{H}$. Indeed, if
$\langle P {\bf w}_0, {\bf w}_0 \rangle = 0$, there exists a generalized eigenvector
${\bf w}_1 \in X$ for the same eigenvalue $E_0$ from a solution of the inhomogeneous equation
$$
(E_0 - \mathcal{H}) {\bf w}_1 = -{\bf w}_0,
$$
which is a contradiction to the assumption that $E_0$ is a simple eigenvalue of $\mathcal{H}$.

Therefore, $\langle P {\bf w}_0, {\bf w}_0 \rangle \neq 0$. Then, there exists
a unique smooth map from $a \in \C$ to $\Delta \in \C$ solving the bifurcation equation
(\ref{nonlinear-2}). Moreover, for small values of $a$, there is a positive constant $C$ such that
\begin{equation}
\label{decomposition-final}
|\Delta \langle P {\bf w}_0, {\bf w}_0 \rangle - |a|^2 \langle P {\bf w}_0, {\bf N}_{\rm non}({\bf w}_0) \rangle | \leq C |a|^4.
\end{equation}
Note that both $\langle P {\bf w}_0, {\bf w}_0 \rangle$ and $\langle P {\bf w}_0, {\bf N}_{\rm non}({\bf w}_0) \rangle$
are real because of the PT symmetry of the eigenvector ${\bf w}_0 = P \bar{\bf w}_0$
and the nonlinear field ${\bf N}_{\rm non}$ satisfies the second identity
(\ref{nonlinear-P}). Indeed, we have
$$
\langle P {\bf w}_0, {\bf w}_0 \rangle = \langle \bar{\bf w}_0, {\bf w}_0 \rangle =
\sum_{n = 1}^{2N} ({\bf w}_0)_n^2 = \sum_{n = 1}^{N} \left[ ({\bf w}_0)_n^2 + (\bar{\bf w}_0)_n^2 \right]
$$
and
$$
\langle P {\bf w}_0, {\bf N}_{\rm non}({\bf w}_0) \rangle = \langle \bar{\bf w}_0, {\bf N}_{\rm non}({\bf w}_0) \rangle =
\sum_{n = 1}^{2N} |({\bf w}_0)_n|^2 ({\bf w}_0)_n^2 = \sum_{n = 1}^{N} |({\bf w}_0)_n|^2
\left[ ({\bf w}_0)_n^2 + (\bar{\bf w}_0)_n^2 \right].
$$
Therefore, $\Delta$ is real at the leading order $\mathcal{O}(|a|^2)$. To exclude the gauge transformation,
let us consider the real values of $a$. Because the nonlinear vector field ${\bf N}_{\rm non}$
preserves the PT-symmetry, the unique solution for ${\bf u}$ and $\Delta$ is $PT$-symmetric,
so that ${\bf u} = P \bar{\bf u}$ and $\Delta$ is real. The bound (\ref{bound-local}) follows from
(\ref{decomposition-local}), (\ref{bound-on-u}), and (\ref{decomposition-final}). To be precise, we obtain
$$
\Delta = \Delta_2 a^2 + \mathcal{O}(a^4), \quad
\Delta_2 := \frac{\langle P {\bf w}_0, {\bf N}_{\rm non}({\bf w}_0) \rangle}{\langle P {\bf w}_0, {\bf w}_0 \rangle}.
$$
It remains to prove that $\Delta_2 > 0$. However, using the explicit representation from Theorem \ref{theorem-linear-PT},
for the eigenvalue with $\theta = \frac{2\pi j}{1 + 2N}$, $1 \leq j \leq N$, we obtain the eigenvector ${\bf w}_0$ with
components
$$
w_{2k-1} = \sqrt{E-i\gamma} \sin \frac{2\pi j (N+1-k)}{1+2N}, \quad w_{2k} = \sqrt{E+i\gamma} \sin \frac{2\pi j (N+1/2-k)}{1+2N},
\quad 1 \leq k \leq N.
$$
Therefore,
$$
\Delta_2 = \sqrt{E^2 + \gamma^2} \frac{\sum_{k=1}^N \sin^4 \frac{2\pi j (N+1-k)}{1+2N}}{\sum_{k=1}^N \sin^2 \frac{2\pi j (N+1-k)}{1+2N}} > 0,
$$
and the proof of the theorem is complete.
\end{proof}

\begin{remark}
The local bifurcation results do not apply in the limit $N \to \infty$ because of two reasons.
First, the spectrum of the linear stationary dNLS equation (\ref{dnls-linear}) becomes continuous
as $N \to \infty$. Second, for any $\gamma \neq 0$, the spectrum includes complex (purely imaginary)
points of $E$ because $\gamma_N \to 0$ as $N \to \infty$.
\end{remark}

Let us consider the simplest example $N = 1$
when Theorem \ref{theorem-local-bifurcation} works.
The two simple eigenvalues are
$E_0 = \pm \sqrt{1 - \gamma^2}$
and the eigenvectors ${\bf w}_0 = P \bar{\bf w}_0$ are given by the same expression
$$
{\bf w}_0 = \frac{\sqrt{3}}{2} \left[ \begin{array}{c} \sqrt{E_0 - i \gamma} \\ \sqrt{E_0 + i \gamma} \end{array} \right].
$$
In this case, $\Delta_2 = \frac{3}{4}$, so that we have the expansion
$$
E = \pm \sqrt{1 - \gamma^2} + A^2 + \mathcal{O}(A^4), \quad A := \frac{\sqrt{3}}{2} a.
$$
In fact, it follows from the exact solution (\ref{solution_N_1}) below
that the error term $\mathcal{O}(A^4)$ is identically zero.

\subsection{Stationary states: bifurcation from infinity}

We shall now consider the stationary states of the
nonlinear stationary PT-dNLS equation (\ref{dnls-nonlinear})
in the limit of large values of $E$. This corresponds
to the anti-continuum limit of weak couplings in the PT-dNLS lattice
after a suitable scaling transformation 
(which is also discussed in \cite{Pelin1}). Note
that the standard anti-continuum limit arising when the coupling
parameter in front of the discrete Laplacian operator vanishes
fails to generate any solutions of the stationary dNLS equation (\ref{dnls-nonlinear})
for real values of $E$ and $\gamma \neq 0$.

We shall develop methods to analyze a {\em bifurcation from infinity} for solution branches.
In particular, we shall prove the existence of $2^{N}$ branches of
the PT-symmetric stationary states ${\bf w}$ of the nonlinear stationary
PT-dNLS equation (\ref{dnls-nonlinear}) for $\gamma \in (-1,1)$
and for large values of $E$, for which $|w_n|^2$ is large for all $n \in S_N$.
The solution branches are unique up to the gauge transformation
${\bf w} \to e^{i\alpha} {\bf w}$ with $\alpha \in \R$.
The complication of proving this result is caused by the degeneracy
of asymptotic solutions of the nonlinear algebraic system (\ref{dnls-nonlinear})
in the limit $E \to \infty$. Indeed, setting ${\bf w} = \sqrt{E} {\bf W}$
and taking the limit $E \to \infty$, we obtain an uncoupled set of algebraic
equations with $N$ PT-symmetric solutions
$$
{\bf W}_k = e^{-i \varphi_k} {\bf e}_{k} + e^{i \varphi_k} {\bf e}_{2N+1-k}, \quad 1 \leq k \leq N,
$$
where $\varphi_k \in \R$ is an arbitrary parameter and ${\bf e}_k$ is a unit vector on the finite chain $S_N$.
However, the space of solutions of the nonlinear
algebraic system (\ref{dnls-nonlinear}) in the limit $E \to \infty$ does not enjoy the
linear superposition principle and parameters $\{ \varphi_k \}_{k = 1}^N$ must be fixed 
from $\mathcal{O}(1)$ conditions as $E \to \infty$.
To prove persistence of continuations of the limiting roots for large but finite values
of $E$, we have to unfold the degeneracy of the nonlinear system by a special transformation, after which the
result is guaranteed by the implicit function theorem. Along these lines, we prove 
the following main result.

\begin{theorem}
\label{theorem-nonlocal-bifurcation}
For any $\gamma \in (-1,1)$,
the nonlinear stationary PT-dNLS equation (\ref{dnls-nonlinear}) in the limit of large real $E$
admits $2^N$ PT-symmetric solutions ${\bf w} = P \bar{\bf w}$
(unique up to a gauge transformation) such that,
for sufficiently large $E$, the map $E \to {\bf w}$ is $C^{\infty}$ at each solution and
there is a positive $E$-independent constant $C$ such that
\begin{equation}
\label{bound-nonlocal}
\left| \sum_{n \in S_N} |w_n|^2 - 2 N E \right| \leq C.
\end{equation}
\end{theorem}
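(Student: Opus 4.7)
The plan is to rescale ${\bf w} = \sqrt{E}\,{\bf W}$ and treat $\epsilon := 1/E$ as a small parameter, reducing (\ref{dnls-nonlinear}) to
$$
(1 - |W_n|^2) W_n = \epsilon \bigl( W_{n+1} + W_{n-1} + i\gamma(-1)^n W_n \bigr), \quad n \in S_N,
$$
with $W_0 = W_{2N+1} = 0$. At $\epsilon = 0$ the equations decouple and the large-amplitude branches correspond to $|W_n| = 1$ for all $n \in S_N$; after imposing PT-symmetry this leaves an $N$-torus of phases, so the implicit function theorem cannot be applied directly. The unfolding transformation I would use is to pass to polar coordinates $W_n = r_n e^{i\theta_n}$ with $r_n > 0$ and resolve the complex equation along $e^{i\theta_n}$: the real part yields the amplitude equation $A_n := r_n(1-r_n^2) - \epsilon\bigl[r_{n+1}\cos(\theta_{n+1}-\theta_n) + r_{n-1}\cos(\theta_{n-1}-\theta_n)\bigr] = 0$, while the imaginary part, divided through by $\epsilon$, yields the $\epsilon$-independent phase equation
$$
B_n := r_{n+1}\sin(\theta_{n+1}-\theta_n) + r_{n-1}\sin(\theta_{n-1}-\theta_n) + \gamma(-1)^n r_n = 0,
$$
with the convention $r_0 = r_{2N+1} = 0$.

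Imposing PT-symmetry $r_{2N+1-n} = r_n$ and $\theta_{2N+1-n} = -\theta_n$ halves the unknowns to $(r_1,\ldots,r_N;\theta_1,\ldots,\theta_N)$; the residual gauge ${\bf w} \mapsto e^{i\alpha}{\bf w}$ compatible with PT restricts to the discrete action $\alpha \in \{0,\pi\}$. At $\epsilon = 0$ the amplitude equations force $r_n = 1$, and the reduced phase system becomes the telescoping recursion $\sin(\theta_2-\theta_1) = \gamma$, $\sin(\theta_3-\theta_2) = 0$, $\sin(\theta_4-\theta_3) = \gamma$, and so on, closed by the terminal PT-condition $\theta_{N+1} = -\theta_N$. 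For $\gamma \in (-1,1)$ every sine equation admits exactly two solutions on the circle, so chaining the $N-1$ binary choices for consecutive differences with the two choices (modulo gauge) from the terminal equation produces precisely $2^N$ distinct PT-symmetric roots ${\bf W}_\ast$.

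To extend each root to small $\epsilon > 0$ via the implicit function theorem I would compute the Jacobian of $(A,B)$ with respect to $(r,\theta)$ at $\epsilon = 0$. Since $\partial A/\partial \theta|_{\epsilon=0} = 0$ and $\partial A/\partial r|_{\epsilon=0} = -2 I$, the Jacobian is block triangular and invertibility reduces to nondegeneracy of the tridiagonal phase block $J_\theta := \partial B/\partial \theta$, whose entries are cosines of the form $\pm\sqrt{1-\gamma^2}$ or $\pm 1$, all nonzero since $\gamma \in (-1,1)$; in particular the off-diagonal entries never vanish.

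The main obstacle is to confirm that $\det J_\theta \neq 0$ at each of the $2^N$ roots: the tridiagonal structure combined with the alternating $\sin(\cdot) = \gamma$/$\sin(\cdot) = 0$ pattern should make $\det J_\theta$ tractable via the standard continuant recursion and equal to a nonvanishing product of the cosines above, which I would verify by induction in $N$ (the cases $N = 1,2$ reduce to one-line verifications). Once nondegeneracy is in hand, the implicit function theorem furnishes a unique $C^{\infty}$ branch $\epsilon \mapsto (r(\epsilon),\theta(\epsilon))$ from each root; undoing the scaling, ${\bf w}(E) = \sqrt{E}\,{\bf W}(1/E)$ is $C^{\infty}$ in $E$ for large $E$, and $\sum_n |w_n|^2 = E\sum_n r_n^2 = E(2N + \mathcal{O}(\epsilon)) = 2NE + \mathcal{O}(1)$ yields the bound (\ref{bound-nonlocal}).
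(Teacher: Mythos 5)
Your proposal is correct and follows the same overall strategy as the paper's proof: rescale by $\sqrt{E}$, impose the PT reduction to close the system at $N$ sites, split each complex equation into an amplitude part and an $\epsilon$-independent phase part (dividing the imaginary part by $\epsilon$ is exactly the unfolding that removes the toral degeneracy), count $2^N$ limiting roots, and apply the implicit function theorem via a block-triangular Jacobian. The one material difference is the choice of coordinates. You work in plain polar variables $(r_n,\theta_n)$, which leaves the phase block $J_\theta=\partial B/\partial\theta$ genuinely tridiagonal, and you correctly flag $\det J_\theta\neq 0$ as the remaining obstacle without completing it; nonvanishing of the individual cosine entries does not by itself force a tridiagonal determinant to be nonzero. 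The paper instead uses the telescoping variables $\varphi_1=\theta_1$, $\varphi_n=\theta_n-\theta_{n-1}$ and multiplicative amplitudes $A_n=r_n^2/r_{n-1}^2$, under which the phase Jacobian has all of its $\varphi_1$-dependence concentrated in the last row (through $\psi=2\theta_N$), so the determinant is read off as $\pm\,2\cos(2\theta_N)\prod_{k=2}^{N}\cos(\theta_k-\theta_{k-1})$ with no recursion needed. Your gap closes immediately once you notice this: since the two phase coordinate systems are related by a unimodular (triangular, unit-diagonal) linear map, $\det J_\theta$ equals the paper's determinant up to sign, hence equals the nonvanishing product of cosines you conjectured; alternatively the continuant recursion you propose does collapse to that product (e.g.\ for $N=2$ one gets $2\cos(\theta_2-\theta_1)\cos(2\theta_2)$). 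Everything else --- the gauge-fixing via the PT constraint reducing the residual gauge to $\alpha\in\{0,\pi\}$, the count $2^{N-1}\cdot 2=2^N$, the $-2I$ amplitude block, and the derivation of the bound (\ref{bound-nonlocal}) from $r_n^2=1+\mathcal{O}(\epsilon)$ --- matches the paper and is sound.
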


\begin{proof}
We set $E = \frac{1}{\delta}$ and ${\bf w} = \frac{{\bf W}}{\sqrt{\delta}}$
for small positive $\delta$ and write the stationary dNLS equation
(\ref{dnls-nonlinear}) in the equivalent form:
\begin{equation}
\label{dnls-nonlinear-equivalent}
(1 - |W_n|^2) W_n = \delta \left( W_{n+1} + W_{n-1} + i \gamma (-1)^n W_n \right), \quad n \in S_N,
\end{equation}
subject to the Dirichlet boundary conditions $W_0 = W_{2N+1} = 0$. We consider a PT-symmetric
solution with ${\bf W} = P \bar{\bf W}$ such that the system can be closed
at $N$ algebraic equations for $1 \leq n \leq N$ subject to the reflection boundary condition
$W_{N+1} = \bar{W}_N$. Note that parameter $\gamma \in \R$ is fixed.\\

{\bf Case $N = 1$:} In this case, we only have one nonlinear algebraic equation to solve:
\begin{equation}
\label{dnls-nonlinear-one}
(1 - |W_1|^2) W_1 = \delta \left[ \bar{W}_{1} - i \gamma W_1 \right].
\end{equation}
Setting $W_1 = A_1^{1/2} e^{i \varphi_1}$, we separate the real and imaginary parts
of equation (\ref{dnls-nonlinear-one}) as follows:
$$
A_1 = 1 - \delta \cos(2 \varphi_1), \quad - \sin(2 \varphi_1) - \gamma = 0.
$$
For any $\gamma \in (-1,1)$, there exist two solutions for $\varphi$ in $[0,\pi]$
from the second equation written as $\sin(2 \varphi_1) = -\gamma$. For each $\varphi$,
we have a unique solution of the first equation written as $A_1 = 1 \mp \delta \sqrt{1 - \gamma^2}$, 
from which we see that $A_1 = 1 + \mathcal{O}(\delta)$ as $\delta \to 0$.\\

{\bf Case $N \geq 2$:}
Let us now unfold the degeneracy of the nonlinear algebraic system (\ref{dnls-nonlinear-equivalent}) in
the limit $\delta \to 0$ by using the transformation:
\begin{equation}
\label{parameter-1}
\left\{ \begin{array}{l} W_1 = A_1^{1/2} e^{i \varphi_1}, \\
W_2 = (A_1 A_2)^{1/2} e^{i \varphi_1 + i \varphi_2}, \\
W_3 = (A_1 A_2 A_3)^{1/2} e^{i \varphi_1 + i \varphi_2 + i \varphi_3}, \\
\vdots \\
W_{N} = (A_1 A_2 \cdots A_N)^{1/2} e^{i \varphi_1 + i \varphi_2 + \cdots + i \varphi_N}, \end{array} \right.
\end{equation}
where amplitudes $A_1$,$A_2$,...,$A_N$ and phases $\varphi_1$, $\varphi_2$, ..., $\varphi_N$ are all real.
After substitution and separation of real and imaginary parts, we obtain $N$ equations
for phases
\begin{equation}
\left\{ \begin{array}{l} A_2^{1/2} \sin(\varphi_2) - \gamma = 0, \\
A_3^{1/2} \sin(\varphi_3) - A_2^{-1/2} \sin(\varphi_2) + \gamma = 0, \\
A_4^{1/2} \sin(\varphi_4) - A_3^{-1/2} \sin(\varphi_3) - \gamma = 0, \\
\vdots \\
- \sin 2(\varphi_1 + \varphi_2 + \cdots + \varphi_N) - A_{N}^{-1/2} \sin(\varphi_{N}) + (-1)^N \gamma = 0,  \end{array} \right.
\label{parameter-2}
\end{equation}
and $N$ equations for amplitudes
\begin{equation}
\label{parameter-3}
\left\{ \begin{array}{l} 1 - A_1 = \delta A_2^{1/2} \cos(\varphi_2), \\
1 - A_1 A_2 = \delta (A_3^{1/2} \cos(\varphi_3) + A_2^{-1/2} \cos(\varphi_2)), \\
1 - A_1 A_2 A_3 = \delta (A_4^{1/2} \cos(\varphi_4) + A_3^{-1/2} \cos(\varphi_3)), \\
\vdots \\
1 - A_1 A_2 \cdots A_N = \delta (\cos 2(\varphi_1 + \varphi_2 + \cdots + \varphi_N) + A_{N}^{-1/2} \cos(\varphi_{N})).
\end{array} \right.
\end{equation}

For $\delta = 0$, the system of amplitude equations (\ref{parameter-3}) has a unique solution
at the point $A_1 = A_2 = \cdots = A_N = 1$. The vector field of the nonlinear system is smooth
with respect to $(A_1,A_2,\ldots,A_N)$ and $\delta$ near this point
for all $(\varphi_1,\varphi_2,\ldots,\varphi_N) \in \mathbb{T}^N$,
where $\mathbb{T}$ denotes the fundamental interval $[0,2\pi]$ subject to the periodic boundary conditions.
The Jacobian matrix with respect to $(A_1,A_2,\ldots,A_N)$
at this point has eigenvalue $1$ of geometric multiplicity one and algebraic multiplicity $N$.
By the Implicit Function Theorem, for all $(\varphi_1,\varphi_2,\ldots,\varphi_N) \in \mathbb{T}^N$
and small $\delta \in \R$, there is a unique solution of the nonlinear system (\ref{parameter-3})
such that the map $(\varphi_1,\varphi_2,\ldots,\varphi_N,\delta) \to (A_1,A_2,\ldots,A_N)$
is $C^{\infty}$ and there is a positive $\delta$-independent constant $C$ such that
\begin{equation}
\label{bound-1}
|A_1 - 1| + |A_2 - 1| + \cdots + |A_N - 1| \leq C |\delta|.
\end{equation}
Bound (\ref{bound-nonlocal}) follows from this bound and the scaling transformation. 

Now we consider the system of phase equations (\ref{parameter-2}),
which is $\delta$ independent. Nevertheless, it depends on $\delta$ via amplitudes
$(A_1,A_2,\ldots,A_N)$. For $\delta = 0$, the nonlinear system (\ref{parameter-2})
can be written in the explicit form:
\begin{equation}
\left\{ \begin{array}{l} \sin(\varphi_2) = \gamma, \\
\sin(\varphi_3) = - \gamma + \sin(\varphi_2) \equiv 0, \\
\sin(\varphi_4) = \gamma + \sin(\varphi_3) \equiv \gamma, \\
\vdots \\
\sin 2(\varphi_1 + \varphi_2 + \cdots + \varphi_N) = (-1)^N \gamma - \sin(\varphi_{N}).  \end{array} \right.
\label{parameter-4}
\end{equation}
Denote $\psi := 2(\varphi_1 + \varphi_2 + \cdots + \varphi_N)$.
For any $\gamma \in (-1,1)$, there are $2^N$ possible solutions
of (\ref{parameter-4}) for $(\psi,\varphi_2,\ldots,\varphi_N) \in \mathbb{T}^N$,
depending on the binary choice of the roots of the sinusoidal functions on the fundamental period.
Because $\varphi_1 = \frac{\psi}{2} - \varphi_2 - \cdots - \varphi_N$, there are
actually four solutions for $\varphi_1$ in $\mathbb{T}$, however, the solutions
with $\varphi_1 \in (\pi,2\pi]$ are reducible to the solutions with $\varphi_1 \in (0,\pi]$
by the transformation ${\bf W} \to -{\bf W}$, which is a particular case of
the gauge transformation. In what follows, we only consider the two possible
solutions for $\varphi_1$ in $[0,\pi]$.

The vector field of the nonlinear system (\ref{parameter-2}) with  $(A_1,A_2,\ldots,A_N)$
obtained from the nonlinear system (\ref{parameter-3}) is smooth
with respect to $(\varphi_1,\varphi_2,\ldots,\varphi_N)$ and $\delta$.
The Jacobian matrix with respect to $(\varphi_1,\varphi_2,\ldots,\varphi_N)$
for $\delta = 0$ is given by the matrix
$$
\left[ \begin{array}{cccccc} 0 & \cos(\varphi_2) & 0 & 0 & \cdots & 0 \\
0 & -\cos(\varphi_2) & \cos(\varphi_3) & 0 & \cdots & 0 \\
0 & 0 & -\cos(\varphi_3) & \cos(\varphi_4) & \cdots & 0 \\
\vdots & \vdots & \vdots & \cdots & \ddots & \vdots \\
-2 \cos(\psi) & -2 \cos(\psi)
& -2 \cos(\psi) & -2 \cos(\psi) & \cdots & -\cos(\varphi_N)
\end{array} \right],
$$
Now it is clear that $\cos(\varphi_n) \neq 0$ for all $2 \leq n \leq N$
if $\gamma \in (-1,1)$. In addition, the last equation
in the system (\ref{parameter-4}) is given by either $\sin(\psi) = 0$
if $N$ is even or $\sin(\psi) = \gamma$ if $N$ is odd. In either case,
$\cos(\psi) \neq 0$ if $\gamma \in (-1,1)$. Hence, the Jacobian matrix
is invertible if $\gamma \in (-1,1)$. By the Implicit Function Theorem,
for all small $\delta \in \R$, there is a unique continuation of
any of the $2^N$ possible solutions $(\varphi_1^*,\varphi_2^*,\ldots,\varphi_N^*)$
of the nonlinear system (\ref{parameter-4})
as a solution of the nonlinear system (\ref{parameter-2})
such that the map $\delta \to (\varphi_1,\varphi_2,\ldots,\varphi_N)$
is $C^{\infty}$ and there is a positive $\delta$-independent constant $C$ such that
\begin{equation}
\label{bound-2}
|\varphi_1 - \varphi_1^*| + |\varphi_2 - \varphi_2^*| + \cdots + |\varphi_N - \varphi_N^*| \leq C |\delta|.
\end{equation}
This completes the proof of the theorem.
\end{proof}

\begin{remark}
The number of solution branches grows as $N \to \infty$ for any fixed value of $\gamma$ in
the interval $(-1,1)$. However, all these solution branches are delocalized
in the sense that $|w_n|^2 \approx E$ as $E \to \infty$ for all $n$ in $S_N$. Therefore,
none of the solution branches of Theorem \ref{theorem-nonlocal-bifurcation}
approach to a localized state (discrete soliton) as $N \to \infty$.
\end{remark}

\begin{remark}
Besides solution branches of Theorem \ref{theorem-nonlocal-bifurcation},
for any $N \geq 2$ and $1 \leq M \leq N$, there exist additional solution
branches such that $|w_n|^2 \approx E$ as $E \to \infty$ for $N-M+1 \leq n \leq N+M$
and $|w_n|^2 \approx 0$ as $E \to \infty$ for $1 \leq n \leq N-M$ and $N+M+1 \leq n \leq 2N$.
These stationary states are supported at $2M$ sites near the central sites in $S_N$
and their persistence is proved with a similar variant of the implicit function theorem
(see the proof of Theorem \ref{theorem-soliton} below).
If $N \to \infty$, such stationary states approach to a localized state (discrete soliton). 
Note that the discrete solitons are unstable on the unbounded lattice because the
continuous spectrum of the linearized dNLS equation (\ref{dnls-linear}) is complex for any $\gamma \neq 0$, recall that
$\gamma_N \to 0$ as $N \to \infty$. \label{remark-other-branches}
\end{remark}

\begin{remark}
The arguments of the implicit function theorem can not be applied to construct 
solution branches which are centered anywhere but at the central sites in the set $S_N$.
Indeed, the numerical results below show that no such solution branches exist for large
values of $E$.
\label{remark-numerical-branches}
\end{remark}

\subsection{Numerical results}

We shall construct here the simplest nonlinear stationary states for $N = 1, 2, 3$.
For $N = 1$, this corresponds to the nonlinear dimer, where the solution branches can
be obtained analytically, as in \cite{Li,Ramezani,Sukhorukov}. For $N = 2$, this corresponds
to the nonlinear quadrimer and the solution branches can be at best approximated numerically
\cite{Li,ZK}. For $N = 3$, the numerical approximations of the nonlinear stationary states
are added here for the first time.

For $N = 1$, we use the reduction $w_2 = \bar{w}_1$ and write $w_1 = A e^{-i \varphi}$
with real $A$ and $\varphi$. Then, the stationary PT-dNLS equation (\ref{dnls-nonlinear}) yields
two equations
\begin{equation}
\label{N_1}
\sin(2 \varphi) = \gamma, \quad A^2 = E - \cos(2 \varphi).
\end{equation}
With two solutions of the first equation for $\varphi \in [0,\pi]$, we obtain two solution branches
\begin{equation}
\label{solution_N_1}
A^2_{\pm} = E \mp \sqrt{1 - \gamma^2}, \quad \gamma \in (-\gamma_1,\gamma_1),
\end{equation}
where $\gamma_1 = 1$. The two solution branches coalesce into one branch for
$\gamma = \gamma_1$ and disappear via a saddle-center bifurcation for $\gamma > \gamma_1$.

Positivity of $A^2_{\pm}$ shows that $E > E_{\pm} = \pm \sqrt{1 - \gamma^2}$,
where $E_{\pm}$ are the simple eigenvalues of the linear stationary PT-dNLS equation for $\gamma \in (-\gamma_1,\gamma_1)$.
We note that $A^2_{\pm} \to 0$ as $E \to E_{\pm}$ and that $A^2_{\pm} \sim E$ as $E \to \infty$.
These analytical results clearly illustrate the bifurcation results in Theorems
\ref{theorem-local-bifurcation} and \ref{theorem-nonlocal-bifurcation}.

\begin{figure}[tbp]
\begin{center}
\includegraphics[width=50mm,height=40mm]{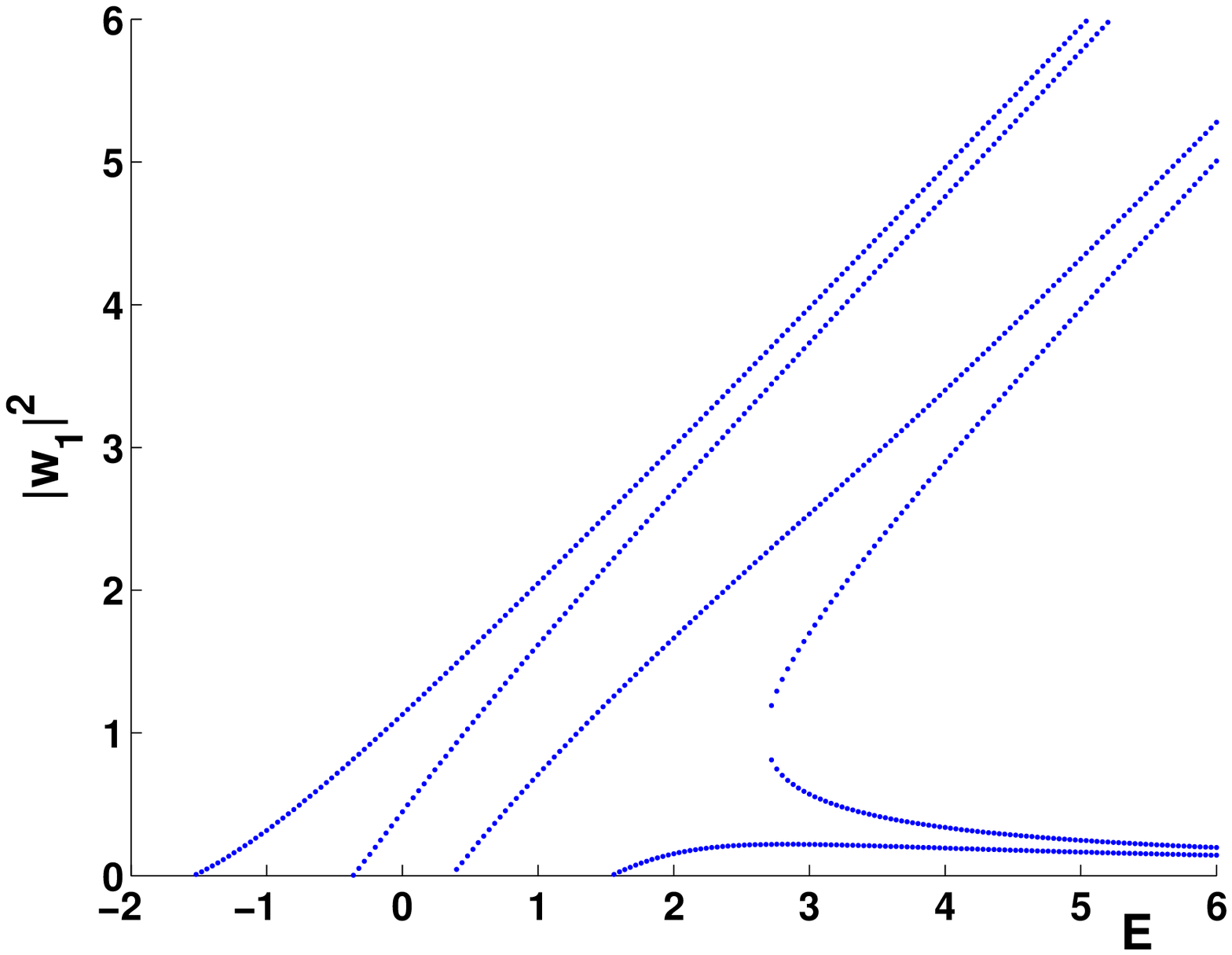}
\includegraphics[width=50mm,height=40mm]{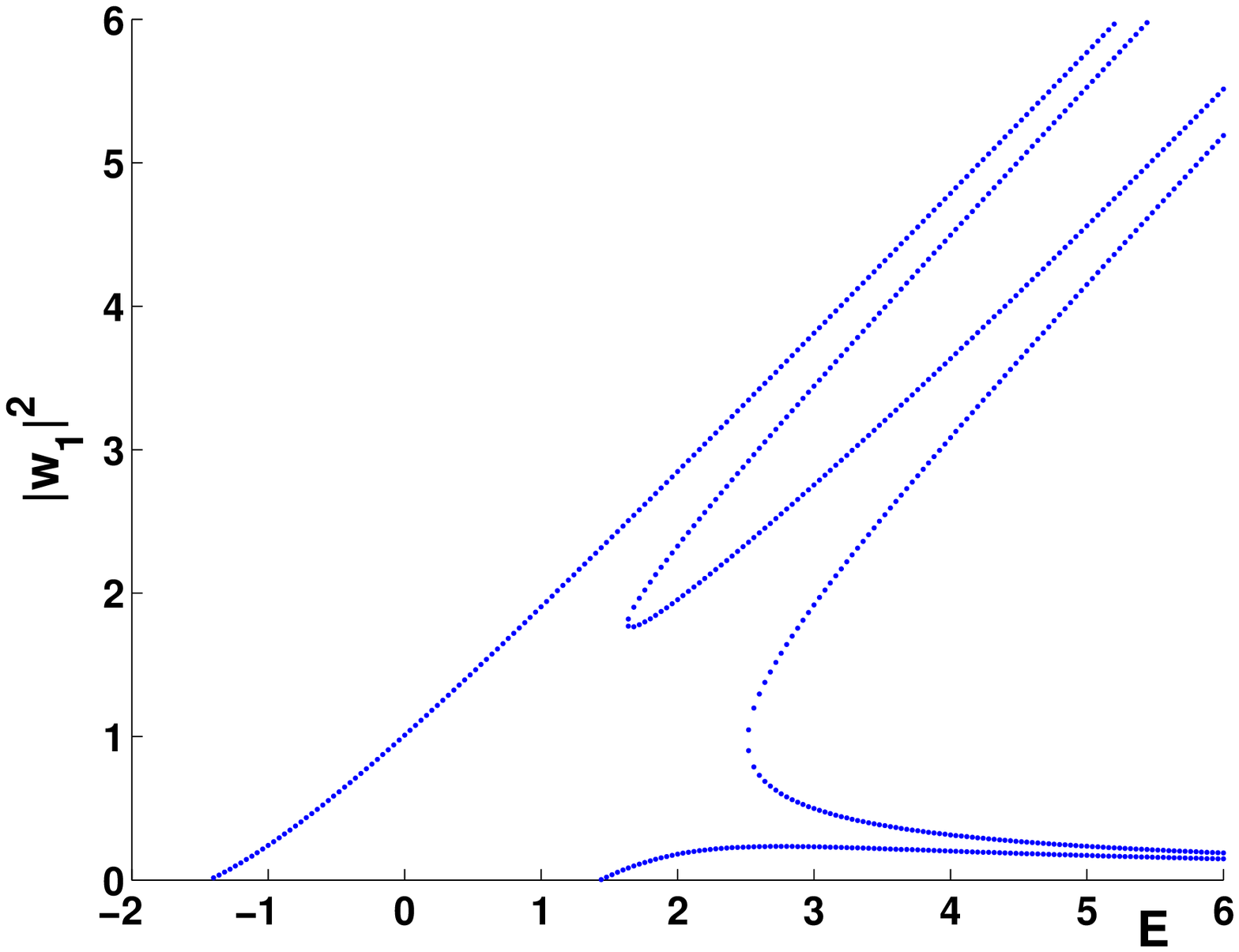}
\includegraphics[width=50mm,height=40mm]{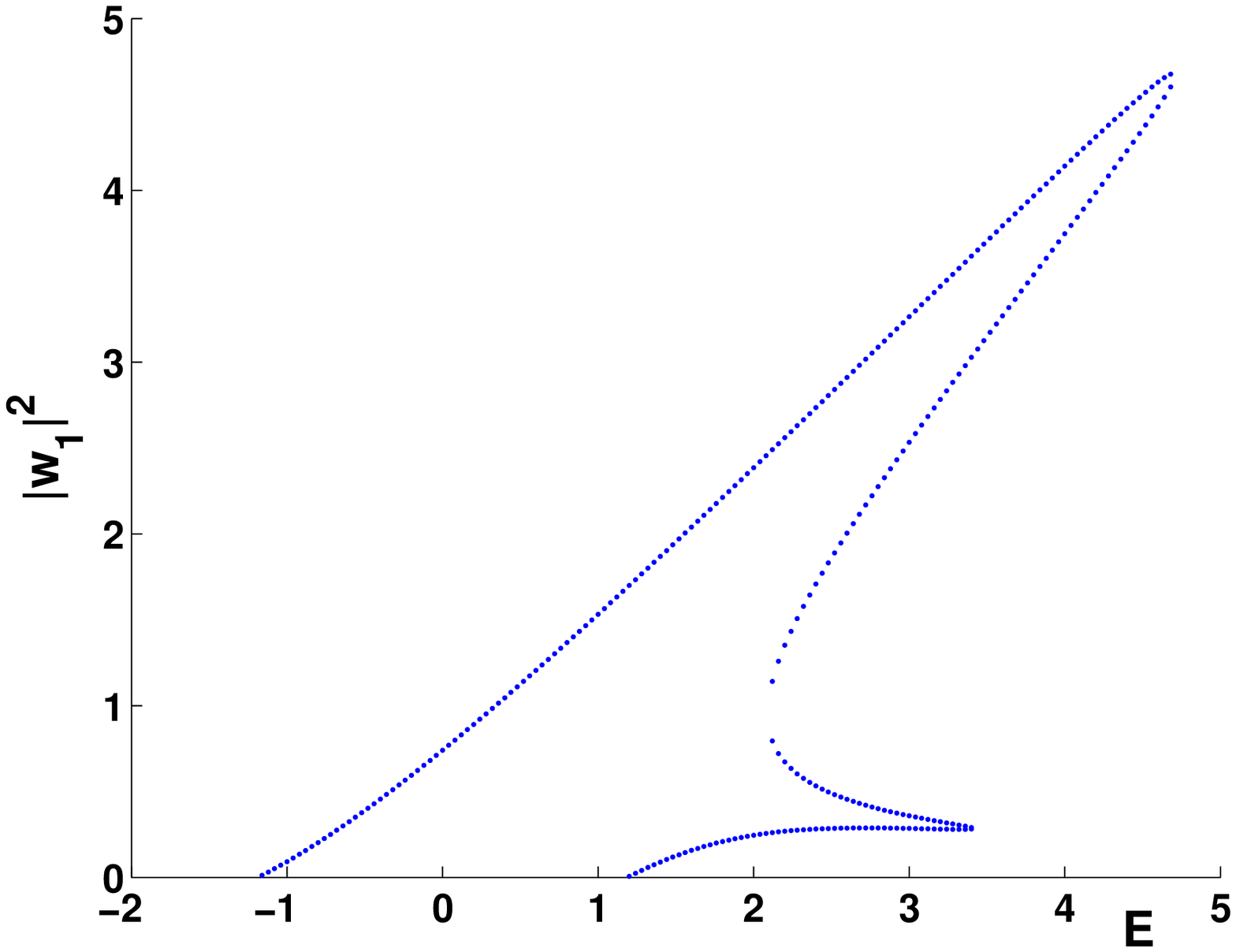} \\
\includegraphics[width=50mm,height=40mm]{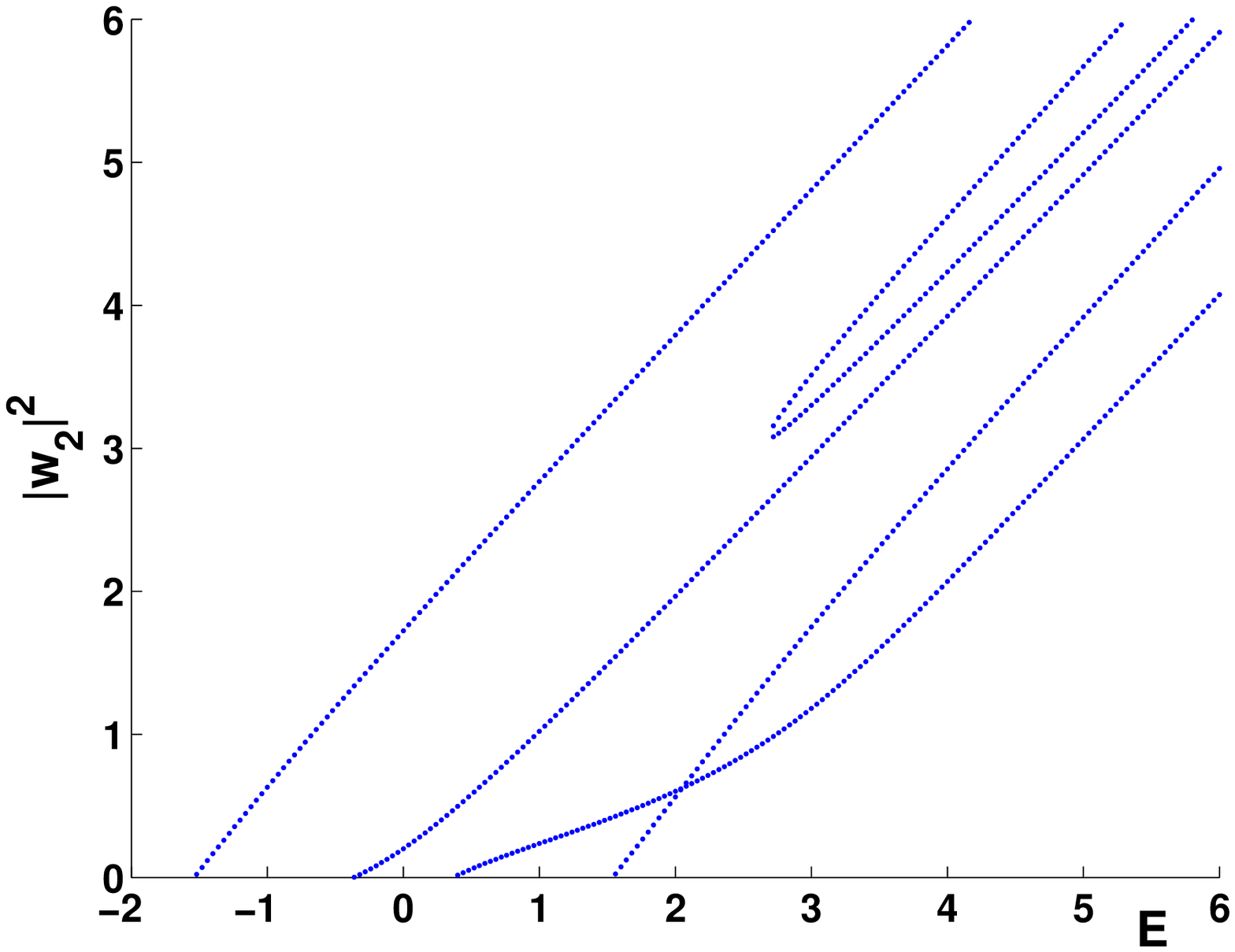}
\includegraphics[width=50mm,height=40mm]{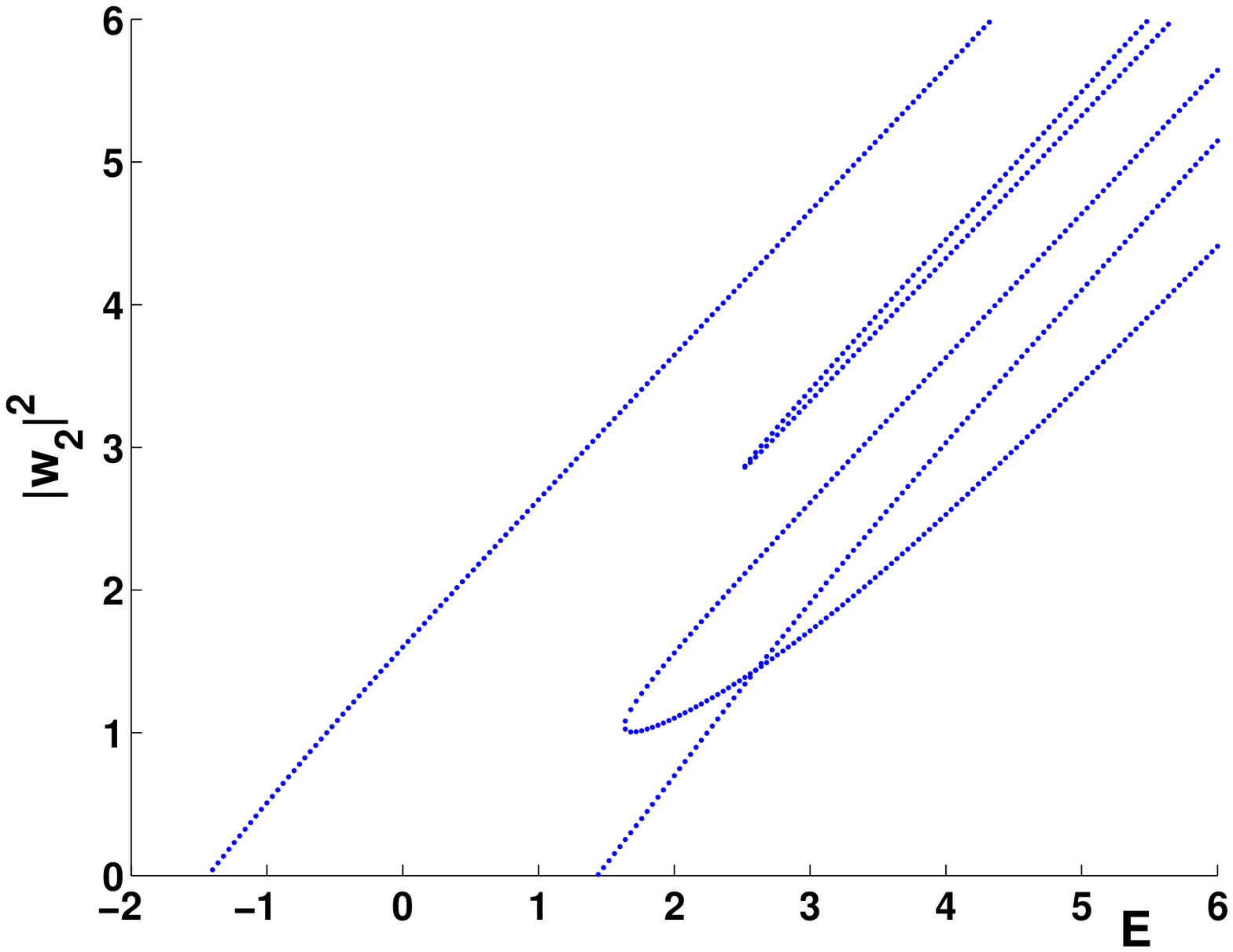}
\includegraphics[width=50mm,height=40mm]{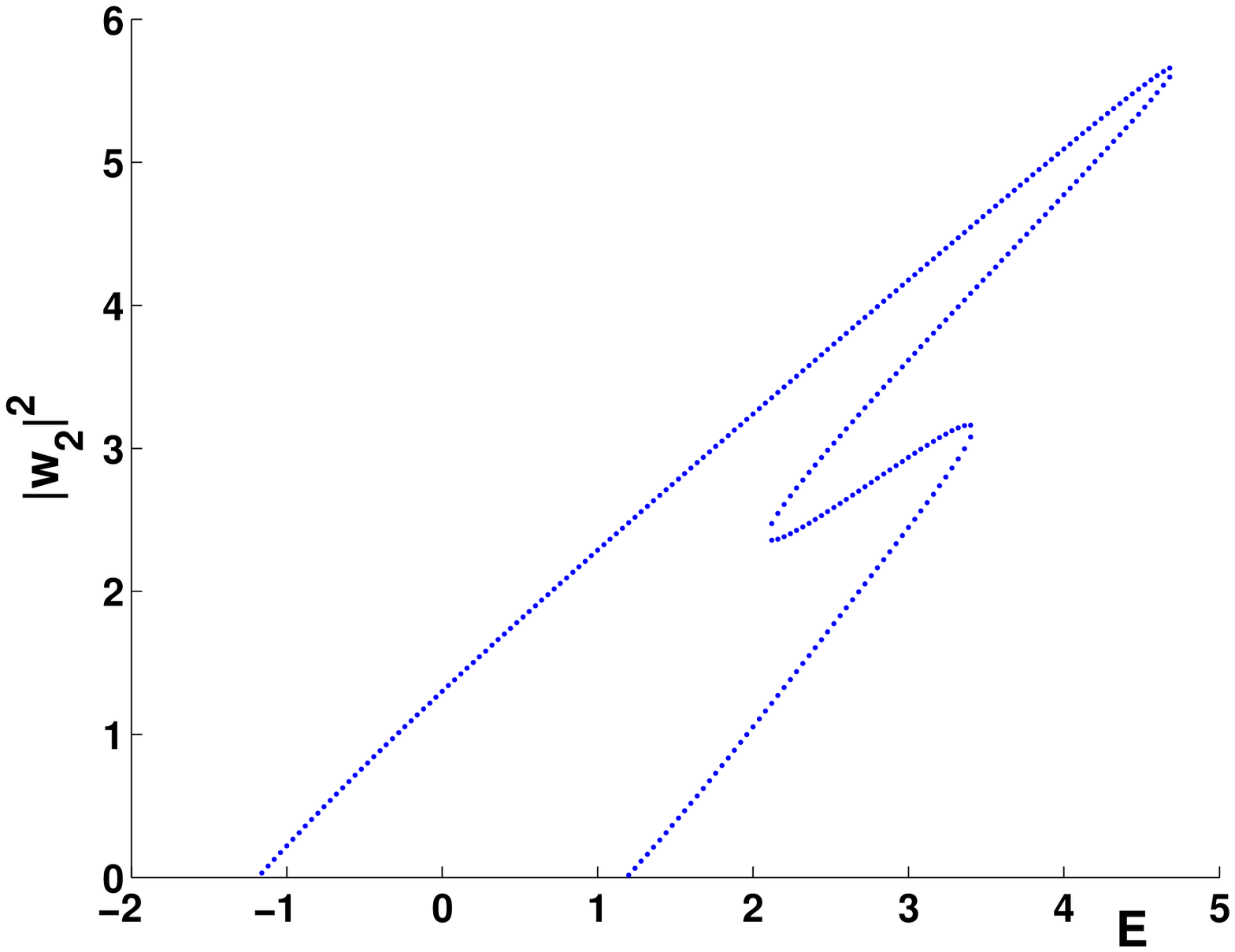}
\end{center}
\caption{Nonlinear stationary states for $N = 2$ and for
$\gamma = 0.5$ (left), $\gamma = 0.75$ (middle), and $\gamma = 1.1$ (right).
The top and bottom rows show components $|w_1|^2$ and $|w_2|^2$ respectively.}
\label{fig1}
\end{figure}

\begin{figure}[tbp]
\begin{center}
\includegraphics[width=50mm,height=40mm]{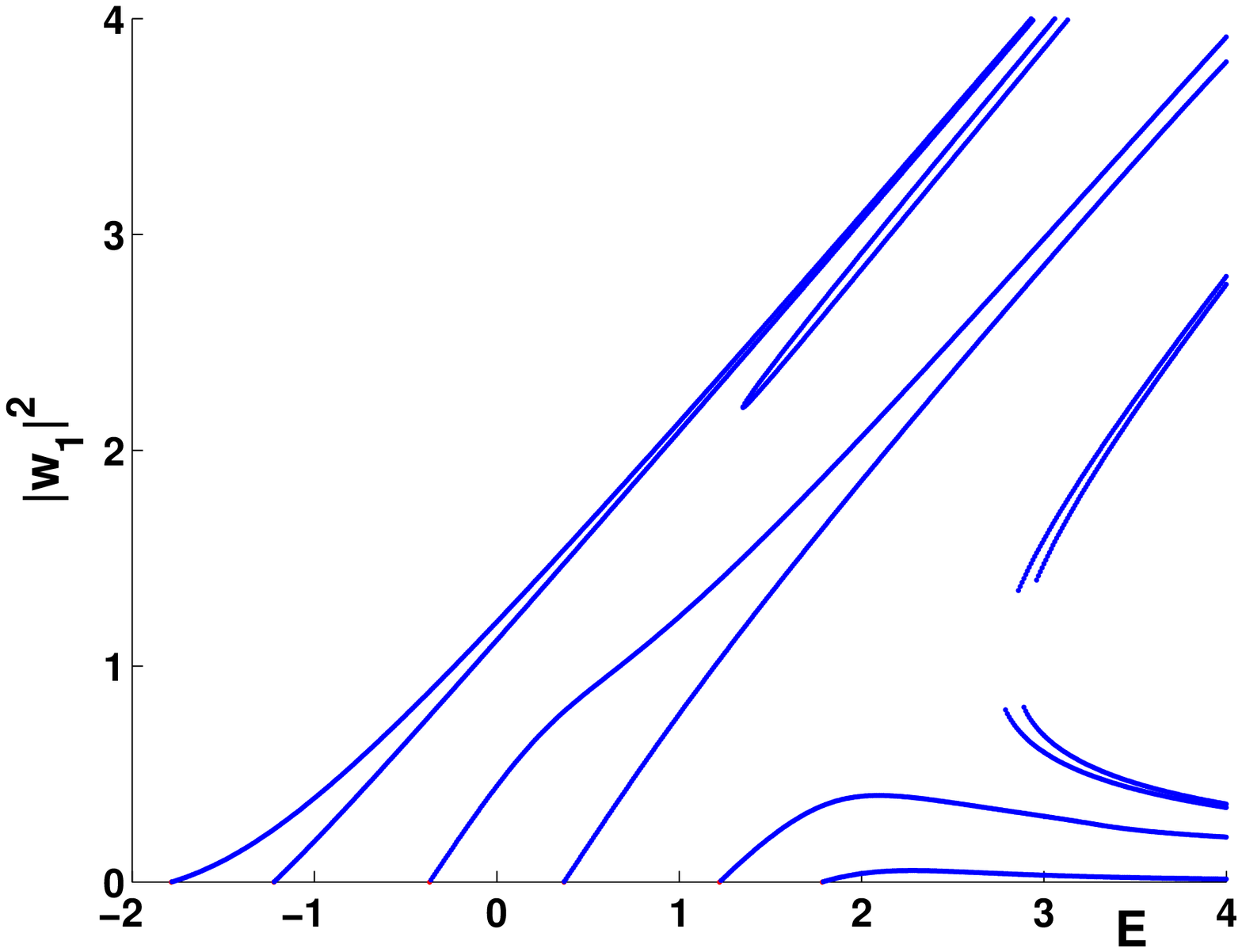}
\includegraphics[width=50mm,height=40mm]{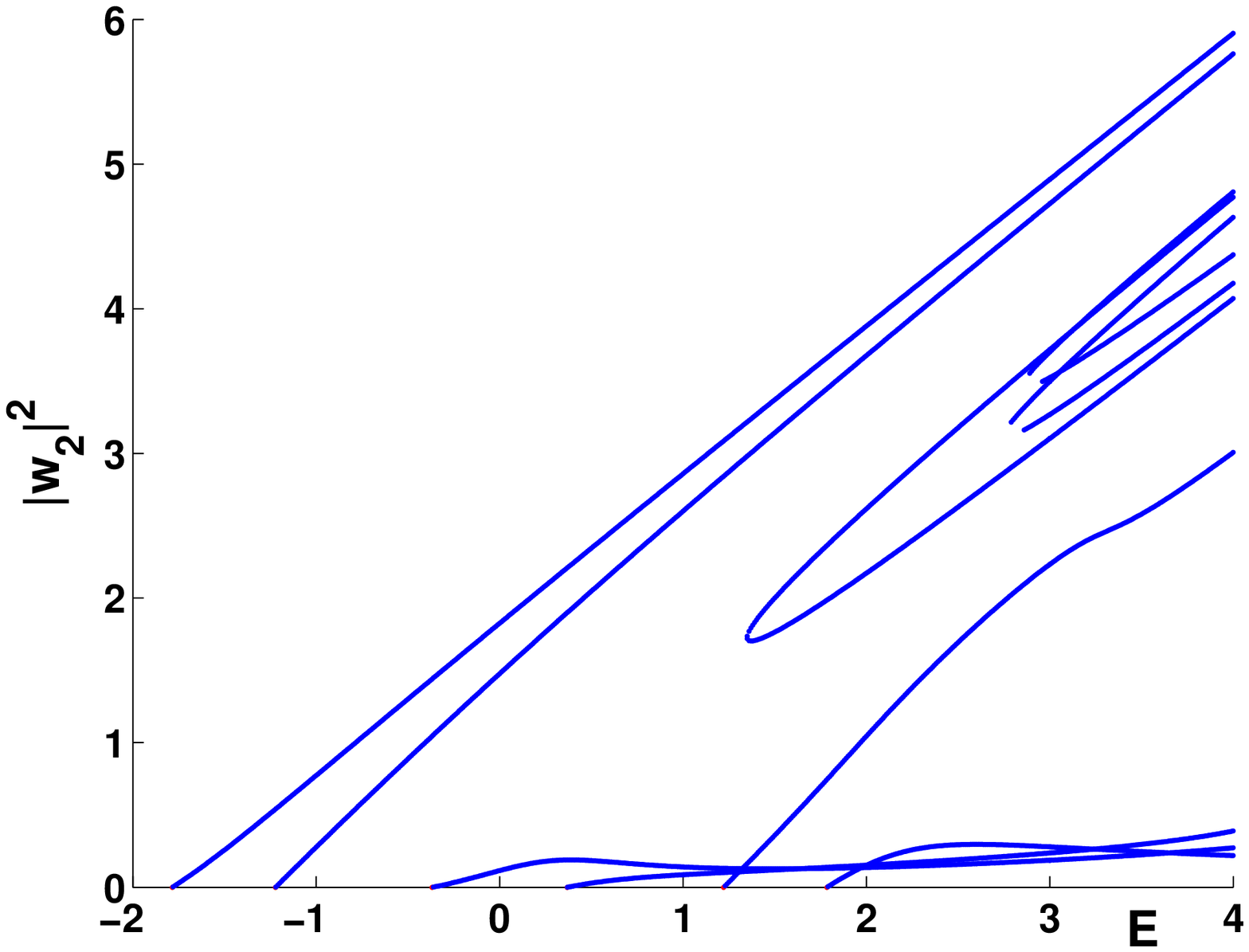}
\includegraphics[width=50mm,height=40mm]{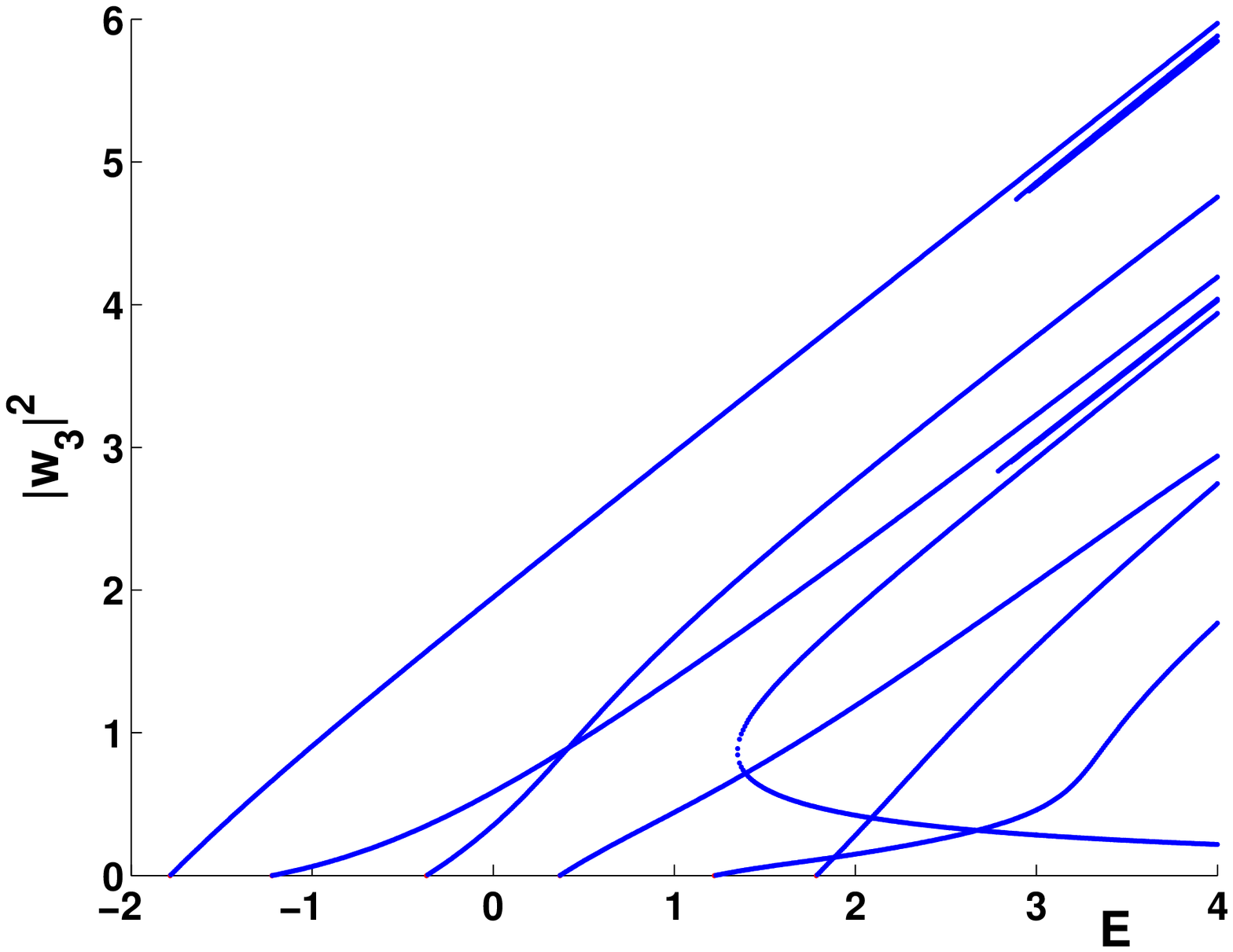} \\
\includegraphics[width=50mm,height=40mm]{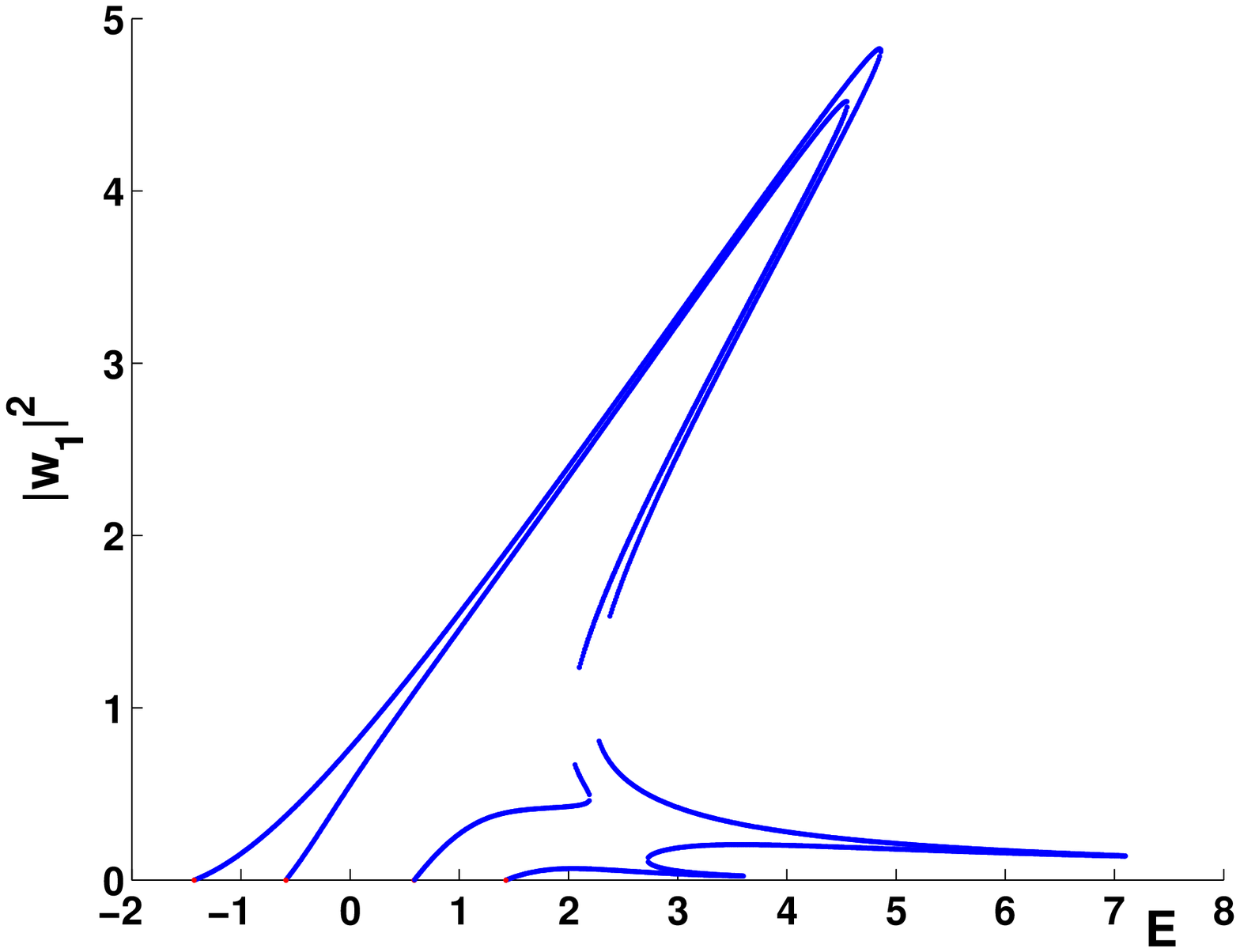}
\includegraphics[width=50mm,height=40mm]{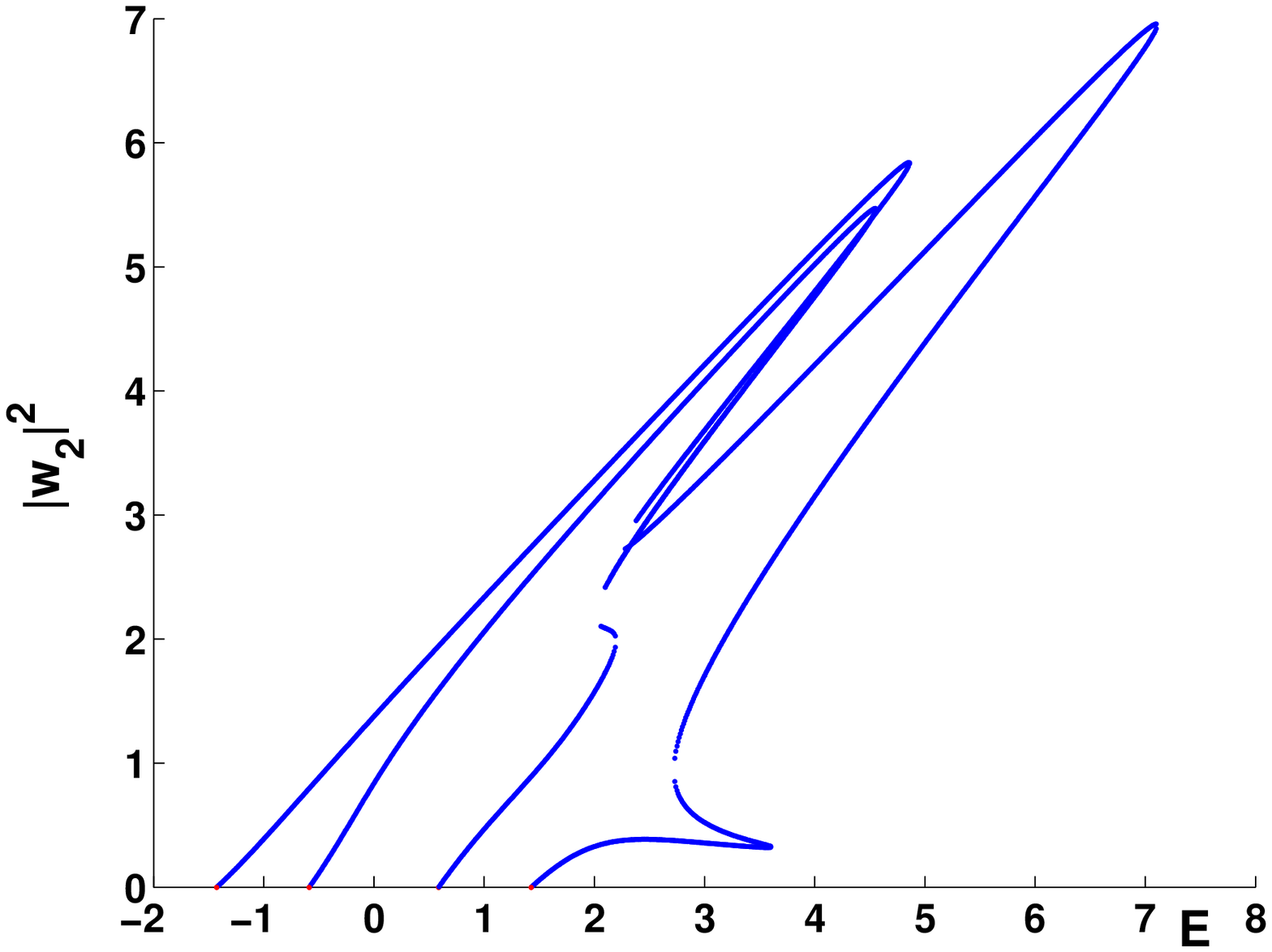}
\includegraphics[width=50mm,height=40mm]{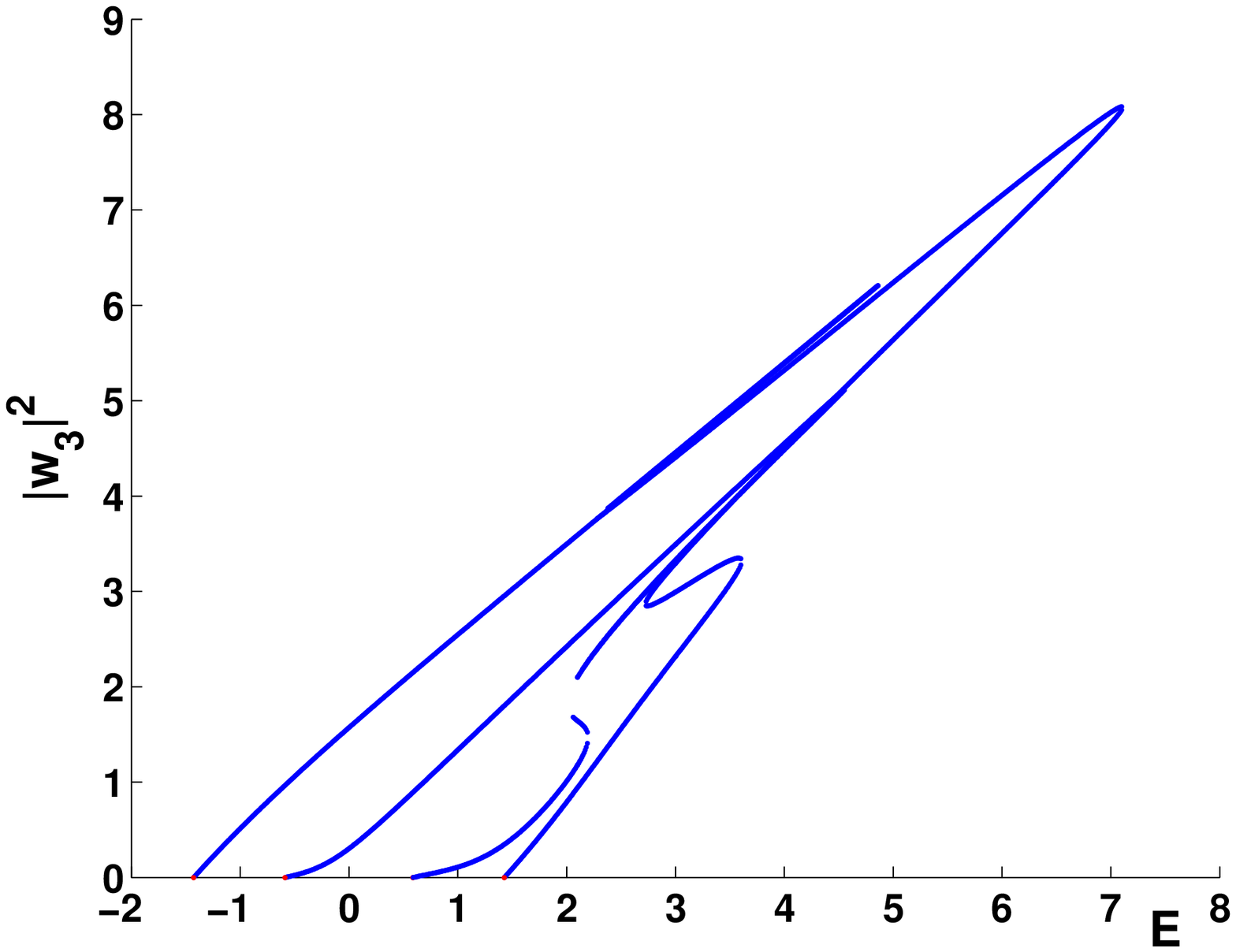}
\end{center}
\caption{Nonlinear stationary states for $N = 3$, and for
$\gamma = 0.25$ (top) and $\gamma = 1.1$ (bottom). The left, middle, and right columns
show components $|w_1|^2$, $|w_2|^2$, and $|w_3|^2$ respectively.}
\label{fig2}
\end{figure}

For $N = 2$, we use the reduction $w_4 = \bar{w}_1$ and $w_3 = \bar{w}_2$.
Writing $w_1 = A e^{-i \varphi - i \psi}$ and $w_2 = B e^{-i \psi}$
with real $A$, $B$, $\varphi$, and $\psi$ in the nonlinear stationary dNLS equation
(\ref{dnls-nonlinear}), we obtain the system of nonlinear equations:
\begin{equation}
\label{N_2}
\sin(\varphi) = \frac{\gamma A}{B}, \quad \sin(2 \psi) = \frac{\gamma (A^2-B^2)}{B^2}
\end{equation}
and
\begin{equation}
\label{N_2-amplitudes}
A^3 = A E - B \cos(\varphi), \quad B^3 = B E - A \cos(\varphi) - B \cos(2 \psi).
\end{equation}
The roots of the algebraic system (\ref{N_2}) and (\ref{N_2-amplitudes}) can be investigated numerically
and the results depend on the value of $\gamma$.
Figure 1 shows the solution branches on the $(E,A^2)$-plane (top) and the $(E,B^2)$-plane (bottom)
for $\gamma = 0.5 < \gamma_2 \approx 0.618$ (left),
$\gamma = 0.75$ (middle), and $\gamma = 1.1$. Note that no solution branches exist
for $\gamma > \gamma_2^* := 2 \cos \frac{\pi}{5} \approx 1.618$, because no simple real eigenvalues
occur in the linearized dNLS equation (\ref{dnls-linear}) for these values of $\gamma$.

According to Theorem \ref{theorem-local-bifurcation}, we count exactly four ($2N=4$) solution
branches for small amplitudes $A$ and $B$ for $\gamma < \gamma_2$
and exactly two small solution branches for $\gamma_2 < \gamma < \gamma_2^*$.
According to Theorem \ref{theorem-nonlocal-bifurcation}, we count exactly
four ($2^{N=2} = 4$) solution branches for large amplitudes $A$ and $B$ if $\gamma < 1$,
whereas all solution branches terminate before reaching large amplitudes
if $\gamma > 1$.

The two branches for small values of $A$ and large values of $E$ are attributed to the solutions
in Remark \ref{remark-other-branches}. The corresponding values of $B$ are large. On
the other hand, no branches exist for large $A$ and small $B$ as $E$ gets large,
see Remark \ref{remark-numerical-branches}.

For $N = 3$, we write $w_1 = A e^{-i(\varphi + \psi + \theta)} = \bar{w}_6$,
$w_2 = B e^{-i(\psi + \theta)} = \bar{w}_5$, and $w_3 = C e^{-i \theta} = \bar{w}_4$.
The roots of the resulting algebraic system are investigated numerically
by a homotopy method and the results are shown on Figure \ref{fig2}.
We count six ($2N = 6$) branches in the small-amplitude limit if $\gamma < \gamma_3 \approx 0.445$
and four branches if $\gamma_3 < \gamma < \gamma^*_3 ; =2 \cos \frac{2\pi}{7} \approx 1.247$.
We also count eight ($2^{N=3} = 8$) branches
for large amplitudes $A^2$ if $\gamma < 1$ and no branches for large amplitudes $A^2$
if $\gamma > 1$. More branches are counted for large amplitudes $B^2$ and even more branches
for large amplitudes $C^2$. Overall, the results for $N = 3$ are similar to the
results for $N = 2$.

\section{PT-symmetric defects embedded in infinite PT-dNLS lattices}

We shall now consider an infinite PT-dNLS
lattice, where the particular emphasis is
on the existence and stability of localized stationary states (discrete solitons).
Because the phase transition in the PT-dNLS equation (\ref{dnls}) on the infinite lattice occurs already at
$\gamma_{N \to \infty} = 0$, there is no way to obtain stable discrete solitons
in such systems with extended gain and loss \cite{Pelin2}. Therefore, we modify the PT-dNLS lattice by considering
the PT-symmetric potential as a finite-size defect. Such defects were
considered recently in the physical literature  \cite{Sukh} and \cite{lepri,Ambroise}.

Let $N$ be a positive integer and $S_N := \{ 1,2,...,2N\}$ be the sites of
the lattice, where the PT-symmetric defects are placed. The model takes the form
\begin{eqnarray}
i \frac{d u_n}{dt} = u_{n+1} - 2 u_n + u_{n-1}
+ i \gamma (-1)^n \chi_{n \in S_N} u_n + |u_n|^2 u_n, \quad n \in \Z,
\label{eqn0}
\end{eqnarray}
where $\chi_{n \in S_N}$ is a characteristic function for the set $S_N$.
When $N = 1$, the PT-dNLS
equation (\ref{eqn0}) corresponds to the embedded dimer
in the infinite PT-dNLS lattice. When $N = 2$, it corresponds to the embedded quadrimer,
and so on.

We study the linearized PT-dNLS equation and find the phase transition threshold $\tilde{\gamma}_N$.
It is quite remarkable that $\tilde{\gamma}_N > \gamma_N$ for any $N \in \N$, in
particular, $\tilde{\gamma}_1 = \sqrt{2}$. Nevertheless, $\tilde{\gamma}_N$ is still
a monotonically decreasing sequence of $N$ such that
$\tilde{\gamma}_N \to 0$ as $N \to \infty$.

Then, we employ the large-amplitude (anti-continuum) limit of the PT-dNLS equation (\ref{eqn0}) to study
the existence of discrete solitons supported at the PT-symmetric defect $S_N$.
For recent results on existence of discrete solitons in the anti-continuum limit for the regular dNLS equation
(in the absence of PT-symmetry), see e.g. \cite{Alfimov,CPS}.
We find that for all $\gamma \in (-1,1)$, $2^N$ branches of the discrete solitons
exist in this limit, for which $|u_n|^2$ is large for all $n \in S_N$.

The existence and stability of discrete solitons is illustrated numerically and
we show that the stable branches of the discrete solitons for $\gamma \neq 0$ 
originate from the stable branches in the Hamiltonian version ($\gamma = 0$)
of the dNLS equation \cite{pkf,PelSak}.

\subsection{Eigenvalues of the linear PT-dNLS equation}

We consider the linear stationary PT-dNLS equation:
\begin{equation}
\label{dnls-linear-defect}
E w_n = w_{n+1} + w_{n-1} + i \gamma (-1)^n \chi_{n \in S_N} w_n, \quad n \in \Z.
\end{equation}
Because the PT-symmetric potential is compact, the continuous spectrum of the linear PT-dNLS equation
(\ref{dnls-linear-defect}) is located for $E \in [-2, 2]$. Besides the continuous spectrum, isolated eigenvalues
may exist outside the continuous spectrum. To characterize isolated eigenvalues, we introduce a parametrization
\begin{equation}
\label{theta}
E := 2 \cos \theta, \quad {\rm Re}(\theta) \in [-\pi,\pi], \quad {\rm Im}(\theta) > 0,
\end{equation}
and look for exponentially decaying solutions of the linear PT-dNLS equation (\ref{dnls-linear-defect}) in the form:
\begin{equation}
w_n = \left\{ \begin{array}{lc} w_1 e^{-i \theta (n-1)}, \quad & n \leq 1, \\
w_{2N} e^{i \theta (n-2N)}, \quad & \;\; n \geq 2N, \end{array} \right.
\end{equation}
which still leaves a set of $2N$ unknown variables $\{ w_n \}_{n \in S_N}$.
To find $\{ w_n \}_{n \in S_N}$, we close the linear eigenvalue problem at the
algebraic system
\begin{equation}
\label{dnls-linear-closed}
2 \cos \theta w_n = w_{n+1} + w_{n-1} + i \gamma (-1)^n w_n, \quad n \in S_N,
\end{equation}
subject to the boundary conditions
$$
w_0 = w_1 e^{i \theta}, \quad w_{2N+1} = w_{2N} e^{i \theta}.
$$

Note that each eigenvalue $E$ is complex if ${\rm Im}(\theta) > 0$ and 
there exists a complex conjugate eigenvalue $\bar{E}$ by the PT-symmetry 
(see also Remark \ref{remark-eigenvalue}).
The following result is similar to the result of Theorem \ref{theorem-linear-PT}.

\begin{theorem}
A new symmetric pair of complex-conjugate eigenvalues 
of the linear PT-dNLS equation (\ref{dnls-linear-defect}) bifurcates
at $|\gamma| = \gamma_{N,k}$, where
\begin{equation}
\label{gamma-N-k}
\gamma_{N,k} := 2 \cos\frac{\pi (2k-1)}{4N}, \quad 1 \leq k \leq N,
\end{equation} 
and persists for $|\gamma| > \gamma_{N,k}$ except possibly finitely many points on any 
compact interval of $\gamma$, where the pair coalesces into 
a double (semi-simple) pair of real eigenvalues. 
In particular, no complex eigenvalues exist for
$\gamma \in (-\tilde{\gamma}_N,\tilde{\gamma}_N)$, where
\begin{equation}
\tilde{\gamma}_N := \gamma_{N,k} = 2 \cos \frac{\pi (2N-1)}{4 N}.
\end{equation}
\label{theorem-linear-defect}
\end{theorem}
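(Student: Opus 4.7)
The plan is to mirror the approach of Theorem~\ref{theorem-linear-PT}, replacing Dirichlet boundary conditions by the ``transfer'' conditions imposed by the outer decaying ansatz. After this ansatz, the infinite eigenvalue problem (\ref{dnls-linear-defect}) reduces to the closed algebraic system (\ref{dnls-linear-closed}) on $S_N$ with boundary conditions $w_0 = w_1 e^{i\theta}$ and $w_{2N+1} = w_{2N} e^{i\theta}$. Splitting $x_k := w_{2k-1}$, $y_k := w_{2k}$ and eliminating $\{y_k\}$ exactly as in Theorem~\ref{theorem-linear-PT} yields the same second-order recurrence $(E^2 + \gamma^2) x_k = x_{k-1} + 2 x_k + x_{k+1}$ for $1 \leq k \leq N$, but now supplemented by the $\theta$-dependent boundary conditions
\[
x_0 = e^{i\theta}(e^{i\theta} - i\gamma)\, x_1, \qquad x_{N+1} = \frac{e^{i\theta}}{e^{-i\theta} - i\gamma}\, x_N.
\]

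Writing $x_k = A e^{ik\phi} + B e^{-ik\phi}$ with $E^2 + \gamma^2 = 2 + 2\cos\phi$ and substituting into the two boundary conditions gives a $2\times 2$ homogeneous system for $(A,B)$; its determinant vanishing is the characteristic equation $D(\phi,\theta,\gamma) = 0$. Complex eigenvalues $E = 2\cos\theta$ with $\mathrm{Im}\,\theta > 0$ correspond to roots of $D$, and by PT-symmetry they occur as complex-conjugate pairs. A new pair bifurcates precisely when an eigenvalue emerges from the edge of the continuous spectrum $[-2,2]$; using the $E \leftrightarrow -E$ symmetry of the recurrence (which involves $E$ only through $E^2 + \gamma^2$) together with the explicit $N=1$ case, this collision occurs at $E = 0$, that is, at $\theta = \pi/2$.

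Setting $\theta = \pi/2$ specializes the boundary conditions to $x_0 = (\gamma - 1) x_1$ and $x_{N+1} = -(1+\gamma)^{-1} x_N$, while the dispersion relation forces $\gamma = 2\cos(\phi/2)$. The key factorizations
\[
1 - (\gamma-1) e^{i\phi} = (1 + e^{i\phi})(1 - e^{i\phi/2}), \qquad (1+\gamma) e^{i\phi} + 1 = (1 + e^{i\phi})(1 + e^{i\phi/2})
\]
(and their complex conjugates) allow the common factor $(1+e^{\pm i\phi})$ to cancel from both boundary equations, and the resulting characteristic determinant simplifies via $(1-e^{-i\phi/2})(1+e^{i\phi/2}) = 2i\sin(\phi/2) = -(1-e^{i\phi/2})(1+e^{-i\phi/2})$ to the compact form $e^{2iN\phi} = -1$. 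Its solutions in $(0,\pi)$ are $\phi_k = \frac{\pi(2k-1)}{2N}$, $1 \leq k \leq N$, which yield exactly the claimed bifurcation values $\gamma_{N,k} = 2\cos(\phi_k/2) = 2\cos\frac{\pi(2k-1)}{4N}$, the smallest of which (at $k=N$) is $\tilde{\gamma}_N$.

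Persistence of the complex-conjugate pair for $|\gamma| > \gamma_{N,k}$, with at most finitely many subsequent real-axis collisions on any compact $\gamma$-interval, follows from the implicit function theorem applied to $D(\phi,\theta,\gamma) = 0$ at each bifurcation point together with analyticity of $D$ in $\gamma$. The main obstacle is the algebraic reduction of the characteristic determinant at $\theta = \pi/2$: the two boundary factors $e^{i\theta}(e^{i\theta} - i\gamma)$ and $e^{i\theta}/(e^{-i\theta} - i\gamma)$ are not a priori symmetric, and the clean collapse to $e^{2iN\phi} = -1$ emerges only after the identities above. A secondary delicate point is verifying the bifurcation direction, so that complex eigenvalues indeed appear for $|\gamma|$ above rather than below $\gamma_{N,k}$; this I would handle by a local expansion of the eigenvalue trajectory near each $\gamma_{N,k}$, using the explicit relation $e^{-2i\theta} = 1 - \gamma^2$ from the $N=1$ computation as a template.
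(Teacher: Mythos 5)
Your reduction is the same as the paper's: eliminate the even-indexed sites to get the recurrence $(\gamma^2+4\cos^2\theta)x_k = x_{k-1}+2x_k+x_{k+1}$ with the boundary conditions $x_0=e^{i\theta}(e^{i\theta}-i\gamma)x_1$ and $x_{N+1}=e^{i\theta}x_N/(e^{-i\theta}-i\gamma)$, then a plane-wave ansatz and a $2\times 2$ characteristic determinant. Your factorizations at $\theta=\pi/2$ are correct and give a slightly cleaner route to $\cos(N\phi)=0$ than the paper, which first derives the general characteristic equation $e^{2i\theta}=\cos\left(N+\tfrac12\right)\alpha/\cos\left(N-\tfrac12\right)\alpha$ (after discarding a spurious factor $\gamma+2i\cos\theta$ introduced by the elimination step) and only then sets $e^{2i\theta}=-1$.

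The genuine gap is the step where you locate the onset of the bifurcations. You assert that a new complex pair can only emerge at $E=0$, justified by the $E\leftrightarrow -E$ symmetry and the $N=1$ example; you also describe this as emergence "from the edge of the continuous spectrum," which is not what happens ($E=0$ is the midpoint of the band $[-2,2]$, not an edge, and the symmetry argument cannot by itself exclude onsets at $E=\pm2$ or at other real $E$). Without closing this, the list $\gamma_{N,k}$ is not shown to be exhaustive, and the key conclusion that no complex eigenvalues exist for $|\gamma|<\tilde\gamma_N$ is unproven. The paper closes it by setting $\theta$ real in the characteristic equation: the dispersion relation forces ${\rm Im}(\cos\alpha)=0$, hence $\alpha\in\R$ or $\alpha\in\pi k+i\R$, and then the characteristic equation admits only $\theta=0$ or $\theta=\pm\pi/2$; the case $\theta=0$ requires $\gamma^2=-2(1-\cos\alpha)\le 0$ and is discarded, leaving $\theta=\pm\pi/2$ and $\cos(\alpha N)=0$. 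You need this (or an equivalent exhaustive real-$\theta$ analysis) before the count in (\ref{gamma-N-k}) is complete. The second deferred point — the direction of the bifurcation — is handled in the paper by an explicit implicit-differentiation computation in the variables $\zeta=e^{2i\theta}$, $z=e^{i\alpha}$, $s=\gamma^2$, yielding $\frac{d\zeta}{ds}=N/\bigl(2\sin^2(\alpha_k/2)\bigr)>0$ at each $\gamma_{N,k}$, so ${\rm Im}(\theta)$ grows for $\gamma^2>\gamma_{N,k}^2$; your proposed local expansion is the right idea but must actually be carried out for general $N$, not inferred from the $N=1$ template.
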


\begin{proof}
We set
$$
x_k = w_{2k-1}, \quad y_k = w_{2k}, \quad 1 \leq k \leq N
$$
and rewrite the linear eigenvalue problem (\ref{dnls-linear-closed}) in the equivalent form:
\begin{equation}
\label{dnls-linear-closed-equiv}
\left\{ \begin{array}{l}
2 \cos \theta x_k = y_{k-1} + y_{k} - i \gamma x_k, \\
2 \cos \theta y_k = x_{k} + x_{k+1} + i \gamma y_k, \end{array} \right. \quad 1 \leq k \leq N,
\end{equation}
subject to the modified boundary conditions $y_0 = x_1 e^{i \theta}$ and $x_{N+1} = y_N e^{i \theta}$. Expressing
$y_k$ from the second equation of the system (\ref{dnls-linear-closed-equiv}) by
$$
y_k = \frac{x_k + x_{k+1}}{2 \cos \theta - i \gamma}, \quad 1 \leq k \leq N-1, \quad
y_N = \frac{x_N}{e^{-i \theta} - i \gamma},
$$
and substituting these expressions to the first equation of the system, we obtain a second-order
difference equation
\begin{equation}
\label{second-order}
(\gamma^2 + 4 \cos^2 \theta) x_k = x_{k-1} + 2 x_k + x_{k+1}, \quad 1 \leq k \leq N,
\end{equation}
where the boundary conditions are now
\begin{equation}
\label{second-order-bc}
x_0 = (e^{i \theta} - i \gamma) e^{i \theta} x_1, \quad
x_{N+1} = \frac{e^{i \theta} x_N}{e^{-i \theta} - i \gamma}.
\end{equation}
Note that equations (\ref{second-order}) are obtained by multiplying every term by
$2 \cos \theta - i \gamma$, which is hence supposed to be non-zero.

The second-order difference equation (\ref{second-order}) admits an exact solution
\begin{equation}
\label{decomposition}
x_k = C_+ e^{i \alpha (k-1)} + C_- e^{-i \alpha (k-1)},
\end{equation}
where $\alpha$ is defined from the transcendental equation
\begin{equation}
\label{alpha}
\gamma^2 + 4 \cos^2 \theta = 2 + 2 \cos \alpha,
\end{equation}
and $(C_+,C_-)$ are non-zero solutions of the linear system following
from the boundary conditions (\ref{second-order-bc}):
\begin{eqnarray*}
e^{-i\alpha} C_+ + e^{i \alpha} C_- & = & (e^{i \theta} - i \gamma) e^{i \theta} (C_+ + C_-), \\
(e^{-i\theta} - i \gamma ) ( C_+ e^{i \alpha N} + C_- e^{-i \alpha N}) & = &
e^{i \theta} ( C_+ e^{i \alpha (N-1)} + C_- e^{-i \alpha (N-1)}).
\end{eqnarray*}
Note that for fixed $\gamma$, the values of $\theta$ are obtained from the characteristic equation
for this linear homogeneous system, after the values of $\alpha$ are excluded from the
transcendental equation (\ref{alpha}). Also note that only the values of $\theta$ with
${\rm Im}(\theta) > 0$ determine isolated eigenvalues of the linear stationary
PT-dNLS equation
(\ref{dnls-linear-defect}) by means of the representation (\ref{theta}).

After some algebraic manipulations, the characteristic equation for the linear system takes
the form
\begin{eqnarray*}
\left( \cos(\alpha N) \sin \theta \sin \alpha
- i \sin(\alpha N) \cos \theta (1- \cos \alpha) \right) ( \gamma  + 2 i \cos \theta ) = 0.
\end{eqnarray*}
The equation $\gamma  + 2 i \cos \theta = 0$ gives an artificial root corresponding to the value
$\alpha = \pi$, because the second-order difference equation (\ref{second-order}) is obtained by
multiplying every term by $2 \cos \theta - i \gamma$. Therefore, we drop this nonzero factor from
the characteristic equation and reduce it to the transcendental equation
\begin{eqnarray}
e^{2 i \theta} = \frac{\sin (N+1) \alpha - \sin N \alpha}{\sin N \alpha - \sin (N-1) \alpha} =
\frac{\cos \left(N + \frac{1}{2}\right) \alpha}{\cos \left(N - \frac{1}{2}\right) \alpha}.
\label{characteristic-eq}
\end{eqnarray}

{\bf Case $N = 1$:} Equation (\ref{characteristic-eq}) yields $e^{2 i \theta} = 2 \cos \alpha - 1$.
When this constraint is used in equation (\ref{alpha}), we obtain $e^{-2 i \theta} = 1 - \gamma^2$.
Since $\gamma \in \R$, we obtain the existence of a simple eigenvalue with ${\rm Re}(\theta) > 0$
for $\gamma^2 > 2$ and the bifurcation occurs for $\gamma = \sqrt{2}$ and corresponds to
$\theta = \pm \frac{\pi}{2}$, when $\cos \theta = 0$.\\

{\bf Case $N \geq 2$:} In a general case, we first consider bifurcations of complex values of
$\theta$ from real values of $\theta$. Hence, we set $\theta \in \mathbb{R}$ and realize from
(\ref{alpha}) that ${\rm Im}(\cos \alpha) = 0$, which implies either $\alpha \in \R$ or
$\alpha \in \pi k + i \R$ for any $k \in \mathbb{Z}$. In both cases, the characteristic equation
(\ref{characteristic-eq}) with real $\theta$ 
implies that either $\theta = 0$ or $\theta = \pm \frac{\pi}{2}$.

If $\theta = 0$, then the characteristic equation (\ref{characteristic-eq}) reduces to
the equation $\sin(\alpha N) = 0$, whereas equation (\ref{alpha}) implies that
$\gamma^2 = -2(1 - \cos \alpha) \leq 0$, which is outside of the parameter range
we are interested in. (Recall here that no eigenvalues with ${\rm Im} \theta > 0$
exist in the self-adjoint case with $\gamma = 0$.)

On the other hand, if $\theta = \pm \frac{\pi}{2}$,
then the characteristic equation (\ref{characteristic-eq}) reduces to
the equation $\cos(\alpha N) = 0$ (recall here that the artificial root $\alpha = \pi$ is neglected)
or equivalently,
$$
\alpha = \alpha_k := \frac{\pi (2k-1)}{2N}, \quad 1 \leq k \leq N.
$$
From equation (\ref{alpha}), we obtain that the bifurcation occurs at
$$
\gamma^2 = \gamma_k^2 := 4 \cos^2\left(\frac{\alpha_k}{2}\right) = 4 \cos^2\left( \frac{\pi (2k-1)}{2N}\right),
$$
which corresponds to the values (\ref{gamma-N-k}).

Next, we show that the values with ${\rm Im}(\theta) > 0$ correspond
to the range $\gamma^2 > \gamma_k^2$. This implies that a
new isolated eigenvalue $E$ of the linear stationary PT-dNLS equation (\ref{dnls-linear-defect})
with ${\rm Im}(E) \neq 0$ bifurcates from the value $E = 0$ that corresponds to
$\theta = \pm \frac{\pi}{2}$ at $\gamma = \pm \gamma_k$ and persists for all
$|\gamma| > \gamma_k$. A symmetric complex-conjugate
eigenvalue $\bar{E}$ exists by the PT symmetry of the stationary PT-dNLS
equation (\ref{dnls-linear-defect}). 

To show the above claim, we use the parametrization $z := e^{i \alpha}$,
$\zeta := e^{2i \theta}$, and $s := \gamma^2$. The system of transcendental
equations (\ref{alpha}) and (\ref{characteristic-eq}) becomes the system of
algebraic equations:
\begin{eqnarray*}
z + \frac{1}{z} - \zeta - \frac{1}{\zeta} & = & s, \\
\zeta z (1 + z^{2N-1}) - z^{2N+1} & = & 1.
\end{eqnarray*}
We know that for $s = s_0 := \gamma_k^2$, there exists a solution
of this algebraic system for $\zeta = -1$ and $z^{2N} = -1$. Therefore,
we consider the continuation of this complex-valued root in real values of $s$.
By computing the derivative in $s$, we obtain
\begin{eqnarray*}
\left[ \begin{array}{cc} \zeta^{-2} - 1 & 1 - z^{-2} \\
z (1 + z^{2N-1}) & \zeta (1 + 2 N z^{2N-1}) - (2N+1) z^{2N} \end{array} \right]
\left[ \begin{array}{c} \frac{d\zeta}{ds} \\ \frac{d z}{ds} \end{array} \right] =
\left[ \begin{array}{c} 1 \\ 0 \end{array} \right].
\end{eqnarray*}
For $s = s_0$, we obtain from this linear system that
$$
\frac{d \zeta}{ds} \biggr|_{s = s_0} = -\frac{2N z}{(1-z)^2} \biggr|_{s = s_0} = \frac{N}{2 \sin^2\left(\frac{\alpha_k}{2}\right)}
$$
and since $\zeta = e^{2 i \theta}$, this proves that $\frac{d}{ds} {\rm Im}(\theta) |_{\theta = \pm \frac{\pi}{2}} > 0$.
By continuity of roots of the algebraic system above, ${\rm Im}(\theta)$ remains positive for $s > s_0$,
that is, for $\gamma^2 > \gamma_k^2$, near the bifurcation point. 

We have shown that roots with ${\rm Im}(\theta) > 0$ bifurcate from the 
points $\theta = \pm \frac{\pi}{2}$ and remain in the upper half-plane 
for all $\gamma^2 > \gamma_k^2$. These roots correspond to complex 
eigenvalues $E$ if ${\rm Re}(\theta) \neq 0$ (in which case $E > 2$) 
or if ${\rm Re}(\theta) \neq \pm \pi$ (in which case $E < 2$). Both situations 
can not be a priori excluded, however, we have here two facts:
\begin{itemize}
\item When ${\rm Re}(\theta) = 0$ or ${\rm Re}(\theta) = \pm \pi$, 
a pair of complex-conjugate eigenvalues $E$ coalesce at the real line 
into a double (semi-simple) eigenvalue, because of the PT-symmetry 
with ${\rm Im}(\theta) \neq 0$ generates two linearly independent eigenvectors 
for the same real eigenvalue. 

\item Roots $\theta$ are analytic with respect to parameter $\gamma$ 
by analytic dependence of roots of the algebraic system. 
\end{itemize}

Combining these two facts together, we realize that that the double (semi-simple) 
eigenvalues $E$ can not split along the real axis (as each real eigenvalue 
after splitting would then become a double eigenvalue by the PT-symmetry). 
And they can not persist on the real axis as a double eigenvalues because of 
analyticity of the parameter continuation of roots $\theta$ with respect to $\gamma$. 
Therefore, these double real roots split back to the complex domain. In addition, 
analyticity of the parameter continuations guarantees that there are finitely many 
points on any compact interval in $\gamma$, where the pairs of 
complex-conjugate eigenvalues can coalesce at the real axis.
The proof of the theorem is complete.
\end{proof}

We list some numerical values of the phase transition thresholds:
\begin{eqnarray*}
\tilde{\gamma}_1 & = & 2 \cos \frac{\pi}{4} = \sqrt{2}, \\
\tilde{\gamma}_2 & = & 2 \cos \frac{3 \pi}{8} \approx 0.765, \\
\tilde{\gamma}_3 & = & 2 \cos \frac{5 \pi}{12} \approx 0.518,
\end{eqnarray*}
with $\lim_{N \to \infty} \tilde{\gamma}_N = 0$. Note that
$\tilde{\gamma}_N > \gamma_N$ for any $N \in \N$.

\subsection{Stationary states: bifurcations from the anti-continuum limit}

We now consider the stationary states, which satisfy the nonlinear stationary
PT-dNLS equation
\begin{eqnarray}
E w_n = w_{n+1} + w_{n-1} + i \gamma (-1)^n \chi_{n \in S_N} w_n + |w_n|^2 w_n, \quad n \in \Z.
\label{eqn1}
\end{eqnarray}
We explore the large-amplitude limit similarly to Section 3.3. Hence,
we set $E = \frac{1}{\delta}$ and ${\bf w} = \frac{\bf W}{\sqrt{\delta}}$,
where $\delta$ is a small positive number. The stationary PT-dNLS equation (\ref{eqn1})
is rewritten in the equivalent form:
\begin{eqnarray}
(1-|W_n|^2) W_n= \delta \left( W_{n+1} + W_{n-1} + i \gamma
(-1)^n \chi_{n \in S_N} W_n \right), \quad n \in \Z.
\label{eqn2}
\end{eqnarray}
We consider the PT-symmetric solutions with ${\bf W} = P \bar{\bf W}$, where $P$
is given by $[P {\bf W}]_n = W_{2N+1-n}$. Note that the choice of $n_0$ in $P$
given by Corollary \ref{corollary-PT-symmetry} is adjusted to the center of the PT-symmetric defect.
Therefore, the existence of the PT-symmetric solutions can be considered in the framework of
the following two subsystems:
\begin{eqnarray}
(1-|W_n|^2) W_n = \delta \left( W_{n+1} + W_{n-1} + i \gamma (-1)^n W_n \right), \quad 1 \leq n \leq N,
\label{eqn2a}
\end{eqnarray}
and
\begin{eqnarray}
(1-|W_n|^2) W_n = \delta (W_{n+1} + W_{n-1}), \quad n \leq 0,
\label{eqn2b}
\end{eqnarray}
subject to the boundary condition $W_{N+1} = \bar{W}_N$. The following result
is similar to the result of Theorem \ref{theorem-nonlocal-bifurcation}.

\begin{theorem}
\label{theorem-soliton}
For any $\gamma \in (-1,1)$,
the nonlinear stationary PT-dNLS equation (\ref{eqn2}) in the limit of small positive $\delta$
admits $2^N$ PT-symmetric solutions ${\bf W} = P \bar{\bf W} \in l^2(\Z)$ (unique up to a gauge transformation) such that,
for sufficiently small $\delta$, the map $\delta \to {\bf W}$ is $C^{\infty}$ at each solution and
there is a positive $\delta$-independent constant $C$ such that
\begin{equation}
\label{bound-soliton}
\left| {\bf W} - {\bf W}_0 \right| \leq C \delta,
\end{equation}
where ${\bf W}_0$ is a solution of Theorem \ref{theorem-nonlocal-bifurcation} (after rescaling).
\end{theorem}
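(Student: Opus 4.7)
The strategy is to reduce to Theorem \ref{theorem-nonlocal-bifurcation} by decoupling the infinite lattice into a finite defect piece and two semi-infinite tails. By the PT-symmetry $W_n = \bar{W}_{2N+1-n}$, it suffices to solve (\ref{eqn2}) on $\{n \leq N\}$ subject to the boundary condition $W_{N+1} = \bar{W}_N$; the right half and the right tail are then recovered by reflection. This restricted system further splits into the tail subsystem (\ref{eqn2b}) on $\{n \leq 0\}$ and the defect subsystem (\ref{eqn2a}) on $\{1,2,\ldots,N\}$, coupled only through the shared values $W_0$ and $W_1$.

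I would handle the tail first. Treating $W_1 \in \C$ as a parameter, view the tail equations as a nonlinear map
\[
F^{\mathrm{tail}} : l^2(\{n \leq 0\},\C) \times \R \times \C \to l^2(\{n \leq 0\},\C),
\]
with $n$th component $F_n = (1 - |W_n|^2) W_n - \delta(W_{n+1} + W_{n-1})$. The cubic nonlinearity is smooth as a map $l^2 \to l^2$ via the embedding $l^2 \hookrightarrow l^\infty$, and the Fréchet derivative of $F^{\mathrm{tail}}$ in $\mathbf{W}$ at $(\mathbf{0},0,0)$ is the identity operator on $l^2$. The implicit function theorem then produces a unique smooth branch $(\delta,W_1) \mapsto \mathbf{W}^{\mathrm{tail}}(\delta,W_1) \in l^2$ with $\|\mathbf{W}^{\mathrm{tail}}\|_{l^2} \leq C\delta|W_1|$. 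Bootstrapping the recursion $W_{n-1} = \delta^{-1}(1-|W_n|^2)W_n - W_{n+1}$ (read from $n = 0$ downward) gives in fact the geometric decay $|W_{-n}| = \mathcal{O}(\delta^{n+1})$, and in particular $W_0 = \delta W_1 + \mathcal{O}(\delta^2)$.

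Substituting $W_0 = W_0^{\mathrm{tail}}(\delta,W_1)$ into the $n = 1$ equation of (\ref{eqn2a}) produces a smooth $\delta$-dependent perturbation of the finite algebraic system of Theorem \ref{theorem-nonlocal-bifurcation}: the perturbation enters only through the $\delta W_0$ term, which is now $\mathcal{O}(\delta^2)$ and vanishes at $\delta = 0$. Applying the amplitude--phase parametrization (\ref{parameter-1}) and separating real and imaginary parts, I would then run the same two-step implicit function theorem argument as in Theorem \ref{theorem-nonlocal-bifurcation}: the amplitude system is solvable because its Jacobian at $A_1 = \cdots = A_N = 1$, $\delta = 0$ is the identity, and the phase system is solvable because the Jacobian matrix displayed in that proof is invertible for every $\gamma \in (-1,1)$. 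Each of the $2^N$ root choices in (\ref{parameter-4}) thus continues uniquely and smoothly in $\delta$. Combining the defect bounds (\ref{bound-1})--(\ref{bound-2}) with the $\mathcal{O}(\delta)$ tail bound in $l^2$ then yields (\ref{bound-soliton}), where $\mathbf{W}_0$ is the leading-order profile of Theorem \ref{theorem-nonlocal-bifurcation} extended by zero outside $S_N$. The only subtle point is the functional-analytic setup for the tail in $l^2$, but once the identity-at-zero structure of the linearization and the smoothness of the cubic nonlinearity between these spaces are identified, the remainder is routine bookkeeping.
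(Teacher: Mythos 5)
Your proposal is correct and follows essentially the same route as the paper: an implicit function theorem argument for the tail subsystem (\ref{eqn2b}) in $l^2$ with $W_1$ as a parameter and the identity as the linearization at $\delta=0$, followed by substitution of the resulting $W_0=\mathcal{O}(\delta)$ into the $n=1$ equation of (\ref{eqn2a}) and a rerun of the two-step amplitude--phase implicit function theorem from Theorem \ref{theorem-nonlocal-bifurcation}. The extra details you supply (smoothness of the cubic nonlinearity via $l^2\hookrightarrow l^\infty$ and the refined estimate $W_0=\delta W_1+\mathcal{O}(\delta^2)$) are consistent with, and slightly sharper than, what the paper records.
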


\begin{proof}
We consider small solutions of the subsystem (\ref{eqn2b}) for a given $W_1 \in \C$ and small $\delta \in \R$.
Parameter $\gamma \in \R$ is fixed.
The nonlinear system represents a bounded $C^{\infty}$ map from
$({\bf W}_-,W_1,\delta) \in l^2(\mathbb{Z}_-) \times \C \times \R$ to $l^2(\mathbb{Z}_-)$,
where $\Z_-$ is the set of negative integers including zero. For $\delta = 0$ and arbitrary
$W_1 \in \C$, ${\bf W}_- = {\bf 0}$ is a root of the nonlinear map and
the Jacobian with respect to ${\bf W}_-$ at ${\bf W}_- = {\bf 0}$ is invertible.
By the Implicit Function Theorem, for all $W_1 \in \C$ and small $\delta \in \R$, there is a unique solution
of the nonlinear system (\ref{eqn2b}) such that the map $(W_1,\delta) \to {\bf W}_-$ is $C^{\infty}$
and there is a positive $\delta$-independent constant $C$ such that $\| {\bf W}_- \|_{l^2(\mathbb{Z}_-)} \leq C |\delta|$.

Substituting $W_0$ from the map constructed above
to the first equation of the subsystem (\ref{eqn2a}), we close the system
at $N$ nonlinear equations for $\{ W_n \}_{1 \leq n \leq N}$. The only difference from
the $N$ nonlinear equations considered in the proof of Theorem \ref{theorem-nonlocal-bifurcation}
is the boundary condition for given $W_0$, however, $W_0$ is small as $W_0 = \mathcal{O}(\delta)$.
The two applications of the Implicit Function Theorem developed 
in the proof of Theorem \ref{theorem-nonlocal-bifurcation}
apply directly to our case and yield the assertion of this theorem. The bound
(\ref{bound-soliton}) follows from bounds (\ref{bound-1}) and (\ref{bound-2}).
\end{proof}

\begin{remark}
A remark similar to Remark \ref{remark-other-branches} applies on the infinite
lattice as well. Besides localized states of Theorem \ref{theorem-soliton},
for any $N \geq 2$ and $1 \leq M \leq N$, there exist additional soliton states
such that $|W_n|^2 \approx 1$ as $\delta \to 0$ for $N-M+1 \leq n \leq N+M$
and $|W_n|^2 \approx 0$ as $\delta \to 0$ for $n \leq N-M$ and $n \geq N+M+1$.
These stationary states are supported at $2M$ sites near the central sites in 
the set $S_N$.
\label{remark-other-solitons}
\end{remark}

We give details of the perturbative expansions
for the two (most fundamental) discrete solitons supported by the dimer defect for $N = 1$.
The subsystems (\ref{eqn2a}) and (\ref{eqn2b}) are rewritten explicitly as follows:
\begin{eqnarray}
(1-|W_1|^2) W_1 &=& \delta \left( W_0 + \bar{W}_1 - i \gamma W_1 \right),
\label{eqn3} \\
(1-|W_n|^2) W_n &=& \delta \left( W_{n+1} + W_{n-1} \right), \quad \quad n \leq 0.
\label{eqn4}
\end{eqnarray}
The perturbation expansion
\begin{eqnarray}
\label{perturbation}
W_n= W_n^{(0)} + \delta W_n^{(1)} + \mathcal{O}(\delta^2)
\end{eqnarray}
allows us to compute at the leading order $W_n^{(0)} = e^{-i \theta} \delta_{n,1}$,
where $\theta \in [0,\pi]$ is arbitrary at this point. At the $\mathcal{O}(\delta)$ order,
we obtain the equations:
\begin{eqnarray}
W_n^{(1)} &=& W_{n+1}^{(0)} + W_{n-1}^{(0)}, \quad \quad n \leq 0
\label{eqn5} \\
-\left(e^{-2 i \theta} \bar{W}_1^{(1)} + W_1^{(1)} \right)
&=& e^{i \theta} - i \gamma e^{-i \theta}, \quad \quad n =1
\label{eqn6}
\end{eqnarray}
Set $W_1^{(1)} := Z e^{-i \theta}$ and rewrite (\ref{eqn6}) as
$-(\bar{Z} + Z) = e^{2 i \theta} - i \gamma$. The solvability condition is
$\sin(2 \theta)= \gamma$ and it gives exactly two values for $\theta \in [0,\pi]$
for any $\gamma \in (-1,1)$. Then, the first-order correction term is found explicitly as
follows:
\begin{eqnarray*}
W_n^{(1)} = -\frac{1}{2} e^{-i \theta} \cos(2 \theta) \delta_{n,1} + e^{-i \theta} \delta_{n,2}.
\end{eqnarray*}
The perturbation expansion (\ref{perturbation}) can be continued to higher orders of $\delta$ 
thanks to $C^{\infty}$ smoothness in Theorem \ref{theorem-soliton}.
Hence, we obtain two branches of soliton states supported by the dimer defect.

\begin{remark}
The phase transition threshold for $S_1$ is $\tilde{\gamma}_1 = \sqrt{2}$,
whereas the soliton states of Theorem \ref{theorem-soliton} are only constructed for $\gamma \in (-1,1)$
in the limit $E \to \infty$. Numerical studies (see Figure \ref{fig5})
show that the solution states exist for $\gamma \in (-1,1)$
for any fixed value of $E$. \label{remark-observation}
\end{remark}

\subsection{Numerical results}

We now test these analytical results for the PT-symmetric chain with embedded defects.

First, we consider the linear limit of
such chains and examine the corresponding PT-symmetry phase
transitions expected to occur at $\tilde{\gamma}_N$. In particular,
in Fig.~\ref{lin_fig1}, we consider
the case of $N=1$, i.e., a single embedded dimer which has been
predicted to have a PT-phase transition at $\tilde{\gamma}_1=\sqrt{2}$,
for the infinite lattice. In the figure, we can clearly discern
the relevant transition (the corresponding vertical dashed
line shows the theoretical prediction of $\tilde{\gamma}_1$).
Nevertheless, for the finite lattice considered (here $800$ sites are used),
an additional bifurcation is observed at $\gamma=1$ (see also
the works of~\cite{Ambroise,Pelin2}). This bifurcation
is suppressed at the infinite lattice limit, but for a finite chain a
``bubble'' of complex eigenvalues arises around $E=0$ (which is the middle
point of the spectral band). This bubble keeps expanding and slowly increasing
in imaginary part between $\gamma=1$ and $\gamma=\sqrt{2}$ (notice that
this growth is barely visible in the linear scale of the figure but
it is noticeable in the logarithmic scale of the inset). At
the latter critical point, the rapid growth of the
isolated unstable eigenvalue pair becomes dominant for
the instability of the lattice with an embedded dimer.

\begin{figure}[tbp]
\includegraphics[width=78mm,keepaspectratio]{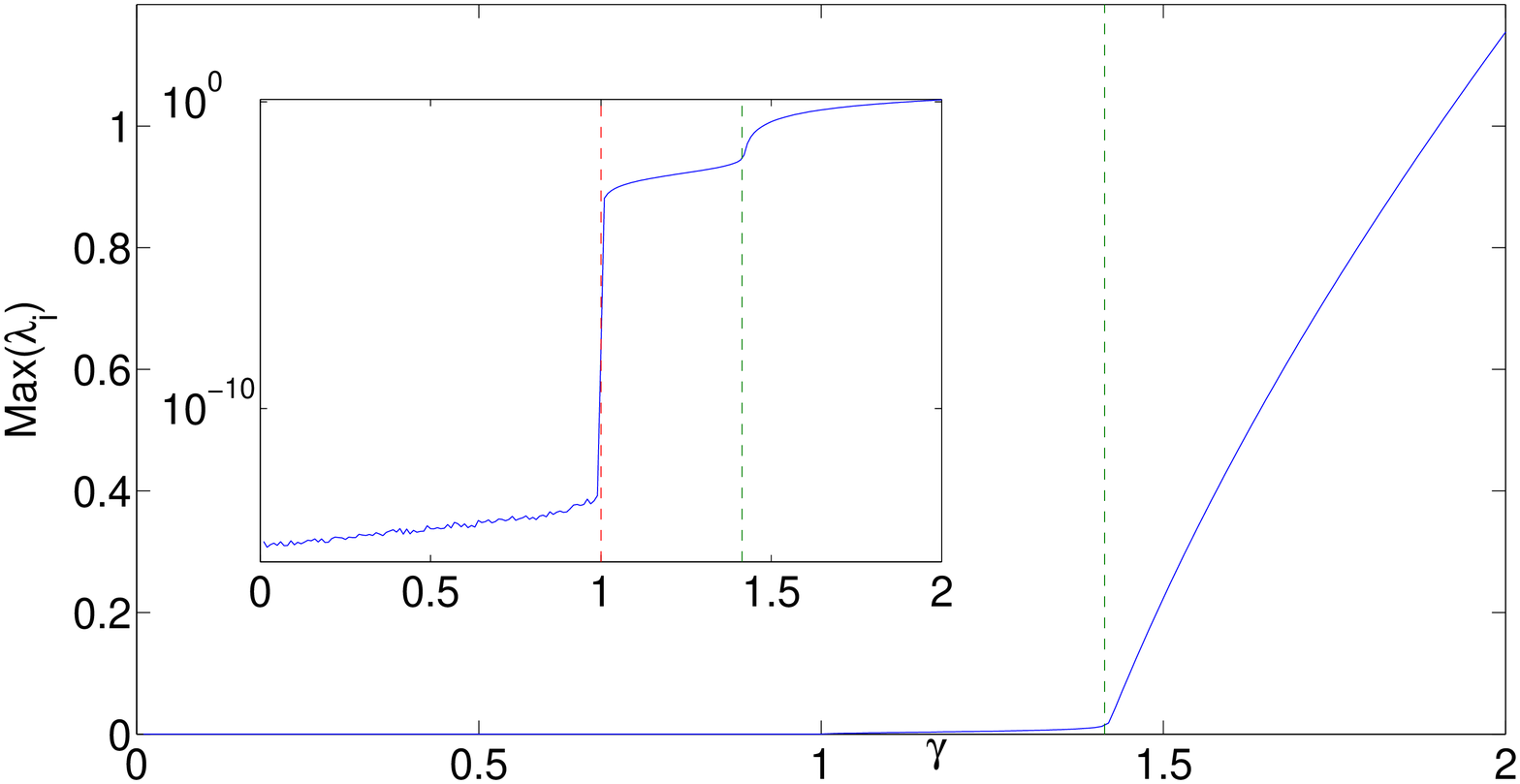}
\includegraphics[width=78mm,keepaspectratio]{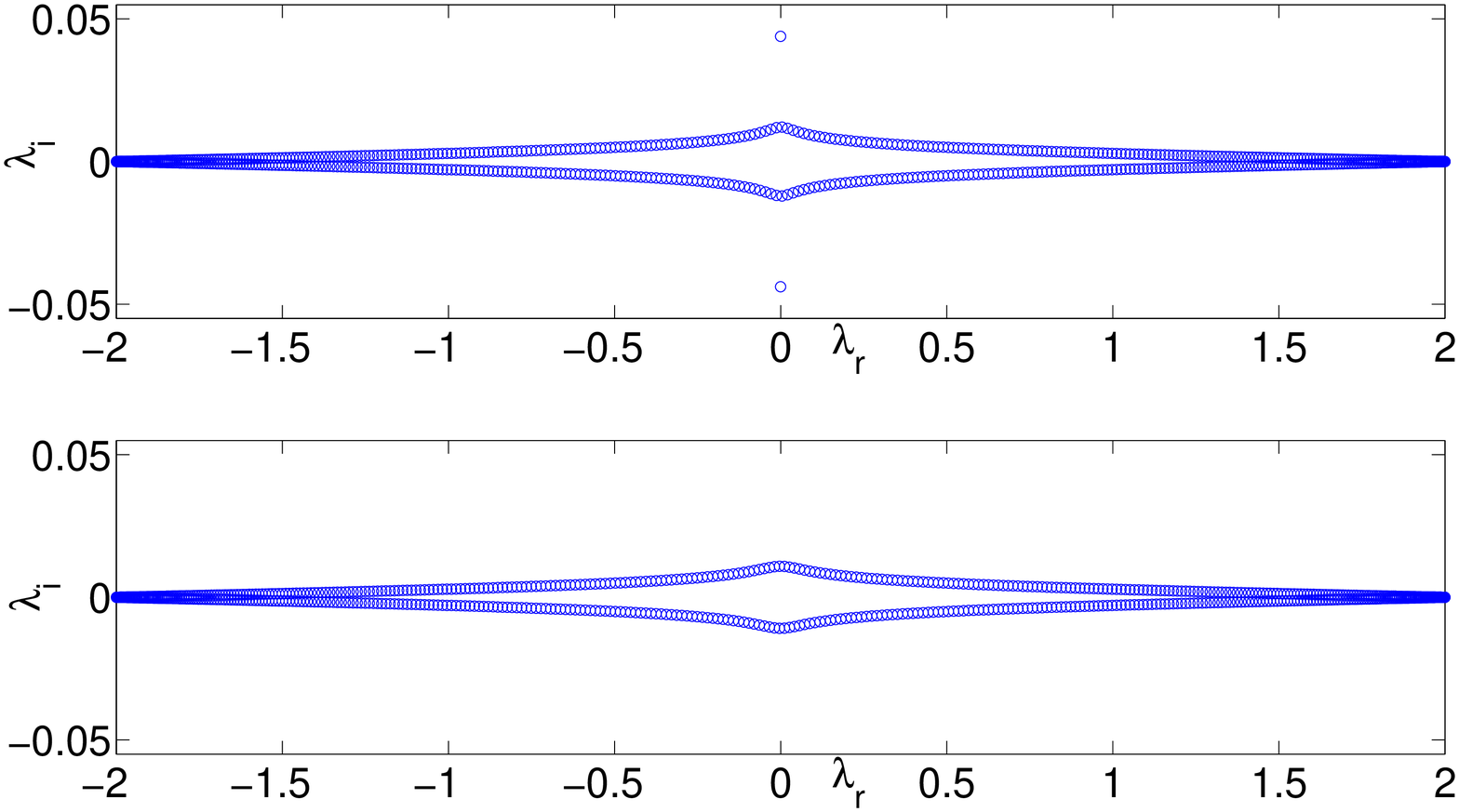}
\caption{The left panel shows the dependence of the maximal imaginary
part $\lambda_i$ of the eigenvalues $\lambda=\lambda_r + i \lambda_i$
versus $\gamma$ for $N = 1$. While this is shown in a linear
scale with the green dashed line denoting the
bifurcation point $\tilde{\gamma}_1=\sqrt{2}$, the inset
in a semi-logarithmic scale demonstrates the finite-size instability
emerging at $\gamma=1$, which is denoted by the red dashed line.
The right panel shows the spectrum of the linear PT-dNLS equation
(\ref{dnls-linear-defect}) for $\gamma=1.43$ (top)
and $\gamma=1.4$ (bottom).}
\label{lin_fig1}
\end{figure}

Similar conclusions can be drawn for the case of embedded quadrimer
($N=2$) and hexamer ($N=3$) from Fig.~\ref{lin_fig2}.
The figures reveal, however, that in this case in addition to the
actual (infinite chain) critical points of $\tilde{\gamma}_2 \approx 0.765$
and $\tilde{\gamma}_3 \approx 0.518$, respectively, there are multiple
additional points of weak instability emergence due to finite
size effects. Such features are noticeable due to bifurcations
of instability bubbles at the edges of the spectral band
(at $E=2$ and $E=-2$), at $\gamma \approx 0.46$ for $N=2$
and at $\gamma \approx 0.30$ for $N=3$. Additional bubbles
emerge near the middle point of the spectral band (at $E = 0$),
at $\gamma \approx 0.61$ for $N=2$ and at $\gamma \approx 0.45$
for $N=3$. 

\begin{figure}[tbp]
\includegraphics[width=78mm,keepaspectratio]{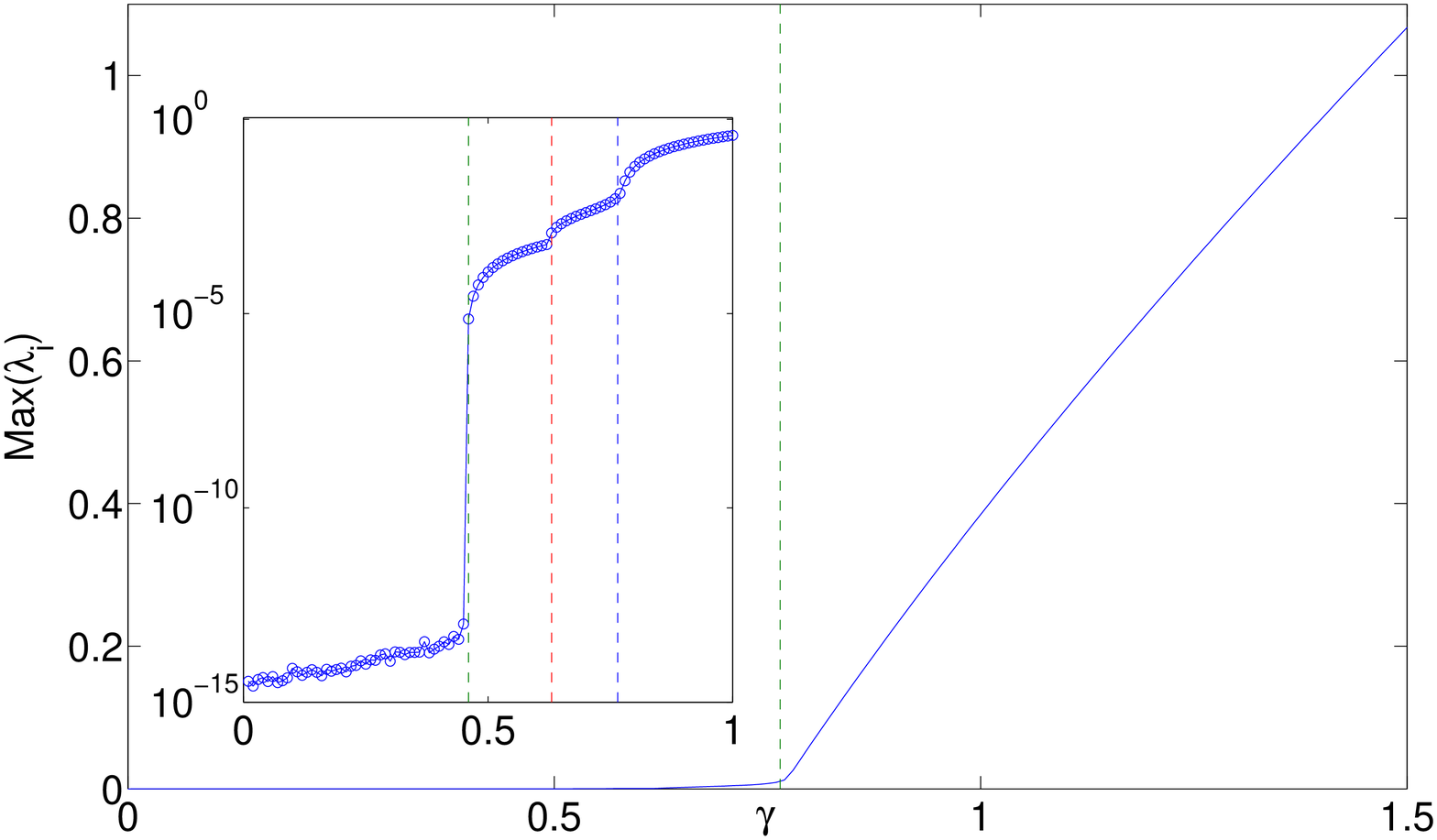}
\includegraphics[width=78mm,keepaspectratio]{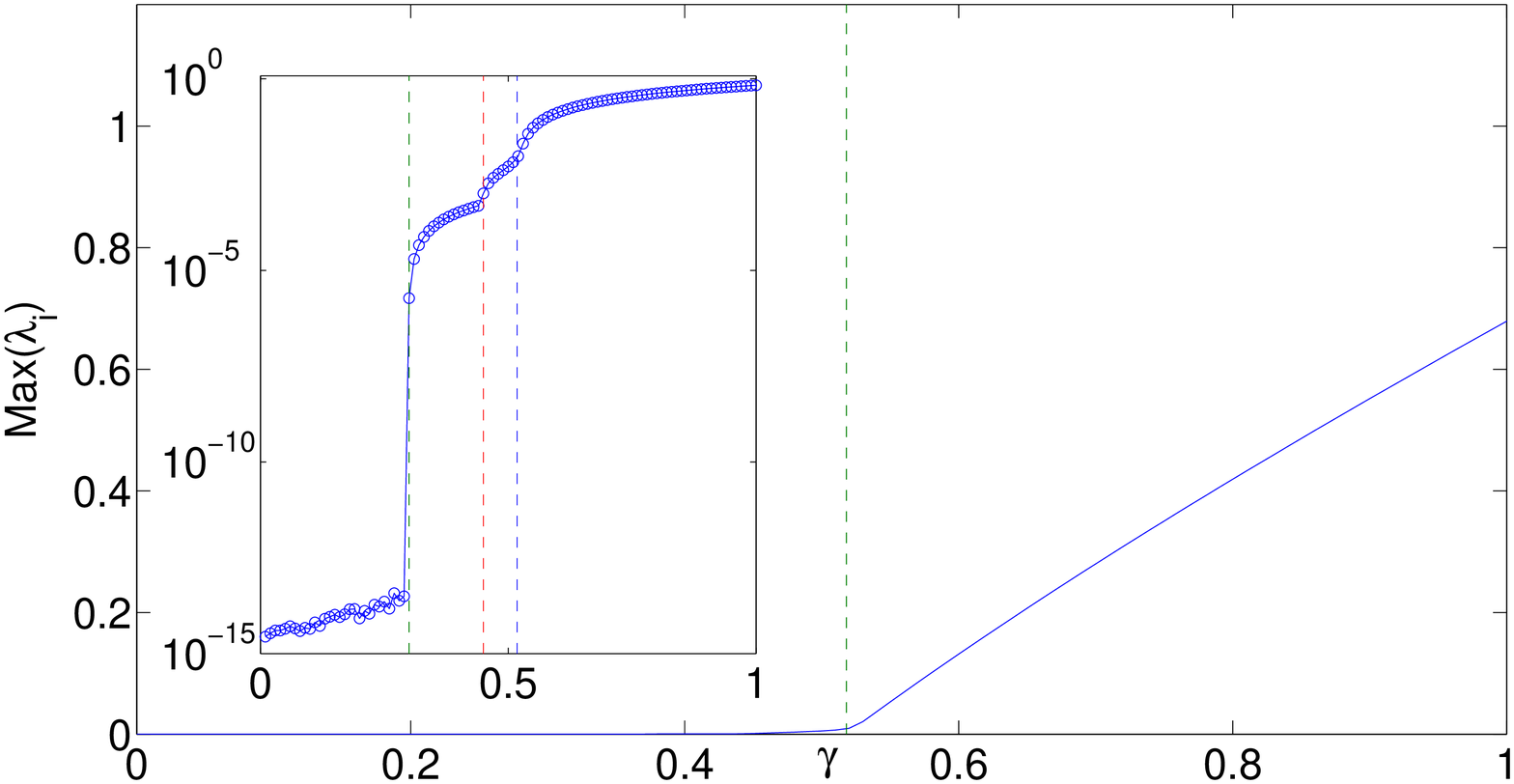}
\caption{The case of $N=2$ (left panel) and $N=3$ (right panel) for a finite lattice
of $800$ sites. The panels are similar to the left panel of Fig.~\ref{lin_fig1},
but feature two weak instabilities due to finite size effects (green
and red dashed line), in addition to the strong instability at $\tilde{\gamma}_N$ 
(blue dashed line).}
\label{lin_fig2}
\end{figure}

Next, we turn to  the existence of nonlinear stationary solutions in
the PT-symmetric dimer ($N=1$) embedded in the nonlinear chain, as shown in
Fig.~\ref{fig5}. We have identified two branches of
localized states starting from the
Hamiltonian limit where $2 \theta=0$ and $2 \theta=\pi$ (i.e.,
in-phase and anti-phase discrete solitons, respectively). Indeed,
the left panel shows the phase difference between $w_1$ and $w_2$
for the PT-symmetric embedded dimer.
These results clearly illustrate that at the Hamiltonian limit
of $\gamma=0$, the system starts from the two well-known in phase and
out-of-phase solutions~\cite{pkf}. The former (lower norm one)
is unstable for our focusing nonlinearity, while the latter one is spectrally
stable. Interestingly, in accordance with what is known for an
isolated dimer~\cite{Li,Ramezani,Sukh}, these two solutions
disappear in a saddle-center bifurcation at $\gamma=1$.
In fact, both this increased proximity and the eventual collision
and disappearance are captured very accurately by the
solvability condition $\sin(2 \theta)=\gamma$.
The resulting angle from the numerical computation and from the essentially
coincident analytical prediction are shown in the left panel
of Fig.~\ref{fig5}. Note that while the results are shown in Fig.~\ref{fig5} for $\delta = 0.01$,
they remain similar not only for smaller values of $\delta$ (such as e.g. $0.001$),
but even for larger values up to $\delta=1$ that were considered.
In particular,  the point of the saddle-center bifurcation $\gamma = 1$ has been found to be identical
for other values of $\delta$. This confirms the observation
made in Remark \ref{remark-observation}.

In the Hamiltonian case of $\gamma=0$, our linear
stability results for these branches fall back on the analysis
of~\cite{pkf}.  In fact, we retrieve the exact same condition,
for the leading order eigenvalue correction, namely
$\lambda^2/\delta=4 \cos(2 \theta)$.
This expression reveals the instability of the in-phase mode
and the stability (for our focusing nonlinearity) of the out-of-phase
one. In the presence of gain/loss (i.e., for finite $\gamma$), the fundamental
difference lies in the existence condition which mandates that
$\sin(2 \theta)=\gamma$ and therefore the eigenvalues of the
in-phase unstable state approach the origin (from the real axis),
as do the ones of the out-of-phase marginally stable state (from the
side of the imaginary axis) as $\gamma$ is increased towards $1$. 
These eigenvalue pairs end up colliding
at the origin at the point of the PT-phase transition at $\gamma=1$.
The trajectories of the two sets of eigenvalues are shown in the right
panel of Fig.~\ref{fig5}. It can be seen that similarly to the Hamiltonian
case of~\cite{pkf}, the agreement is better for the unstable branch of
real eigenvalues. Nevertheless, for both branches the comparison of
computation and analysis is highly favorable.

\begin{figure}[tbp]
\includegraphics[width=78mm,keepaspectratio]{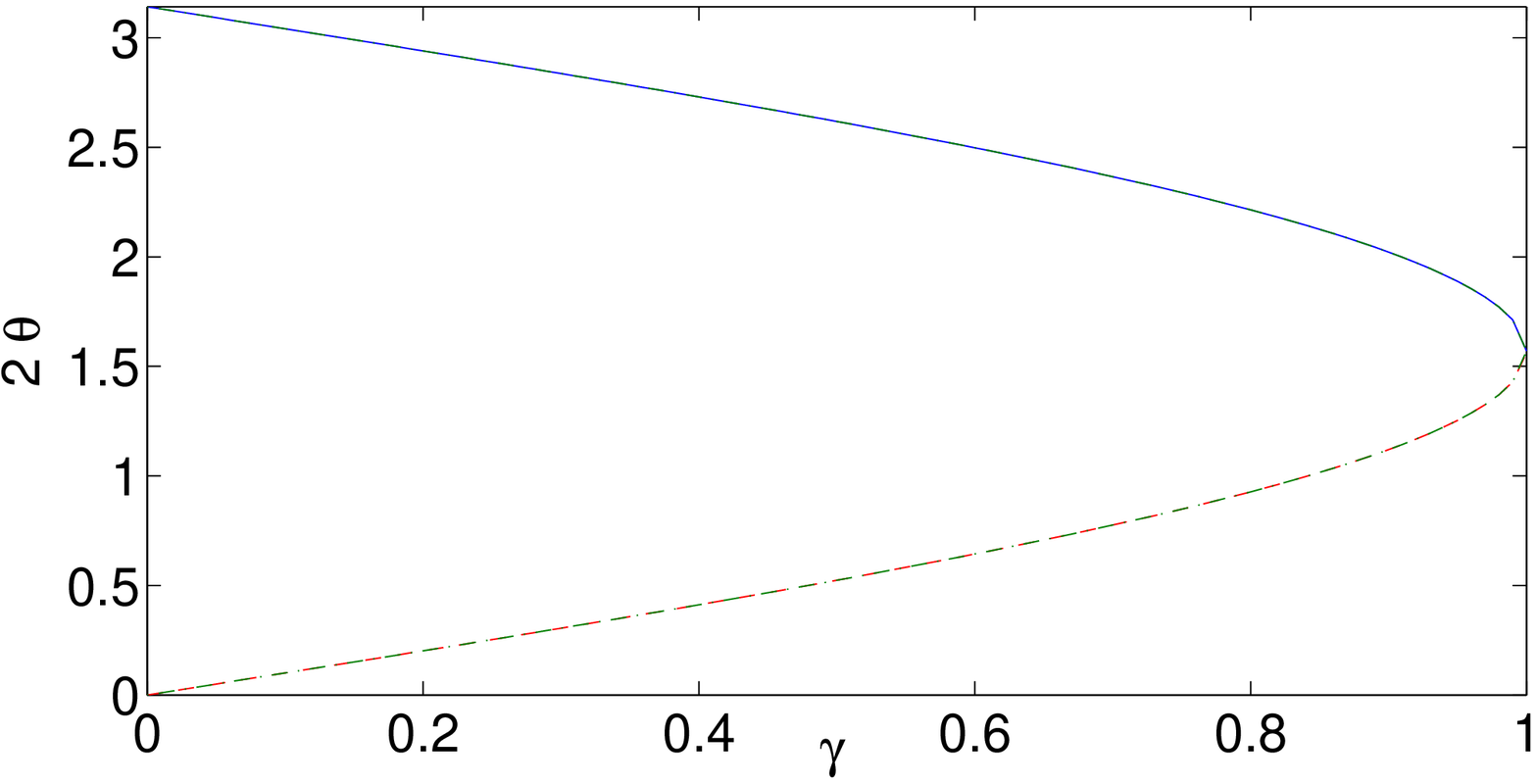}
\includegraphics[width=78mm,keepaspectratio]{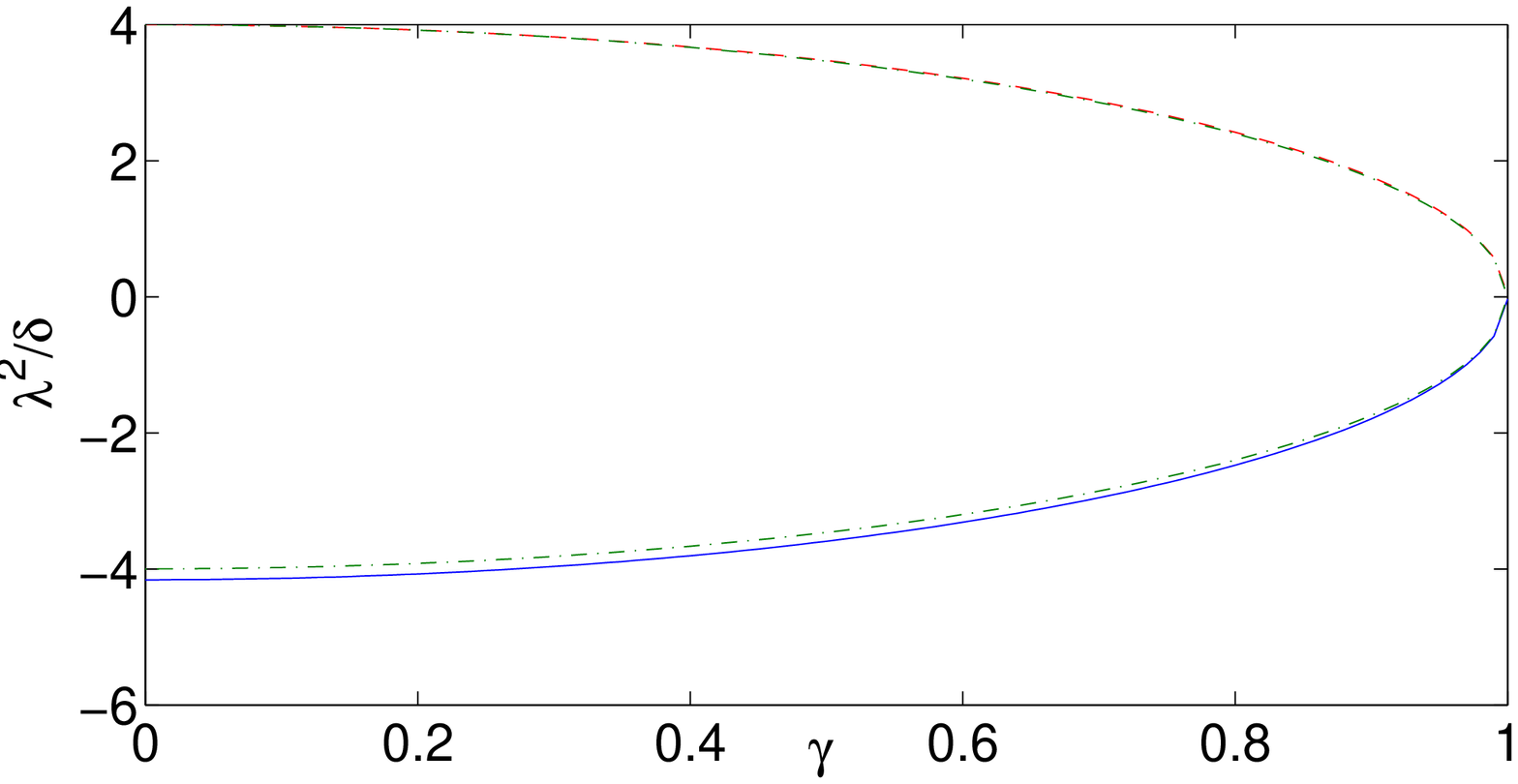}
\caption{The left panel shows the relative
phase $2 \theta$ between the two central sites obtained by the
numerical computation and the solvability condition
$\sin(2 \theta)=\gamma$ (a green dash-dotted line).
The right panel shows the squared eigenvalue of the linearized PT-dNLS
equation at the discrete soliton versus $\gamma$ for each of the two branches.
The stable (blue solid) branch of the anti-phase solution (at $\gamma=0$) has
a negative $\lambda^2$, while the unstable (red dashed) branch of
the in-phase solution (at $\gamma=0$) has a positive $\lambda^2$.
The corresponding theoretical predictions are shown by green dash-dotted
lines in very good agreement with the numerical results. }
\label{fig5}
\end{figure}

For the case of $N=1$, we now turn
to numerical simulations, in order to briefly discuss
the difference between the manifestation of the instability of the in-phase 
localized states for different values of $\gamma$. 
Two prototypical examples are shown in Fig.~\ref{fig3}.
Both panels illustrate the evolution of the two most central, maximum
amplitude sites of the solution. In the Hamiltonian case of $\gamma=0$
shown in the left panel, it can be seen that small perturbations give rise
to an amplified symmetry breaking in the dynamics. Nevertheless, the
conservative nature of the ensuing dynamics ascertains the rapid
saturation of this symmetry breaking and the eventual oscillations
that arise lead to a (nearly)
periodic alternation between a symmetric and a symmetry broken state.
On the other hand, for $\gamma = 0.5$ we observe a drastically
different manifestation of this instability shown in the right panel. More specifically, the site
associated with gain grows indefinitely in a nearly exponential form,
as illustrated in the inset. At the same time, the site associated with 
damping decreases in amplitude in a similar fashion. 

\begin{figure}[tbp]
\includegraphics[width=78mm,keepaspectratio]{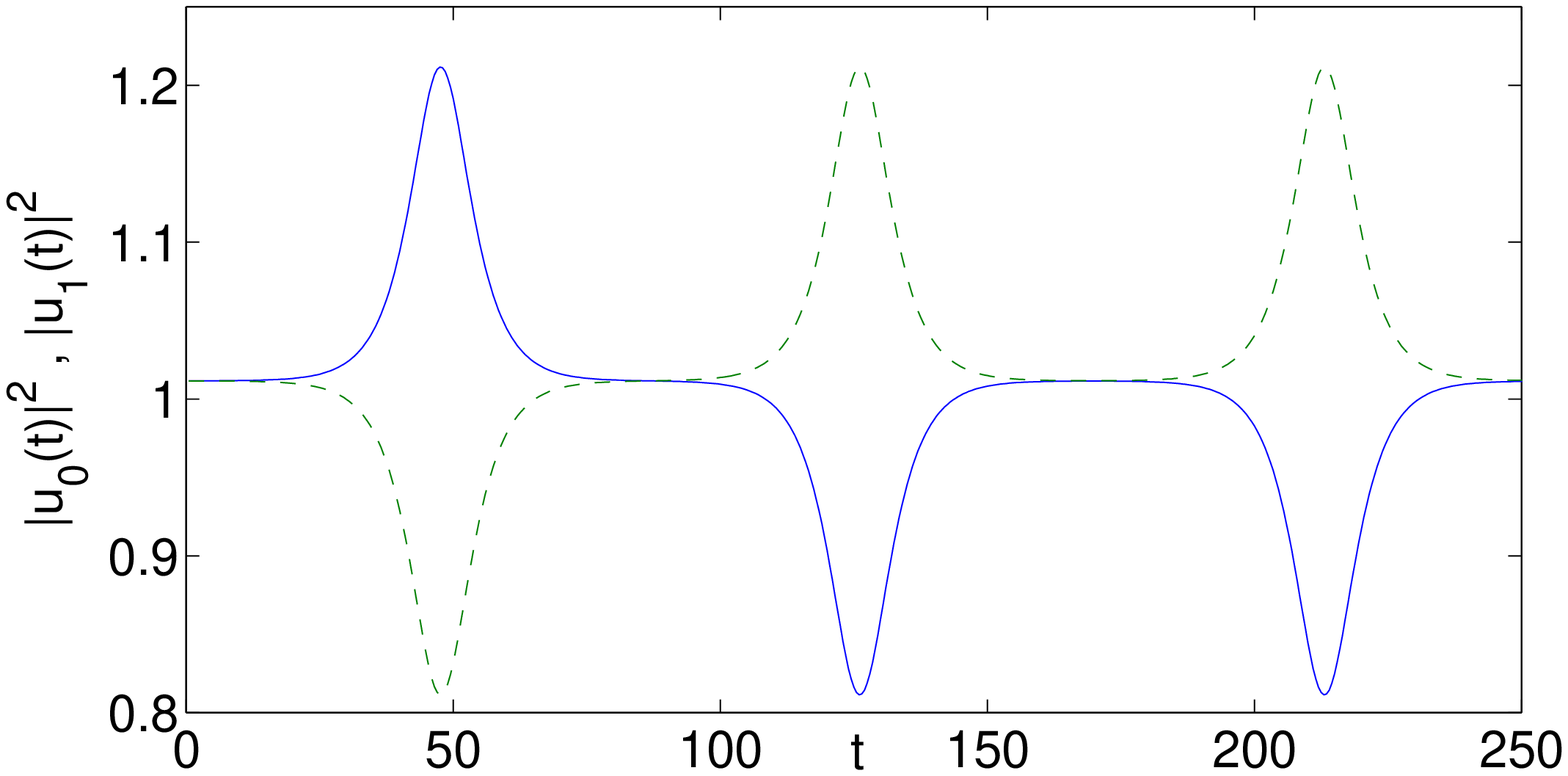}
\includegraphics[width=78mm,keepaspectratio]{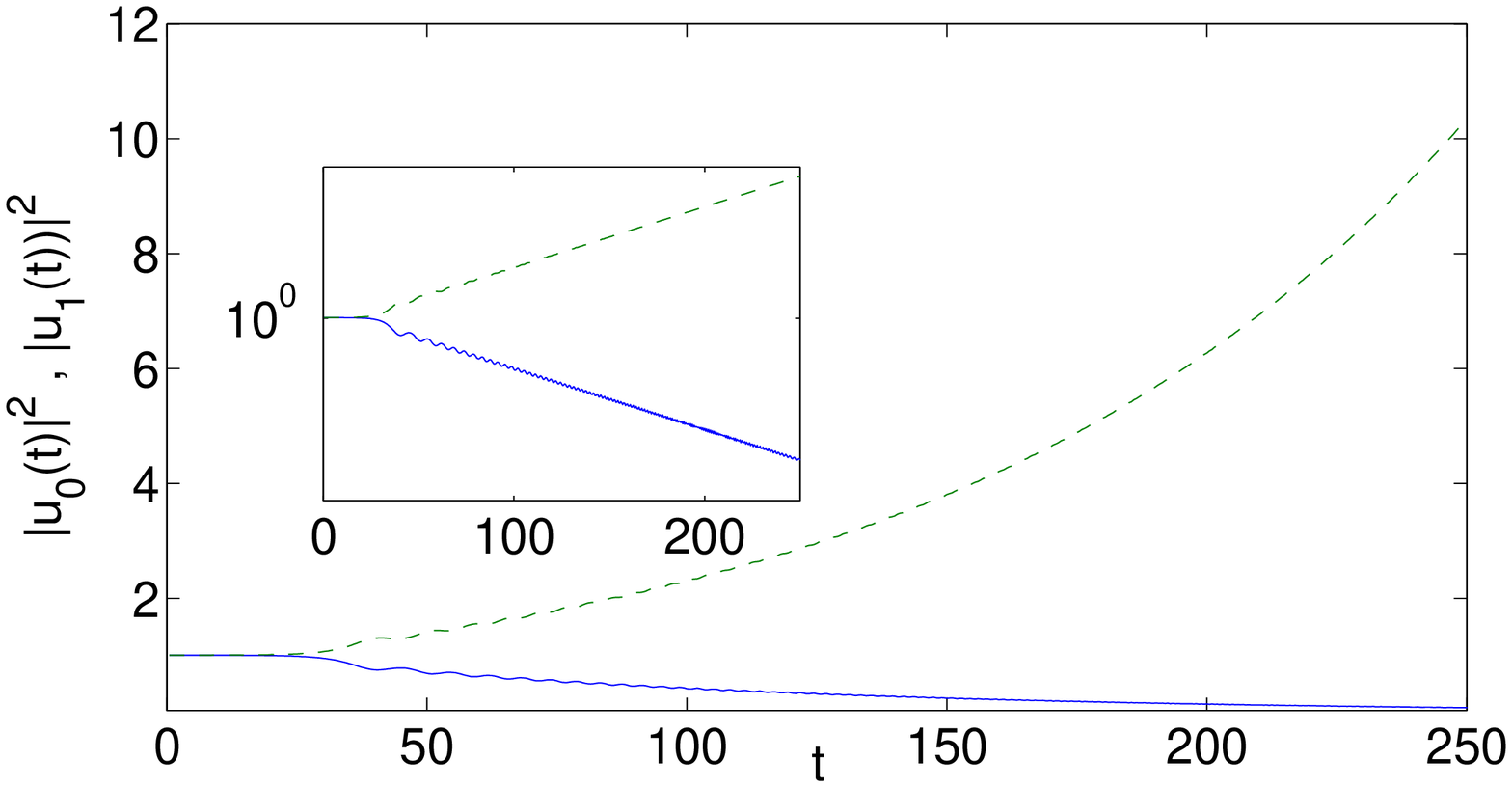}
\caption{The left panel shows the evolution of the unstable in-phase
solution for $\gamma=0$. The amplitude of the central-most two nodes
is shown as a function of time.
The symmetry-breaking results from the amplification
of a noise in the initial data and  manifests the instability via a
(nearly) periodic recurrence between symmetry-restored and symmetry-broken
phases. The right panel shows the same instability for $\gamma=0.5$.
The instability leads to an amplification
of the gained site amplitude and a decay of the damped site
amplitude.}
\label{fig3}
\end{figure}

\begin{figure}[tbp]
\includegraphics[width=78mm,keepaspectratio]{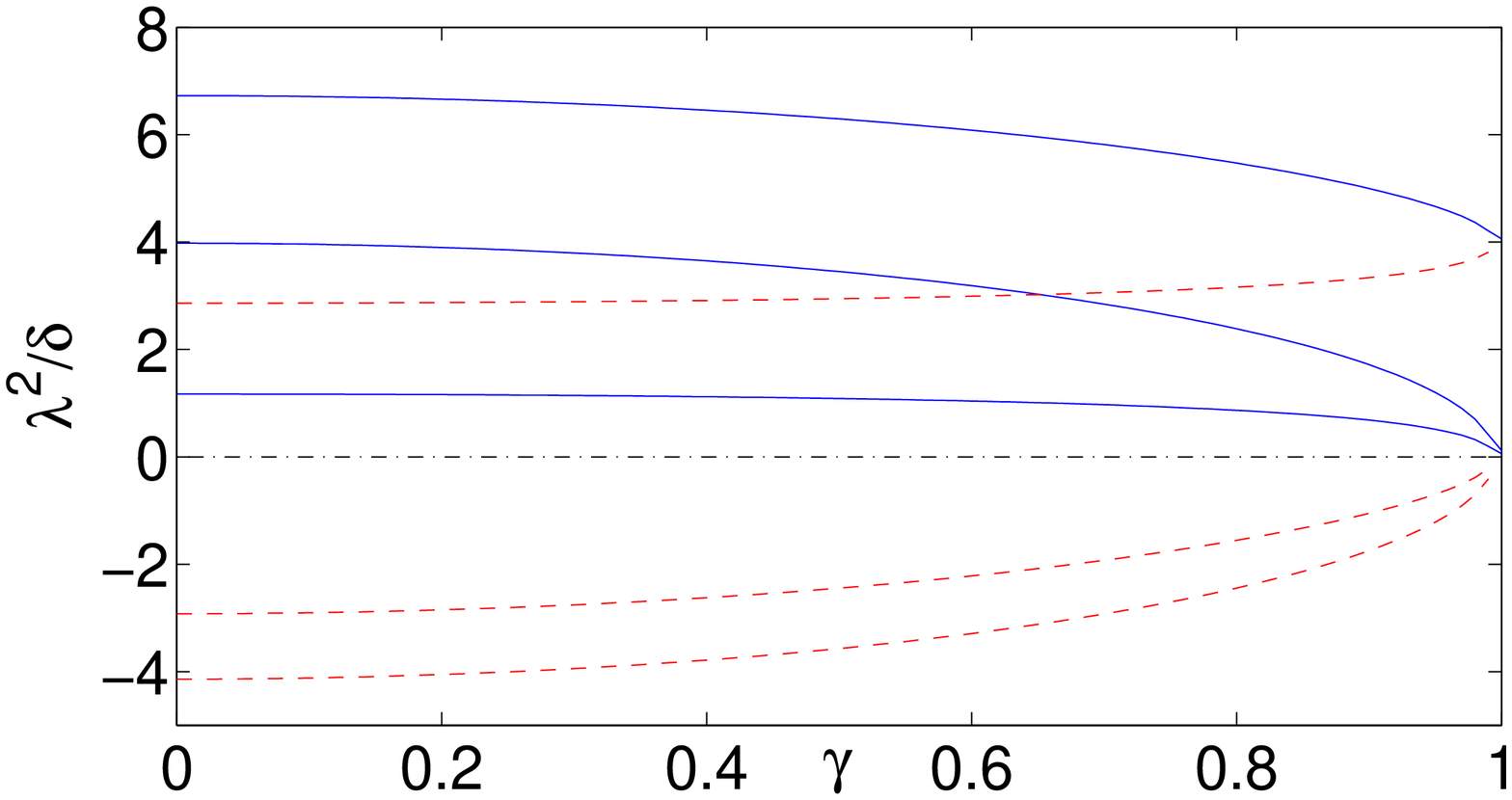}
\includegraphics[width=78mm,keepaspectratio]{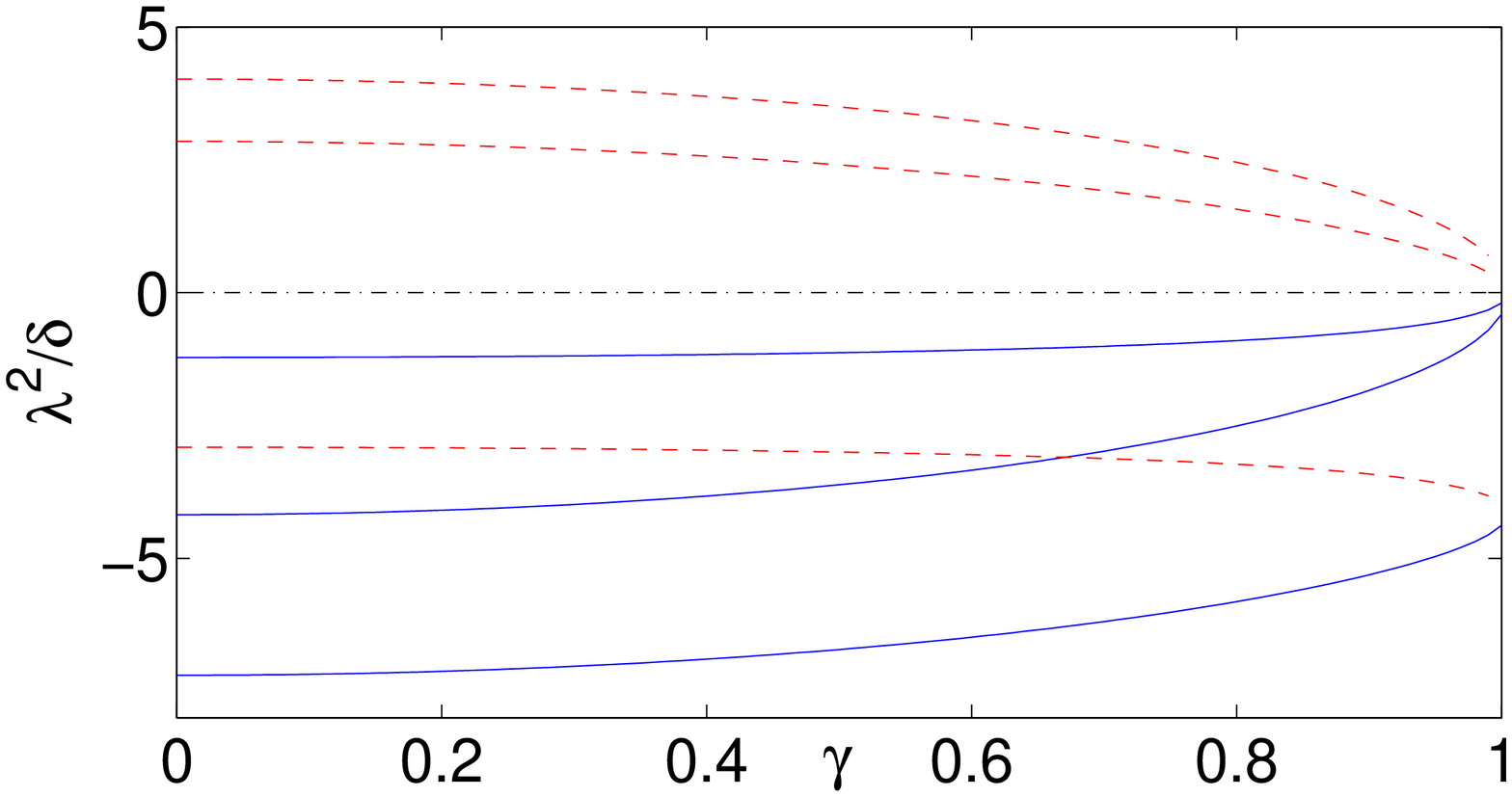}
\caption{Stability analysis results for the case of $N=2$. The left
panel depicts the $++++$ solution (which has three real eigenvalue
pairs) by solid (blue) lines, as well as the $+--+$ solution (which has two imaginary
and one real pair) by dashed (red) lines.
The right panel depicts the $+-+-$ solution
(which has three imaginary eigenvalue pairs) by solid (blue) lines, as well
as the $++--$ solution (which has two real pairs and one imaginary)
by dashed (red) lines. Each pair of these configurations collides
and disappears at $\gamma=1$.}
\label{fig4}
\end{figure}

Lastly, we address the bifurcation and stability results for the case
of $N=2$ i.e., for an embedded quadrimer. In this case, the eigenvalues
of the linearization problem are shown in Fig.~\ref{fig4}. We have examined four principal
configurations with all four sites excited (there exist also
configurations with two sites excited as for $N=1$, in accordance
with Remark \ref{remark-other-solitons}). In agreement with Theorem \ref{theorem-nonlocal-bifurcation},
these four configurations are coded as $++++$ (when all sites excited in
phase), $+-+-$ (when all sites are out of phase with their
immediate neighbors),
$++--$, and $+--+$. As can be seen in the figure and as is known
from the Hamiltonian limit of $\gamma=0$, the only spectrally stable
among these configurations is the out of phase configuration $+-+-$  with three imaginary
eigenvalue pairs, while the in-phase configuration $++++$ is the most unstable
with three real pairs. Upon continuation over $\gamma$, for both of
these configurations, two of the eigenvalue pairs move towards zero (which
they reach as $\gamma \rightarrow 1$), while one remains real for the in-phase, and imaginary for the out
of phase. Remarkably, pairwise these configurations collide and disappear in the limit of $\gamma \rightarrow 1$.
More specifically, the in-phase configuration $++++$ collides with the configuration
$+--+$. Similarly the out-of-phase configuration $+-+-$
collides with the configuration $++--$.
It should also be noted that we weren't able to continue any
asymmetric mixed phase configurations (such as e.g. $+++-$,
$+---$, $-+--$, $--+-$ or their opposite parity variants)
past the Hamiltonian limit of $\gamma=0$.

\section{Conclusion}

In this paper, we examined two distinct scenaria for PT-symmetric
dynamical lattices. In the first one, $N$-site PT-symmetric chains
were considered as a finite-dimensional dynamical system.
In the second one, we have considered the embedding of 
the finite PT-symmetric system as a defect in
an infinite dNLS lattice. In both cases, we have examined 
the linear problem, explicitly computing the corresponding eigenvalues
and identifying the strengths of $\gamma$ beyond which instabilities
(and the phase transition breaking the PT-symmetry) 
arise due to real eigenvalues. We have also considered
the nonlinear states when they emerge from the linear
limit, as well as when they arise from a  highly nonlinear
limit under suitable rescaling
(analogous to the anti-continuum limit of the standard dNLS
lattice). In that case, we argued about the disparity of the branch
counts in these two limits (for general $N$),
which suggests the existence of a number of bifurcations, such as saddle-center ones, at intermediate
values of the corresponding parameter. 

In the case of the infinite PT-symmetric PT-dNLS equation,
\begin{eqnarray}
i \frac{d u_n}{dt} = u_{n+1} - 2 u_n + u_{n-1}
+ i \gamma (-1)^n u_n + |u_n|^2 u_n, \quad n \in \Z,
\label{dnls-infinite}
\end{eqnarray}
we note that the phase transition threshold is now set at $\gamma_{N \to \infty} = 0$.
Therefore, for any $\gamma \neq 0$, the linear PT-dNLS equation is unstable
with a complex-valued continuous spectrum. Nevertheless, we can still obtain existence of
stationary localized states (discrete solitons) in the large-amplitude limit
for $\gamma \in (-1,1)$ for any of the configurations described
in Theorems \ref{theorem-nonlocal-bifurcation} and \ref{theorem-soliton}. Moreover,
the discrete soliton can be centered at any site $n_0 \in \Z$ because of the
shift invariance of the PT-dNLS equation (\ref{dnls-infinite}).

There are other numerous directions in which one can envision generalizations
of the present study. In the present work, we considered the case
where there is a single parameter $\gamma$, for each of
the $N$ pairs of sites with gain and loss. However, it is also 
relevant to generalize such considerations to
the case of many independent parameters for such sites~\cite{ZK}. On the other hand,
one can consider generalizations of the present setting that aim towards
the case of higher dimensionality. Arguably, the simplest such generalization
concerns the setting of two one-dimensional coupled (across each of their
sites) chains in the form of a railroad track as in~\cite{suchkov} and
the consideration of multi-site excitations therein. However, the
genuinely higher dimensional problem and the examination of
generalizations of plaquette configurations~\cite{Guenter}, whereby
the potential of vortices exists in the Hamiltonian limit is
of particular interest in its own right. These themes will await further consideration.

\end{document}